\documentclass[%
reprint,
amsmath,amssymb,
pra,superscriptaddress,
onecolumn
]{revtex4-2}
\usepackage{amsthm}
\usepackage[utf8]{inputenc}
\usepackage{graphicx}
\usepackage{dcolumn}
\usepackage{bm}
\usepackage{mathtools}
\usepackage{xcolor}

\newtheorem{lemma}{Lemma}
\newtheorem{theorem}{Theorem}
\DeclareMathOperator{\Tr}{Tr}
\DeclareMathOperator{\supp}{supp}
\DeclarePairedDelimiterX{\abs}[1]{\vert}{\vert}{#1}
\DeclarePairedDelimiterX{\norm}[1]{\lVert}{\rVert}{#1}
\DeclarePairedDelimiterX{\expval}[1]{\langle}{\rangle}{#1}
\DeclarePairedDelimiterX{\ket}[1]{\vert}{\rangle}{#1}
\DeclarePairedDelimiterX{\bra}[1]{\langle}{\vert}{#1}
\DeclarePairedDelimiterX{\innerproduct}[2]{\langle}{\rangle}{#1\delimsize\vert\mathopen{}#2}
\DeclarePairedDelimiterX{\outerproduct}[2]{\vert}{\vert}{#1\delimsize\rangle\!\delimsize\langle\mathopen{}#2}
\DeclarePairedDelimiterX{\mel}[3]{\langle}{\rangle}%
{#1\delimsize\vert\mathopen{}#2\delimsize\vert\mathopen{}#3}

\makeatletter
\renewenvironment{cases}[1][l]{\matrix@check\cases\env@cases{#1}}{\endarray\right.}
\def\env@cases#1{%
	\let\@ifnextchar\new@ifnextchar
	\left\lbrace\def\arraystretch{1.2}%
	\array{@{}#1@{\quad}l@{}}}
\makeatother

\begin{document}

\title{Perturbation Theory for Quantum Information}

\author{Michael R Grace}
\affiliation{College of Optical Sciences, University of Arizona, Tucson, AZ 85721, USA}
\author{Saikat Guha}
\affiliation{College of Optical Sciences, University of Arizona, Tucson, AZ 85721, USA}

\begin{abstract}
	We report lowest-order series expansions for functions of quantum states based on a perturbation theory for primary matrix functions of linear operators. We show that this Taylor-like representation enables efficient computation of functions of perturbed quantum states that require knowledge only of the eigenspectrum of the unperturbed state and the density matrix elements of a zero-trace, Hermitian perturbation operator, but not requiring analysis of the full perturbed state. We develop this theory for two classes of quantum state perturbations: perturbations that preserve the vector support of the original state and perturbations that extend the support beyond the support of the original state. We highlight relevant features of the two, in particular the fact that functions and measures of perturbed quantum states with preserved support can be elegantly and efficiently represented using Fr\'echet derivatives. We apply our perturbation theory to find simple expressions of Taylor-like expansions for four of the most important quantities in quantum information theory: the von Neumann entropy, the quantum relative entropy, the quantum Chernoff bound, and the quantum fidelity, when their argument density operators are perturbed by small amounts. 
\end{abstract}
\maketitle

\section{Introduction}
\subsection{Motivation}
Many applications of quantum information theory involve analytical evaluation of the effect of small perturbations on various properties of a quantum state. For example, for discriminating among a library of incoherent objects in the sub-diffraction imaging limit, i.e., when the sizes of the objects normalized by the width of the point-spread-function satisfy $\gamma\ll1$, the minimum error probability with $n$ photons is
\begin{equation}
	\begin{split}
		-\frac{1}{n} \log(P_{\rm err}) \sim \,\, &\min_{i,j}\xi\big(\outerproduct{\phi_0}{\phi_0} + \frac{\gamma^2}{2}\rho_i + O(\gamma^3),\\
		&\outerproduct{\phi_0}{\phi_0} + \frac{\gamma^2}{2}\rho_j + O(\gamma^3)\big),
	\end{split}
	\label{eq:example_imaging}
\end{equation}
where $\ket{\phi_0}$ is a known object-independent state, $\rho_i$ and $\rho_j$ depend on the objects in the library, and $\xi(\sigma_1, \sigma_2)$ [Eq.~\eqref{eq:ChernoffInformation}] is the quantum Chernoff exponent~\cite{Grace2021a}. The covert communications capacity of a bosonic channel with QPSK modulation~\cite{Bullock2020} was bounded under a trace-distance covertness constraint, using:
\begin{equation}
	\begin{split}
		1-\sqrt{F(\rho_{\rm QPSK},\rho_{\eta k})}&\leq\norm{\rho_{\rm QPSK},\rho_{\eta k}}_1\\
		&\leq\sqrt{1-F(\rho_{\rm QPSK},\rho_{\eta k})},
	\end{split}
\end{equation}
where $\rho_{\rm QPSK}=\rho_{\eta k}+u^2\tilde{\rho}_{\eta k} + O(u^3)$ and $u$ sets the input power constraint, and $F(\sigma_1, \sigma_2)$~[Eq.~\eqref{eq:Fidelity}] is the Fidelity~\cite{Wang2022}. Relatedly, the covert communications capacity for a general quantum channel (a yet open problem) will derive from an analysis of the Holevo information:
\begin{equation}
	\chi\big(p(x),\rho(x)\big) = S\bigg(\sum_x p(x)\rho(x)\bigg)-\sum_x p(x)S\big(\rho(x)\big),
\end{equation}
for a shrinking ball of states $\rho(x)$ around a known `innocent' state $\rho_0$ with prior probabilities $p(x)$, requiring calculation of the von Neumann entropies $S\big(\rho(x)\big)$~[Eq.~\eqref{eq:VonNeumannEntropy}] for an ensemble of perturbed states~\cite{Bash2015-tv,Gagatsos2020-da}. 
Another place where such perturbation results could be useful is for proving the {\em entropy photon-number inequality} (EPnI)---the quantum version of the entropy-power inequality (EPI)~\cite{Guha2008-fw}---proving which will close the capacity region converse proofs for various multi-user quantum communications settings for bosonic channels~\cite{Guha2007}. A possible proof approach may involve incremental Gaussification of a general quantum state~\cite{Guo2005}, akin to its classical counterpart that leveraged the MMSE-Mutual Information relationship and the EPI~\cite{Verdu2006}. Analysis of perturbed quantum states, such as these examples, form crucial theoretical steps across many other problems within the subfields of quantum computing, communications, sensing and tomography.

\subsection{Measures of quantum states}
Many of the mathematical methods that form the analytical toolbox of quantum physics and quantum information science involve evaluating primary matrix functions of density operators. If $\rho_0=\sum_i \lambda_i\outerproduct{\phi_i}{\phi_i}$ is a density operator describing a quantum state in a Hilbert space $\mathcal{H}$, a primary matrix function $f(x)$ is defined as a map from $\mathcal{H}$ to $\mathcal{H}$ that can be expressed solely in terms of $\mathcal{R}^1$ to $\mathcal{R}^1$ operations on the eigenvalues of the density operator as $f(\rho_0) = \sum_i f(\lambda_i)\outerproduct{\phi_i}{\phi_i}$ \cite{Horn1991,Higham2008}. Evaluating a primary matrix function of a density operator generally requires diagonalization of the state for an exact solution. As such, analytically computing a primary matrix function $f(\rho)$ of an arbitrary state $\rho$ can be less tractable in practice. Our objective in this paper is to provide series expansions for primary matrix functions $f(\rho)$
when the density operator argument has been perturbed to a small degree and, subsequently, to use these to provide simple-to-evaluate Taylor-like expressions for measures of entropy (of a density matrix $\rho$) or distance (between two density matrices $\rho_1$ and $\rho_2$). To these ends, we express perturbed quantum states as $\rho = \rho_0 + \nu$, or $\rho_i = \rho_0 + \nu_i$, $i = 1, 2$, with $\rho_0$ being the {\em unperturbed}, or zeroth-order, state, and $\nu$, $\nu_1$ and $\nu_2$ being zero-trace {\em perturbing operators}, or perturbations, that are characterized by having small Hilbert-Schmidt norms.

Primary matrix functions of density operators appear in a number of useful measures of quantum states that form a broad foundation for computations relevant to applications in quantum state tomography \cite{Christandl2012-cr}, quantum computing \cite{Nielsen2010}, communication of quantum information \cite{Wilde2013}, and quantum-enhanced sensing \cite{Degen2017}. The most fundamental quantity in quantum information theory is the von Neumann entropy, which is defined as \cite{Wilde2013}
\begin{equation}
	S(\rho) = -\Tr\big[\rho\log(\rho)\big].
	\label{eq:VonNeumannEntropy}
\end{equation}
The operational interpretation of $S(\rho)$, in analogy to the classical Shannon entropy $H(X)$ of a random variable $X$, is in the compressibility of quantum information, viz., $n$ copies of $\rho$ can be unitarily encoded (compressed) into a quantum register of $n S(\rho)$ qubits, and can be losslessly decoded (uncompressed), in the limit that $n \to \infty$. The von Neumann entropy also forms the basis of many other quantum information quantities relating to quantum channels, such as the {\em Holevo information} that quantifies the classical communication capacity of a quantum channel~\cite{holevo1973bounds}, the {\em quantum mutual information} that quantifies the entanglement-assisted classical communication capacity of a channel~\cite{bennett1999entanglement}, and the {\em coherent information} that quantifies the quantum communication capacity of a channel~\cite{ieee2005dev}. 

Several measures quantify the distinguishability or similarity between two quantum states $\rho_1$ and $\rho_2$ in the same Hilbert space. One of these is the quantum relative entropy (QRE):
\begin{equation}
	D(\rho_1||\rho_2) = \Tr\big[\rho_1\big(\log(\rho_1)-\log(\rho_2)\big)\big],
	\label{eq:RelativeEntropy}
\end{equation}
an asymmetric measure that extends the  Kullback–Leibler (KL) divergence between two probability distributions $D(P_X||P_Y)$ of random variables $X$ and $Y$. The QRE can be used to express the quantum mutual information between two quantum systems $A$ and $B$, viz., $I(A;B) = D(\rho_{AB}||\rho_{A} \otimes \rho_B)$. It also has applications in bounding the probability of error in distinguishing between two quantum states $\rho_1$ and $\rho_2$ through Pinsker's inequality~\cite{Wilde2013}. Next, the quantity:
\begin{equation}
	\xi_s(\rho_1,\rho_2) = -\log\big(\Tr[\rho_1^s\rho_2^{1-s}]\big),
	\label{eq:ChernoffInformation}
\end{equation}
is a symmetric distance measure between two states, which maximized over $s \in [0, 1]$ results in the quantum Chernoff bound (QCB), $\xi(\rho_1,\rho_2)=\max_{s\in[0,1]}\xi_s(\rho_1,\rho_2)$. The QCB sets an asymptotically tight upper bound on the asymptotic error-probability exponent of the multi-copy binary hypothesis test $\rho_1^{\otimes n}$ versus $\rho_2^{\otimes n}$~\cite{Audenaert2007}. Finally, the quantum fidelity:
\begin{equation}
	F(\rho_1,\rho_2) = \Tr\bigg[\sqrt{\sqrt{\rho_1}\rho_2\sqrt{\rho_1}}\bigg]^2,
	\label{eq:Fidelity}
\end{equation}
is another symmetric distance measure between $\rho_1$ and $\rho_2$ that is commonly used to quantify the similarity of two states. It also appears in the definitions of a number of information theoretic bounds such as the quantum Cram\'er-Rao bound, through the Bures distance \cite{Braunstein1994}. All of the abovesaid four quantities depend on primary matrix functions (e.g., $\log(x)$ and $x^s$) of density operators that are evaluated by diagonalizing the state(s) and operating on the eigenvalues.

\subsection{Summary of main results}

\begin{table}[tbp]
	\centering
	\begin{tabular}{|p{0.125\linewidth} | p{0.41\linewidth}| p{0.465\linewidth}|}
		\hline
		
		& Support-Preserving Perturbation(s) 
		& Support-Extending Perturbation(s) 
		\\
		\hline \hline
		Von Neumann Entropy & \vspace{-2pt} $S(\rho)\approx S(\rho_0)-\Tr\big[\nu\log(\rho_0)\big]-\frac{1}{2}\Tr\big[\nu L_{\log(x)}(\rho_0,\nu)\big]$ & 
		\vspace{-2pt}$S(\rho)\approx S(\rho_0)-\Tr[L_{(x)\log(x)}(\rho_0,\nu_{\rm B})]-\Tr[\nu_{\rm D}\log(\nu_{\rm D})]$\\
		\hline
		Quantum Relative Entropy & \vspace{-2pt}$D(\rho_1||\rho_2)\approx \frac{1}{2}\Tr[(\nu_1-\nu_2)L_{\log(x)}(\rho_0,\nu_1-\nu_2)]$ & 
		\vspace{-2pt}$D(\rho_1||\rho_2)\approx \Tr[\nu_{1, \rm B}-\nu_{2,\rm B}]+\Tr\big[\nu_{1,\rm D}\big(\log(\nu_{1,\rm D})-\log(\nu_{2,\rm D})\big)\big]$\\
		\hline
		Quantum Chernoff Bound & \vspace{-2pt}$\xi(\rho_1,\rho_2)\approx \frac{1}{2}\Tr\big[L_{\sqrt{x}}(\rho_0,\nu_1-\nu_2)^2\big]$&
		\vspace{-2pt}$\xi_s(\rho_1,\rho_2)\approx -s\Tr[\nu_{1,\rm B}]-(1-s)\Tr[\nu_{2,\rm B}]-\Tr[\nu_{1,\rm D}^s\nu_{2,\rm D}^{1-s}]$\\
		\hline
		Quantum Fidelity & \vspace{-2pt}$F(\rho_1,\rho_2)\approx 1-\frac{1}{2}\Tr\big[(\nu_1-\nu_2)L_{\sqrt{x}}(\rho_0^2,\nu_1-\nu_2)\big]$&
		\vspace{-2pt}$F(\rho_1,\rho_2)\approx 1+\Tr[\nu_{1,\rm B}+\nu_{2,\rm B}]+2\Tr\big[\sqrt{\sqrt{\nu_{1,\rm D}}\nu_{2,\rm D}\sqrt{\nu_{1,\rm D}}}\big]$\\
		\hline
	\end{tabular}
	\caption{Taylor-like expansions of common measures for quantum information theory for perturbed quantum states. The perturbations are taken as: $\rho = \rho_0 + \nu$, and $\rho_i = \rho_0 + \nu_i$, $i = 1, 2$, with $\rho_0$ being the {\em unperturbed} or zeroth-order state, and $\nu$, $\nu_1$ and $\nu_2$ are zero-trace perturbing operators of small Hilbert-Schmidt norms. Second-order approximations in the Hilbert-Schmidt norm of the perturbing operator(s) are given for support-preserving perturbations, while first-order approximations are given for support-extending perturbations. All definitions as well as the explicit scaling of the remainder terms with respect to the Hilbert-Schmidt norms of the perturbing operators are given in Sections~\ref{sec:Model}, \ref{sec:Results_Support-Preserving} and \ref{sec:Results_Support-Extending}.}
	\label{tab:MainResults}
\end{table}

In this work, we develop a perturbation theory for quantum information around the insight that, as in scalar calculus, approximations for primary matrix functions of perturbed quantum states can be obtained through lowest-order series expansions. The value of perturbation theories in applied mathematics is the ability to infer low-order properties of a quantity (in terms of power series with respect to a measure of the size of the perturbation) from a well-characterized zeroth-order quantity without having to further analyze the perturbation. A primary matrix function $f(\rho)$ acting on a density operator $\rho$ that deviates from a diagonalizable density operator $\rho_0$ by a small perturbation operator $\nu$ is a perfect candidate for the use of perturbation theory. 

Under a set of conditions given in the next sections, the main results of our work are (1) an application of the operator perturbation theory of Dalecki\u{\i} and Kre\u{\i}n to find Taylor-like series expansions for primary matrix functions of perturbed quantum states (see Theorems~\ref{thm:QuantumDaleckii-Krein2} and \ref{thm:QuantumDaleckii-Krein1SupportExtending}) and (2) analytical lowest-order series expansions for the measures listed in Eqs.~\ref{eq:VonNeumannEntropy}-\ref{eq:Fidelity}. In developing each of these results, we define two different classes of perturbation operators ($\nu$, $\nu_1$, and $\nu_2$) on $\mathcal{H}$: ``support-preserving" perturbations, whose Hilbert space supports are fully contained within the support of the zeroth-order quantum state $\rho_0$, and ``support-extending" perturbations, whose supports extend beyond the support of the zeroth-order state. In the specific case of support-preserving perturbations, we show that the second order terms in the series expansions for each of the four measures in Eqs.~\ref{eq:VonNeumannEntropy}-\ref{eq:Fidelity} take on elegant forms that depend on Fr\'echet derivatives $L_{f(x)}(\rho_0,\nu)$ of primary matrix functions $f(x)$ evaluated on perturbed density matrices $\rho=\rho_0+\nu$. The use of Fr\'echet derivatives unifies our support-preserving perturbation theory, which enables analytic evaluation of quantum information theoretic measures of perturbed states while only requiring eigenanalysis of the unperturbed state. We concisely summarize our results in Table~\ref{tab:MainResults} for ease of reference, where definitions of quantum states and operator derivatives are given in the following sections. From Table~\ref{tab:MainResults} we can draw the following key observations: 
\begin{itemize}
	\item All four second-order expansions in the case of support-preserving perturbations depend only on first-order (i.e., Fr\'echet) derivatives of primary matrix functions; furthermore, neither the QRE, QCB, nor fidelity have any first-order contribution with respect to $\norm{\nu_1}$ or $\norm{\nu_2}$, and the only dependence of these quantities on the perturbations appears in the difference $\nu_1-\nu_2$ (see Theorems~\ref{thm:RelativeEntropySupportPreserving}-\ref{thm:FidelitySupportPreserving}).
	\item Our second-order expansion of the QCB for support-preserving perturbations is attained with $s=1/2$, and we therefore confirm that the QCB converges to the quantum Bhattacharyya bound \cite{Pirandola2008} for two states separated by small perturbations (see Theorem~\ref{thm:ChernoffInfoSupportPreserving}).
	\item Our second-order expression for the quantum fidelity can be directly used to derive the exact analytical form for the Bures distance \cite{Hubner1992} and therefore the quantum Fisher information \cite{Braunstein1994} when the support of the state is preserved (see Theorem~\ref{thm:FidelitySupportPreserving}); on the other hand, our first-order expression with support-extending perturbations reflects the discontinuity inherent to the Bures distance \cite{Safranek2017,Zhou2019c} when the rank of the state changes (see Theorem~\ref{thm:FidelitySupportExtending}).
	\item Our first-order expansions for support-extending perturbations reveal that quantum information theoretic measures of distance between perturbed states have no explicit dependence on $\rho_0$ but can be computed as sums between, firstly, partial traces of the perturbations over the support of $\rho_0$ and, secondly, the distance measure evaluated between the two perturbations over the kernel of $\rho_0$ (see Theorems~\ref{thm:RelativeEntropySupportExtending}-\ref{thm:FidelitySupportExtending}).
\end{itemize}
We elaborate further on these observations in the following sections. Our formalism has already been utilized to perform crucial analytical steps in aforementioned analyses of the quantum limits of sub-diffraction imaging~\cite{Grace2022} and covert communications~\cite{Wang2022} and will be useful for many other applications.

The paper is organized as follows. We first in Section \ref{sec:Model} define a quantum model for perturbations of a general mixed quantum state on a Hilbert space. We also provide background on the theory of matrix perturbations, including the concept of the Fr\'echet derivative. In Section \ref{sec:Results_Support-Preserving} we provide an expression for the second-order behavior of a primary matrix function of a quantum state perturbed by a small, support-preserving linear operator with respect to the Hilbert-Schmidt norm of the perturbing operator. We use this perturbation theory to find series expansions for several commonly occurring measures in quantum information theory for support-preserving perturbations, which are easily evaluated to second-order using Fr\'echet derivatives. In Section \ref{sec:Results_Support-Extending} we derive a corresponding first-order series expansion for the particular primary matrix function $f(x)=x^s$, $s\in(0,1)$, and use it to find series expansions for the same quantum information theoretic measures for support-extending perturbations. We provide closing remarks in Section \ref{sec:Discussion}.

\section{Model for Quantum State Perturbations}
\label{sec:Model}
Consider a general quantum state described by its density operator $\rho_0$ on a Hilbert space $\mathcal{H}$ with spectral decomposition $\rho_0=\sum_i \lambda_i \outerproduct{\phi_i}{\phi_i}$, with normalization condition $\Tr(\rho_0)=1$. We introduce a quantum state perturbation $\nu$ as an operator on $\mathcal{H}$, such that the resulting perturbed state is $\rho=\rho_0+\nu$. The following three properties must be satisfied for $\nu$ to be a valid perturbation for the state $\rho_0$. First, $\nu$ must be Hermitian, so that $\rho\in\mathcal{H}$. Second, $\Tr[\nu]=0$, so that $\rho$ is a properly normalized, unit-trace quantum state. Third, for all $i\in [1, \dim(\mathcal{H})]$,  $-\lambda_i\leq\mel{\phi_i}{\nu}{\phi_i}\leq1-\lambda_i$, so that $\rho$ remains positive semi-definite. These properties hold for finite- or infinite-dimensional Hilbert spaces; our results as proven here are only formulated for finite-dimensional state spaces (or those that can be truncated to finite dimensions), but the results are likely extendable to infinite-dimensional Hilbert spaces. For the purposes of constructing a perturbation theory for quantum states, we will consider ``small" perturbations with a Hilbert-Schmidt norm $\norm{\nu}$ that can be quantified by the condition $\norm{\nu}\ll\epsilon$ for some small real number $\epsilon$. When $\mathcal{H}$ is finite-dimensional, the Hilbert-Schmidt norm is equivalent to the Frobenius norm.

\subsection{Operator derivatives and small-perturbation expansions}
The calculus of functions of matrices has a rich mathematical history \cite{Daleckii1965,Bhatia1996,Higham2008,higham2014,Mathias1996,Peller2006,DelMoral2018}. In particular, the differential effect of a function $f(x)$ at a Hermitian matrix $A$ with respect to another Hermitian matrix $E$ is captured by the Fr\'echet derivative
\begin{equation}
	L_{f(x)}(A,E) = \frac{d}{dt}\Big\vert_{t=0}f(A+t E),
	\label{eq:FrechetDerivative}
\end{equation}
which exists if Eq.~\ref{eq:FrechetDerivative} produces a matrix $L_f(A,E)$ such that $\norm{f(A+E)-f(A)-L_{f(x)}(A,E)}=o(\norm{E})$ \cite{Higham2008,Bhatia1996}. We denote real valued derivatives of the function $f(x)$ as $f'(x)=\partial f(x)/\partial x$, $f''(x)=\partial^2 f(x)/\partial x^2$, etc. When working in the eigenbasis of $A=U\Lambda_{\vec{\alpha}}U^{\dagger}$, where $\Lambda_{\vec{\alpha}}$ is a diagonal matrix containing the vector of eigenvalues $\vec{\alpha}$ of $A$, the Fr\'echet derivative takes the computable form \cite{Daleckii1951,Bhatia1996,Higham2008,Carlsson2019a}
\begin{equation}
	L_{f(x)}(A,E) = U\big([f(x),\vec{\alpha}]^{[1]}\circ\hat{E}\big)U^{\dagger},
	\label{eq:FrechetDerivative2}
\end{equation}
where the symmetric matrix $[f(x),\vec{\alpha}]^{[1]}$ is the first divided difference of the function $f(x)$ at $A$, defined by
\begin{equation}
	[f(x),\vec{\alpha}]^{[1]}_{k,l}=
	\begin{cases}[c]
		\big(f(\alpha_k)-f(\alpha_l)\big)/\big(\alpha_k-\alpha_l\big) \quad &\alpha_k\neq\alpha_l\\
		f'(\alpha_k)\quad &\alpha_k=\alpha_l,
	\end{cases}
	\label{eq:DividedDifference1}
\end{equation}
$\hat{E}=U^{\dagger}EU$, and $\circ$ denotes the Hadamard (element-wise) product. In the context of quantum state perturbations, the Fr\'echet derivative $L_{f(x)}(\rho_0,\nu)$ can be understood as the derivative of the function $f(x)$ at a state $\rho_0$ in the \emph{direction} of the perturbation $\nu$. The definition of the Fr\'echet derivative in Eq.~\ref{eq:FrechetDerivative2} requires only a spectral decomposition of $\rho_0$ and not of $\nu$. 

Matrix differentiation can be used to find series expansions for functions of perturbed matrices. Our results depend on a theorem proved by Dalecki\u{\i} and Kre\u{\i}n \cite{Daleckii1965} that states that for a full rank Hermitian matrix $A$ on a Hilbert space $\mathcal{H}$ perturbed linearly by another Hermitian matrix $E$, a Taylor-like expansion about $\epsilon=0$ is given by
\begin{equation}
	\begin{split}
		f(A+\epsilon E) =& f(A) + \sum_{k=1}^{K}\frac{\epsilon^k}{k!}D^{[k]}_{f(x)}(A,E)+R_K(\epsilon),
	\end{split}
	\label{eq:Daleckii-Krein-Full}
\end{equation}
where 
\begin{equation}
	D^{[k]}_{f(x)}(A,E)=\frac{d^k}{dt^k}\Big\vert_{t=0}f(A+t E)
	\label{eq:MatrixDerivatives}
\end{equation}
and where $R_K(\epsilon)$ is a residual term that depends linearly on $\big(\epsilon\norm{E}\big)^{K+1}$ \cite{Daleckii1965}. Clearly, $D^{[1]}_{f(x)}(A,E)=L_{f(x)}(A,E)$. While at face value Eq.~\eqref{eq:Daleckii-Krein-Full} is only useful for convex sums of two matrices governed by the small scalar perturbation factor $\epsilon\ll1$, setting $\epsilon=1$ allows for series expansions of a matrix $A$ perturbed by a second matrix $E$, with the residual depending on powers of $\norm{E}$ \cite{Carlsson2019}. The trivial zeroth-order expansion is
\begin{equation}
	f(A+E)=f(A) + O(\norm{E}).
	\label{eq:Daleckii-Krein0}
\end{equation}
We refer to Eq.~\ref{eq:Daleckii-Krein0} as the zeroth-order Dalecki\u{\i}-Kre\u{\i}n expansion. A first-order Dalecki\u{\i}-Kre\u{\i}n expansion makes use of the Fr\'echet derivative \cite{Carlsson2019}:
\begin{equation}
	f(A+E)=f(A) + U\big([f(x),\vec{\alpha}]^{[1]}\circ\hat{E}\big)U^{\dagger} + O(\norm{E}^2).
	\label{eq:Daleckii-Krein1}
\end{equation}
For higher-order terms in the expansion, we need the higher-order matrix derivatives found in Eq.~\ref{eq:MatrixDerivatives}. These are not equivalent to higher-order Fr\'echet derivatives \cite{higham2014} but can be still be computed through established tensor calculus tools \cite{Mathias1996,Peller2006,Sendov2007}. The second derivative is computed by \cite{Daleckii1965,Bhatia1996}
\begin{equation}
	D^{[2]}_{f(x)}(A,E)=2\sum_{k,l,m}[f(x),\vec{\alpha}]^{[2]}_{k,l,m}U_k \hat{E} U_l \hat{E} U_m,
	\label{eq:2ndDerivative}
\end{equation}
where the $U_k$ are projections onto the eigenvectors of $A$ and the second divided difference tensor $[f(x),\vec{\alpha}]^{[2]}$ is a symmetric tensor defined by
\begin{equation}
	[f(x),\vec{\alpha}]^{[2]}_{k,l,m}=
	\begin{cases}[c]
		\big([f(x),\vec{\alpha}]^{[1]}_{k,l}-[f(x),\vec{\alpha}]^{[1]}_{m,l}\big)/(\alpha_k-\alpha_m)\quad& \alpha_k\neq\alpha_m\\
		\big(f'(\alpha_k)-[f(x),\vec{\alpha}]^{[1]}_{k,l}\big)/(\alpha_k-\alpha_l)\quad& \alpha_k=\alpha_m\neq\alpha_l\\
		\frac{1}{2}f''(\alpha_k)\quad&\alpha_k=\alpha_m=\alpha_l.
	\end{cases}
	\label{eq:DividedDifference2}
\end{equation}
The second-order Dalecki\u{\i}-Kre\u{\i}n expansion is then given by
\begin{equation}
	f(A+E)=f(A) + U\big([f(x),\vec{\alpha}]^{[1]}\circ\hat{E}\big)U^{\dagger}+U\bigg(\sum_{k,l,m}[f(x),\vec{\alpha}]^{[2]}_{k,l,m}U_k \hat{E} U_l \hat{E} U_m\bigg)U^{\dagger}+O(\norm{E}^3).
	\label{eq:Daleckii-Krein2}
\end{equation}

The mathematical literature for perturbation theory for primary matrix functions of singular matrices is more limited, as eigenvalues equaling zero severely restrict Fr\'echet differentiability, but we make use of recent results to apply matrix perturbation theory to quantum states with support-extending perturbations \cite{Carlsson2018}. Consider the Hilbert space decomposition $\mathcal{H}=\mathcal{H}_+\oplus\mathcal{H}_0$, where $\mathcal{H}_+=\supp(\rho_0)$ and $\mathcal{H}_0=\ker(\rho_0)$ are subspaces corresponding to the support and the kernel of $A$, respectively \cite{Wilde2013}. We will use a Taylor-like expansion of $(A+E)^{1/p}=U(\Lambda_{\vec{\alpha}}+\hat{E})^{1/p}U^{\dagger}$ for small perturbing matrices when the vector support of $E$ extends to $\mathcal{H}_0$. We will use the block decompositions
\begin{eqnarray}
	\label{eq:BlockDecompositionA} A&=&U\Lambda_{\vec{\alpha}}U^{\dagger}=U
	\begin{pmatrix}
		\Lambda_{\vec{\alpha}_+} & 0\\
		0 & 0
	\end{pmatrix}
	U^{\dagger}\\
	\label{eq:BlockDecompositionE} E&=&U\hat{E}U^{\dagger}=U
	\begin{pmatrix}
		B & C\\
		C^{\dagger} & D
	\end{pmatrix}
	U^{\dagger}
\end{eqnarray}
where $\Lambda_{\vec{\alpha}_+}$ is a diagonal matrix whose diagonal elements correspond to $\vec{\alpha}_+$, the vector of nonzero eigenvalues of $A$. Define $\bar{D}=D-C^{\dagger}(\Lambda_{\vec{\alpha}_+}+B)^{-1}C$ to be the Schur complement of the upper left block of $\Lambda_{\vec{\alpha}}+\hat{E}$. The first divided difference of $f(x) = x^{q}$ is extended into $\mathcal{H}_0$ as
\begin{equation}
	\big[x^q,\vec{\alpha}\big]^{[1,0]}_{k,l}=
	\begin{cases}[c]
		\big(\alpha_k^{q}-\alpha_l^{q}\big)\big(\alpha_k-\alpha_l\big) \quad &\alpha_k\neq\alpha_l\\
		q\alpha_k^{q-1}\quad &\alpha_k=\alpha_l>0\\
		1 \quad &\alpha_k=\alpha_l=0.
	\end{cases}
	\label{eq:DividedDifferenceCarlsson}
\end{equation}
For the $p^{th}$ root function $f(x)=x^{1/p}$, $1<p<\infty$, a Daleckii-Kre\u{\i}n-like expansion was proven for a singular Hermitian matrix $A$ perturbed by a second Hermitian matrix $E$, giving \cite{Carlsson2018}  
\begin{equation}
	(A+E)^{1/p} = A^{1/p} + U\bigg(\big[x^{1/p},\vec{\alpha}\big]^{[1,0]}\circ
	\begin{pmatrix}
		B & C\\
		C^{\dagger} & \bar{D}^{1/p}
	\end{pmatrix}
	\bigg)U^{\dagger}+ O(\norm{E}^r)
	\label{eq:Daleckii-Krein_Singular}
\end{equation}
where $r = \min(1+1/p,3/p)$. When $1<p<3$, Eq.~\ref{eq:Daleckii-Krein_Singular} is a useful first-order expansion for $(A+E)^{1/p}$, while for $p\geq3$ the expression is still valid but does not provide adequate control on the residual term to serve as a lowest-order expansion.

\section{Support-Preserving Perturbation Theory}
\label{sec:Results_Support-Preserving}

We first consider perturbations whose support is spanned by the zeroth-order state, i.e., $\supp(\nu)\subseteq\supp(\rho_0)$, such that the support of the resulting state $\rho$ is not extended beyond that of $\rho_0$. The most obvious sufficient condition thereof is when $\rho_0$ is full rank on $\mathcal{H}$, but this is not necessary; it is possible that $\supp(\nu)\subseteq\supp(\rho_0)$ is satisfied if both $\rho_0$ and $\nu$ are rank-deficient.

Our first main result is a direct application of the second-order Dalecki\u{\i}-Kre\u{\i}n expansion (Eq.~\ref{eq:Daleckii-Krein2}) to a primary matrix function  of a perturbed quantum state.
\begin{theorem}
	Let $\rho_0=\sum_i \lambda_i \outerproduct{\phi_i}{\phi_i}$ be a unit-trace quantum state on $\mathcal{H}$, and let $\nu$ be an operator on $\mathcal{H}$ whose support is a subspace of the support of $\rho_0$. If $f(x)$ is a primary matrix function that is $C^6$ at the eigenvalues $\vec{\lambda}$ such that $f(\rho_0)=\sum_i f(\lambda_i) \outerproduct{\phi_i}{\phi_i}$, then a second-order series expansion in $\norm{\nu}$ for the function acting on the state $\rho = \rho_0+\nu$ is given by
	\begin{equation}
		f(\rho)=f(\rho_0)+L_{f(x)}(\rho_0,\nu)+\frac{1}{2}D_{f(x)}^{[2]}(\rho_0,\nu)+O(\norm{\nu}^3).
		\label{eq:QuantumDaleckii-Krein2}
	\end{equation}
	\label{thm:QuantumDaleckii-Krein2}
\end{theorem}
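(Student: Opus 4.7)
The plan is to obtain Theorem~\ref{thm:QuantumDaleckii-Krein2} as a direct specialization of the second-order Dalecki\u{\i}-Kre\u{\i}n expansion (Eq.~\ref{eq:Daleckii-Krein2}) to the quantum setting, with the only nontrivial work being a short argument that reduces the possibly rank-deficient $\rho_0$ to an effectively full-rank operator on $\supp(\rho_0)$, thereby satisfying the hypothesis of the classical Dalecki\u{\i}-Kre\u{\i}n result. Once that reduction is in hand, the two expansion terms and the residual fall out by identifying $D^{[1]}_{f(x)}(\rho_0,\nu)=L_{f(x)}(\rho_0,\nu)$ and setting $\epsilon=1$, $K=2$ in Eq.~\ref{eq:Daleckii-Krein-Full}.

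First, I would work in the eigenbasis $\{\ket{\phi_i}\}$ of $\rho_0$ and use the decomposition $\mathcal{H}=\mathcal{H}_+\oplus\mathcal{H}_0$ with $\mathcal{H}_+=\supp(\rho_0)$ and $\mathcal{H}_0=\ker(\rho_0)$. Writing $\rho_0$ and $\nu$ in the block form of Eqs.~\ref{eq:BlockDecompositionA}--\ref{eq:BlockDecompositionE}, the hypothesis $\supp(\nu)\subseteq\supp(\rho_0)$ forces the blocks $C$ and $D$ of $\hat{\nu}=U^{\dagger}\nu U$ to vanish. Consequently, on $\mathcal{H}_0$ both $\rho$ and $\rho_0$ are zero, and since $f$ is a primary matrix function, $f(\rho)$ and $f(\rho_0)$ agree on $\mathcal{H}_0$ (each acting as $f(0)$ times the projector onto $\mathcal{H}_0$). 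Moreover, the Fr\'echet derivative $L_{f(x)}(\rho_0,\nu)$ and the second-order term $D^{[2]}_{f(x)}(\rho_0,\nu)$ defined via Eqs.~\ref{eq:FrechetDerivative2} and \ref{eq:2ndDerivative} both vanish outside $\mathcal{H}_+$, since the relevant Hadamard/tensor contractions involve $\hat{\nu}$, whose only nonzero block lies in $\mathcal{H}_+$. Therefore the claimed identity reduces to one on the finite-dimensional invariant subspace $\mathcal{H}_+$, on which $\rho_0$ is strictly positive.

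With the problem localized to $\mathcal{H}_+$, $\rho_0|_{\mathcal{H}_+}$ is a full-rank Hermitian matrix and $\nu|_{\mathcal{H}_+}$ is a Hermitian perturbation, so the hypotheses of Eq.~\ref{eq:Daleckii-Krein-Full} (applied with $K=2$ and $\epsilon=1$) are satisfied. The $C^6$ smoothness assumption on $f$ at the nonzero eigenvalues $\vec{\lambda}$ is precisely what is needed by the Dalecki\u{\i}-Kre\u{\i}n theorem to control the residual $R_2(1)$ at third order, yielding $R_2(1)=O(\norm{\nu}^3)$ via the second-order Dalecki\u{\i}-Kre\u{\i}n bound of Eq.~\ref{eq:Daleckii-Krein2}. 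Identifying $D^{[1]}_{f(x)}(\rho_0,\nu)=L_{f(x)}(\rho_0,\nu)$ then produces Eq.~\ref{eq:QuantumDaleckii-Krein2}.

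The main obstacle I anticipate is bookkeeping around the residual bound. The classical Dalecki\u{\i}-Kre\u{\i}n remainder estimate is most naturally phrased in the spectral norm, while the theorem uses the Hilbert-Schmidt norm that quantifies the ``smallness'' of $\nu$ throughout Section~\ref{sec:Model}. On the finite-dimensional block $\mathcal{H}_+$ these norms are equivalent up to a factor depending on $\mathrm{rank}(\rho_0)$ and, crucially, $\norm{\cdot}_{\mathrm{op}}\leq\norm{\cdot}_{\mathrm{HS}}$ in the direction that preserves the $O(\norm{\nu}^3)$ scaling; this must be stated explicitly so that the implicit constant in the residual, which depends on the divided differences of $f$ at $\vec{\lambda}$, is genuinely independent of $\nu$ for all sufficiently small perturbations. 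A secondary check is that the support-preservation hypothesis is necessary for this reduction: without it, the off-diagonal block $C$ couples $\mathcal{H}_+$ to $\mathcal{H}_0$ and the divided differences of Eqs.~\ref{eq:DividedDifference1}--\ref{eq:DividedDifference2} would be forced to evaluate $f$ at zero eigenvalues, which is the singular situation handled separately in Section~\ref{sec:Results_Support-Extending} via Eq.~\ref{eq:Daleckii-Krein_Singular}.
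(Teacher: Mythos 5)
Your proposal is correct and follows essentially the same route as the paper: both proofs obtain Eq.~\ref{eq:QuantumDaleckii-Krein2} by directly specializing the second-order Dalecki\u{\i}-Kre\u{\i}n expansion of Eq.~\ref{eq:Daleckii-Krein2} to $A=\rho_0$, $E=\nu$ in the eigenbasis of $\rho_0$, identifying $D^{[1]}_{f(x)}=L_{f(x)}$. Your explicit block reduction to $\mathcal{H}_+=\supp(\rho_0)$ (using $\supp(\nu)\subseteq\supp(\rho_0)$ to kill the $C$ and $D$ blocks so that the classical full-rank hypothesis applies) is a step the paper's proof leaves implicit, and it is a worthwhile tightening rather than a different argument.
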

\begin{proof}
	Let $A=\sum_k \lambda_k \outerproduct{\phi_k}{\phi_k}$ and $E=\sum_{k,l} \mel{\phi_k}{\nu}{\phi_l}\outerproduct{\phi_k}{\phi_l}$ be the $\rho_0$-eigenbasis representations of the density operators $\rho_0$ and $\nu$, respectively,  $\vec{\alpha}=\vec{\lambda}$, and $U_k=\outerproduct{\phi_k}{\phi_k}$ via the quantum mechanical description of the linear projector operator. From Eqs.~\ref{eq:FrechetDerivative2} and \ref{eq:2ndDerivative} we have
	\begin{eqnarray}
		\label{eq:FrechetDerivativeQuantum}L_{f(x)}(\rho_0,\nu)&=&\sum_{k,l} [f(x),\vec{\lambda}]^{[1]}_{k,l} \mel{\phi_k}{\nu}{\phi_l}\outerproduct{\phi_k}{\phi_l}\\
		\label{eq:2ndDerivativeQuantum}D_{f(x)}^{[2]}(\rho_0,\nu)&=&2\sum_{k,l,m} [f(x),\vec{\lambda}]_{k,l,m}^{[2]}\mel{\phi_k}{\nu}{\phi_l}\mel{\phi_l}{\nu}{\phi_m}\outerproduct{\phi_k}{\phi_m},
	\end{eqnarray}
	and the basis-independent Eq.~\ref{eq:QuantumDaleckii-Krein2} directly follows from Eq.~\ref{eq:Daleckii-Krein2}.
\end{proof}
The result in Theorem~\ref{thm:QuantumDaleckii-Krein2} has several desirable features for quantum information theory. First, and of greatest practical significance, this state perturbation theory only requires diagonalization of the zeroth-order state $\rho_0$, whereas the perturbing term $\nu$ only contributes through simply reading off density matrix elements $\mel{\phi_l}{\nu}{\phi_m}$ in the eigenbasis of $\rho_0$. Second, the expression holds for any primary matrix function $f(x)$ as long as $f(\lambda)$ is sufficiently differentiable at each of the eigenvalues $\vec{\lambda}$. Third, the expression can be utilized for any perturbed quantum state $\rho=\rho_0+\nu$ and is not restricted to convex sums $\rho=\rho_0+\epsilon\nu$ under a small linear parameter $\epsilon$. The latter can arise when the state is prepared via evolution under a perturbed Hamiltonian, e.g., when using Lie-Trotter-Suzuki approximations of exponential operators for simulation of quantum systems~\cite{Suzuki1985}, or when a state is subjected to a perturbed quantum channel. On the other hand, Eq.~\ref{eq:QuantumDaleckii-Krein2} also applies for states $\rho$ that are mathematically ``close" to a well characterized state $\rho_0$ (quantified by $\norm{\nu}\ll1$) even if no physical perturbing process can be identified that maps the two states to one another, an example being various proposals for the preparation of approximate ``cat" states in continuous-variable quantum information~\cite{Ourjoumtsev2007,Etesse2015}.

In the remainder of this section we use Theorem~\ref{thm:QuantumDaleckii-Krein2} to find second-order expansions of the quantum information theoretic quantities given in Table~\ref{tab:MainResults}. In each case, analytical evaluation on perturbed quantum states in general requires diagonalization of the full state $\rho$. By utilizing the following second-order expansions, only the zeroth-order states need be diagonalized, while the matrix elements of the perturbation operator can simply be read off in the eigenbasis of $\rho_0$. We begin with entropic quantities, for which we use the following two lemmas for the trace of matrix derivatives.

\begin{lemma}
	Let $\rho_0=\sum_i \lambda_i \outerproduct{\phi_i}{\phi_i}$ and $\nu$ be operators on $\mathcal{H}$, where the support of $\nu$ is a subspace of the support of $\rho_0$. If $f(x)$ is a primary matrix function that is $C^1$ at the eigenvalues $\vec{\lambda}$, 
	\begin{equation}
		\Tr\Big[L_{f(x)}(\rho_0,\nu)\Big]=\Tr[\nu f'(\rho_0)].
		\label{eq:Frechet1LemmaEntropies}
	\end{equation}
	\label{lem:Frechet1LemmaEntropies}
\end{lemma}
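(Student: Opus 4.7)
The plan is to evaluate the trace of the Fr\'echet derivative using the eigenbasis representation from Eq.~\ref{eq:FrechetDerivativeQuantum} and recognize the result as the trace of a product of operators. Specifically, I would start from
\begin{equation*}
L_{f(x)}(\rho_0,\nu)=\sum_{k,l}[f(x),\vec{\lambda}]^{[1]}_{k,l}\mel{\phi_k}{\nu}{\phi_l}\outerproduct{\phi_k}{\phi_l},
\end{equation*}
which is valid under the stated hypotheses since $f$ being $C^1$ at $\vec{\lambda}$ guarantees that every divided difference entry is well-defined (including the diagonal case).

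Next, I would take the trace of this expression in the $\rho_0$ eigenbasis. Because $\{\ket{\phi_k}\}$ is orthonormal, only the $k=l$ terms of the double sum survive, leaving
\begin{equation*}
\Tr\big[L_{f(x)}(\rho_0,\nu)\big]=\sum_{k}[f(x),\vec{\lambda}]^{[1]}_{k,k}\mel{\phi_k}{\nu}{\phi_k}.
\end{equation*}
By the diagonal case of the first divided difference in Eq.~\ref{eq:DividedDifference1}, $[f(x),\vec{\lambda}]^{[1]}_{k,k}=f'(\lambda_k)$, so this becomes $\sum_{k}f'(\lambda_k)\mel{\phi_k}{\nu}{\phi_k}$.

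Finally, I would identify this sum as $\Tr[\nu f'(\rho_0)]$. Since $f'(x)$ is itself a scalar function applied to the eigenvalues of $\rho_0$, the primary matrix function $f'(\rho_0)=\sum_{k}f'(\lambda_k)\outerproduct{\phi_k}{\phi_k}$, and so $\Tr[\nu f'(\rho_0)]=\sum_k f'(\lambda_k)\mel{\phi_k}{\nu}{\phi_k}$, completing the identification. The support-preserving hypothesis does not enter essentially here beyond ensuring that any zero eigenvalues of $\rho_0$ at which $f'$ must be evaluated are covered by the $C^1$ assumption on $f$. There is no real obstacle in this proof; the main task is simply to carry out the trace carefully and invoke the correct case of the divided-difference definition.
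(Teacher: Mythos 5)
Your proposal is correct and follows essentially the same route as the paper's proof: take the trace of the eigenbasis representation of the Fr\'echet derivative so that only the diagonal terms survive, substitute $[f(x),\vec{\lambda}]^{[1]}_{k,k}=f'(\lambda_k)$, and recognize the resulting sum as $\Tr[\nu f'(\rho_0)]$. No gaps.
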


\begin{proof}
	Eq.~\ref{eq:Frechet1LemmaEntropies} follows from the quantum-mechanical definition of the Fr\'echet derivative (Eq.~\ref{eq:FrechetDerivativeQuantum}):
	\begin{equation*}
		\begin{split}
			\Tr\big[L_{f(x)}(\rho_0,\nu)\big]=&\Tr\Big[\sum_{k,l}[f(x),\vec{\lambda}]^{[1]}_{k,l}\mel{\phi_k}{\nu}{\phi_l}\outerproduct{\phi_k}{\phi_l}\Big]\\
			=&\Tr\Big[\sum_{k}f'(\lambda_k)\mel{\phi_k}{\nu}{\phi_k}\outerproduct{\phi_k}{\phi_k}\Big]\\
			=&\Tr\Big[\sum_{k}\mel{\phi_k}{\nu}{\phi_k}\outerproduct{\phi_k}{\phi_k}\sum_{i}f'(\lambda_i)\outerproduct{\phi_i}{\phi_i}\Big]\\
			=&\Tr[\nu f'(\rho_0)].
		\end{split}
	\end{equation*}
\end{proof}

\begin{lemma}
	Let $\rho_0=\sum_i \lambda_i \outerproduct{\phi_i}{\phi_i}$ and $\nu$ be operators on $\mathcal{H}$, where the support of $\nu$ is a subspace of the support of $\rho_0$. If $f(x)$ is a primary matrix function that is $C^2$ at the eigenvalues $\vec{\lambda}$, 
	\begin{equation}
		\Tr\Big[D_{f(x)}^{[2]}(\rho_0,\nu)\Big]=\Tr\Big[\nu L_{f'(x)}(\rho_0,\nu)\Big].
		\label{eq:Frechet2LemmaEntropies}
	\end{equation}
	\label{lem:Frechet2LemmaEntropies}
\end{lemma}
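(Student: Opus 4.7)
My plan is to prove Lemma~\ref{lem:Frechet2LemmaEntropies} by reducing it to Lemma~\ref{lem:Frechet1LemmaEntropies} applied twice in succession, exploiting the fact that the trace commutes with differentiation in the scalar parameter $t$. The key observation is that, by Eq.~\ref{eq:MatrixDerivatives}, $D^{[2]}_{f(x)}(\rho_0,\nu) = \frac{d^2}{dt^2}\big|_{t=0} f(\rho_0 + t\nu)$, so linearity and continuity of $\Tr$ give
\begin{equation*}
\Tr\big[D^{[2]}_{f(x)}(\rho_0,\nu)\big] = \frac{d^2}{dt^2}\Big|_{t=0}\Tr\big[f(\rho_0 + t\nu)\big].
\end{equation*}

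The first step is to differentiate once: by the definition of the Fr\'echet derivative, $\frac{d}{dt} f(\rho_0 + t\nu) = L_{f(x)}(\rho_0 + t\nu, \nu)$. Before invoking Lemma~\ref{lem:Frechet1LemmaEntropies} I must verify that its hypothesis $\supp(\nu)\subseteq\supp(\rho_0+t\nu)$ is preserved along the path for $t$ in a neighborhood of $0$. Since $\supp(\nu)\subseteq\supp(\rho_0)$ by assumption, the kernel of $\rho_0$ is contained in the kernel of $\rho_0+t\nu$, and for $\abs{t}$ small enough the nonzero eigenvalues of $\rho_0+t\nu$ remain strictly positive by continuity, so the two supports actually coincide. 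Lemma~\ref{lem:Frechet1LemmaEntropies} then yields
\begin{equation*}
\frac{d}{dt}\Tr\big[f(\rho_0+t\nu)\big] = \Tr\big[L_{f(x)}(\rho_0+t\nu,\nu)\big] = \Tr\big[\nu\,f'(\rho_0+t\nu)\big].
\end{equation*}

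A second differentiation at $t=0$, combined with linearity of the trace in the $t$-independent factor $\nu$, recognizes the remaining derivative as exactly the Fr\'echet derivative of $f'$ at $\rho_0$ in the direction $\nu$, producing $\Tr[\nu\,L_{f'(x)}(\rho_0,\nu)]$ as desired. The main obstacle I anticipate is analytic rather than algebraic: I need to confirm that the $C^2$ hypothesis on $f$ at the eigenvalues $\vec{\lambda}$ is precisely what makes $f'$ itself $C^1$ there, so that $L_{f'(x)}(\rho_0,\nu)$ is well-defined via Eq.~\ref{eq:FrechetDerivative2}, and that the support-preservation argument above is uniform enough in $t$ to justify interchanging $d/dt$ with $\Tr$ twice. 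Should one prefer a more elementary route, a brute-force alternative is to substitute the explicit expressions of Eqs.~\ref{eq:FrechetDerivativeQuantum}--\ref{eq:2ndDerivativeQuantum}, take traces using $\Tr[\outerproduct{\phi_k}{\phi_m}]=\delta_{km}$, symmetrize $k\leftrightarrow l$ by Hermiticity of $\nu$ (which ensures $\abs{\mel{\phi_k}{\nu}{\phi_l}}^2=\abs{\mel{\phi_l}{\nu}{\phi_k}}^2$), and reduce the claim to the pointwise divided-difference identity $[f(x),\vec{\lambda}]^{[2]}_{k,l,k} + [f(x),\vec{\lambda}]^{[2]}_{l,k,l} = [f'(x),\vec{\lambda}]^{[1]}_{k,l}$, which follows by a short case analysis on whether $\lambda_k=\lambda_l$ using Eqs.~\ref{eq:DividedDifference1} and \ref{eq:DividedDifference2}.
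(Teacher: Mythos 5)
Your main argument is correct but takes a genuinely different route from the paper. You work at the level of the scalar function $t\mapsto\Tr[f(\rho_0+t\nu)]$: you commute the trace with $d^2/dt^2$, apply Lemma~\ref{lem:Frechet1LemmaEntropies} at the shifted base point $\rho_0+t\nu$ to get $\Tr[\nu f'(\rho_0+t\nu)]$ after one differentiation, and then recognize the second differentiation as $\Tr[\nu L_{f'(x)}(\rho_0,\nu)]$ via Eq.~\ref{eq:FrechetDerivative}. The paper instead computes directly from the component formula Eq.~\ref{eq:2ndDerivativeQuantum}: it traces to get $2\sum_{k,l}[f(x),\vec{\lambda}]^{[2]}_{k,l,k}\abs{\mel{\phi_k}{\nu}{\phi_l}}^2$, symmetrizes in $k\leftrightarrow l$, and uses the symmetry $[f(x),\vec{\lambda}]^{[1]}_{k,l}=[f(x),\vec{\lambda}]^{[1]}_{l,k}$ to cancel the divided-difference terms in Eq.~\ref{eq:DividedDifference2}, leaving exactly $[f'(x),\vec{\lambda}]^{[1]}_{k,l}$ --- i.e., precisely the identity $[f(x),\vec{\lambda}]^{[2]}_{k,l,k}+[f(x),\vec{\lambda}]^{[2]}_{l,k,l}=[f'(x),\vec{\lambda}]^{[1]}_{k,l}$ that you name in your ``brute-force alternative,'' so your fallback route is the paper's proof. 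Your primary route is more conceptual and avoids the tensor bookkeeping, at the cost of the analytic overhead you correctly flag: you must check that the support condition and $C^1$ regularity of $f$ persist at $\rho_0+t\nu$ for small $\abs{t}$ so that Lemma~\ref{lem:Frechet1LemmaEntropies} applies along the path, and that trace and $t$-derivative can be interchanged twice (immediate in finite dimensions). The paper's computation, by contrast, is self-contained, needs no uniformity in $t$, and makes the underlying divided-difference identity explicit, which is then reused in the proof of Lemma~\ref{lem:Frechet2LemmaDistances}.
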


\begin{proof}
	Applying Eq.~\ref{eq:2ndDerivativeQuantum}, we have
	\begin{equation*}
		\begin{split}
			\Tr\big[D_{f(x)}^{[2]}(\rho_0,\nu)\Big]=&2\Tr\Big[\sum_{k,l,m} [f(x),\vec{\lambda}]_{k,l,m}^{[2]}\mel{\phi_k}{\nu}{\phi_l}\mel{\phi_l}{\nu}{\phi_m}\outerproduct{\phi_k}{\phi_m}\Big]\\
			=&2\sum_{k,l} [f(x),\vec{\lambda}]_{k,l,k}^{[2]}\abs{\mel{\phi_k}{\nu}{\phi_l}}^2\\
			=&2\bigg(\sum_{k<l}\Big([f(x),\vec{\lambda}]_{k,l,k}^{[2]}+[f(x),\vec{\lambda}]_{l,k,l}^{[2]}\Big)\abs{\mel{\phi_k}{\nu}{\phi_l}}^2+\sum_{k}[f(x),\vec{\lambda}]_{k,k,k}^{[2]}\abs{\mel{\phi_k}{\nu}{\phi_k}}^2\bigg),
		\end{split}
	\end{equation*}
	where the final line splits the sum apart and rearranges indices. Recognizing that $[f(x),\vec{\lambda}]^{[1]}_{k,l}=[f(x),\vec{\lambda}]^{[1]}_{l,k}$, we get a cancellation of first divided differences in the first sum and find
	\begin{equation*}
		\begin{split}
			\Tr\big[D_{f(x)}^{[2]}(\rho_0,\nu)\Big]=&2\bigg(\sum_{k<l}\Big(\frac{f'(\lambda_k)}{\lambda_k-\lambda_l}+\frac{f'(\lambda_l)}{\lambda_l-\lambda_k}\Big)\abs{\mel{\phi_k}{\nu}{\phi_l}}^2+\frac{1}{2}\sum_{k}f''(\lambda_k)\abs{\mel{\phi_k}{\nu}{\phi_k}}^2\bigg)\\
			=&2\sum_{k<l}[f'(x),\vec{\lambda}]^{[1]}_{k,l}\abs{\mel{\phi_k}{\nu}{\phi_l}}^2+\sum_{k}[f'(x),\vec{\lambda}]^{[1]}_{k,k}\abs{\mel{\phi_k}{\nu}{\phi_k}}^2\\
			=&\sum_{k,l} [f'(x),\vec{\lambda}]_{k,l}^{1]}\abs{\mel{\phi_k}{\nu}{\phi_l}}^2\\
			=&\Tr\Big[\sum_{i,j}\mel{\phi_i}{\nu}{\phi_j}\outerproduct{\phi_i}{\phi_j}\sum_{k,l} [f'(x),\vec{\lambda}]_{k,l}^{[1]}\mel{\phi_k}{\nu}{\phi_l}\outerproduct{\phi_k}{\phi_l}\Big]\\
			=&\Tr\Big[\nu L_{f'(x)}(\rho_0,\nu)\Big],
		\end{split}
	\end{equation*}
	where the last equality uses the quantum-mechanical definition of the Fr\'echet derivative (Eq.~\ref{eq:FrechetDerivativeQuantum}).
\end{proof}

First, we find a second-order expansion for the von Neumann entropy. 

\begin{theorem}
	For a perturbed quantum state $\rho=\rho_0+\nu$ on $\mathcal{H}$ where $\rho_0=\sum_i \lambda_i \outerproduct{\phi_i}{\phi_i}$ is a density operator on $\mathcal{H}$ and $\nu$ is a zero-trace state perturbation on $\mathcal{H}$ whose support is a subspace of the support of $\rho_0$, the von Neumann entropy of $\rho$ is given by
	\begin{equation}
		S(\rho)=S(\rho_0)-\Tr\big[\nu\log(\rho_0)\big]-\frac{1}{2}\Tr\big[\nu L_{\log(x)}(\rho_0,\nu)\big]+O(\norm{\nu}^3).
		\label{eq:VonNeumannEntropySupportPreserving}
	\end{equation}
	\label{thm:VonNeumannEntropySupportPreserving}
\end{theorem}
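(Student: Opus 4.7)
The plan is to apply Theorem~\ref{thm:QuantumDaleckii-Krein2} to the primary matrix function $f(x)=x\log(x)$ and then invoke the two trace lemmas just proved (Lemmas~\ref{lem:Frechet1LemmaEntropies} and \ref{lem:Frechet2LemmaEntropies}) to collapse the Fr\'echet and second matrix derivatives into the compact form appearing in Eq.~\ref{eq:VonNeumannEntropySupportPreserving}. Since $\supp(\nu)\subseteq\supp(\rho_0)$, the whole calculation effectively lives in $\supp(\rho_0)$, where the eigenvalues of $\rho_0$ are strictly positive, so $f(x)=x\log(x)$ is $C^{\infty}$ at $\vec\lambda$ and the hypothesis of Theorem~\ref{thm:QuantumDaleckii-Krein2} is satisfied.

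First I would note that $S(\rho)=-\Tr[f(\rho)]$ and apply Theorem~\ref{thm:QuantumDaleckii-Krein2} to get
\begin{equation*}
-S(\rho)=\Tr[f(\rho_0)]+\Tr\big[L_{f(x)}(\rho_0,\nu)\big]+\tfrac{1}{2}\Tr\big[D^{[2]}_{f(x)}(\rho_0,\nu)\big]+O(\norm{\nu}^3),
\end{equation*}
with the trivial identification $\Tr[f(\rho_0)]=-S(\rho_0)$. Next, compute $f'(x)=\log(x)+1$ and apply Lemma~\ref{lem:Frechet1LemmaEntropies} to write $\Tr[L_{f(x)}(\rho_0,\nu)]=\Tr[\nu(\log(\rho_0)+I_{\supp(\rho_0)})]=\Tr[\nu\log(\rho_0)]+\Tr[\nu]$, and use the hypothesis that $\nu$ is zero-trace to drop the second term, yielding $\Tr[\nu\log(\rho_0)]$ exactly.

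For the second-order term, apply Lemma~\ref{lem:Frechet2LemmaEntropies} to obtain $\Tr[D^{[2]}_{f(x)}(\rho_0,\nu)]=\Tr[\nu L_{f'(x)}(\rho_0,\nu)]$. Here I would observe that $f'(x)=\log(x)+1$ differs from $\log(x)$ only by an additive constant, which contributes zero to any first divided difference (both the off-diagonal quotients and the diagonal derivatives vanish for constants), so $L_{f'(x)}(\rho_0,\nu)=L_{\log(x)}(\rho_0,\nu)$. Putting the pieces together and negating gives Eq.~\ref{eq:VonNeumannEntropySupportPreserving}.

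The only real subtlety — and the reason I would spell it out rather than just invoke Theorem~\ref{thm:QuantumDaleckii-Krein2} in one line — is the regularity at $x=0$ when $\rho_0$ is rank-deficient: $x\log(x)$ is not $C^6$ at the origin, so one must justify that working within $\supp(\rho_0)$ is legitimate. This follows because $\nu$ has no matrix elements in $\ker(\rho_0)$, so the relevant eigenvalue set in Theorem~\ref{thm:QuantumDaleckii-Krein2} is $\{\lambda_i>0\}$, on which $f$ is smooth. Once that is noted, every other step is a direct algebraic substitution from the two lemmas.
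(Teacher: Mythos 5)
Your proposal is correct and follows essentially the same route as the paper: apply Theorem~\ref{thm:QuantumDaleckii-Krein2} with $f(x)=x\log(x)$, use Lemma~\ref{lem:Frechet1LemmaEntropies} with $\Tr[\nu]=0$ for the first-order term, and Lemma~\ref{lem:Frechet2LemmaEntropies} with $L_{1+\log(x)}=L_{\log(x)}$ for the second-order term. Your explicit remarks justifying the restriction to $\supp(\rho_0)$ (where $x\log(x)$ is smooth) and the vanishing of divided differences of constants are sound additions that the paper leaves implicit.
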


\begin{proof}
	We start by applying Theorem~\ref{thm:QuantumDaleckii-Krein2} with $f(x)=x\log(x)$ to expand the argument of the trace in the definition of the von Neumann entropy given in Eq.~\ref{eq:VonNeumannEntropy}. The zeroth-order term in $\norm{\nu}$ in Eq. \ref{eq:VonNeumannEntropySupportPreserving} is trivial. For the first-order term, we use Lemma~\ref{lem:Frechet1LemmaEntropies} to evaluate the Fr\'echet derivative
	\begin{equation*}
		\begin{split}
			\Tr\big[L_{x\log(x)}(\rho_0,\nu)\big]
			=&\Tr\big[\nu\big(\mathcal{I}+\log(\rho_0)\big)\big]\\
			=&\Tr\big[\nu\log(\rho_0)\big],
		\end{split}
	\end{equation*}
	where we have made use of the property $\Tr[\nu]=0$ for matrix perturbations. Using Lemma~\ref{lem:Frechet2LemmaEntropies}, the second-order term in $\norm{\nu}$ becomes
	\begin{equation*}
		\begin{split}
			\Tr\Big[D_{x\log(x)}^{[2]}(\rho_0,\nu)\Big]=&\Tr\Big[\nu L_{1+\log(x)}(\rho_0,\nu)\Big]\\
			=&\Tr\Big[\nu L_{\log(x)}(\rho_0,\nu)\Big].
		\end{split}
	\end{equation*}
\end{proof}

This second-order perturbation theory for the von Neumann entropy that gives the same result as Eq.~\ref{eq:VonNeumannEntropy} has been developed previously \cite{Chen2010,Rodrigues2019}. However, our expression is more compact, it is straightforwardly derived from matrix calculus, and it elegantly connects the second-order behavior of the entropy to the geometry of the perturbing operator $\nu$ on the zeroth-order state $\rho_0$ through concept of the Fr\'echet derivative.

In the next three theorems we find series expansions for three different quantities that relate two quantum states $\rho_1$ and $\rho_2$ that differ by two different small perturbations $\nu_1$ and $\nu_2$. We find that, similarly to the von Neumann entropy, second-order expansions for distance measures between perturbed states can be analytically evaluated without diagonalization of the perturbations, requiring only the eigenvalues and eigenvectors of $\rho_0$ and the matrix elements $\mel{\phi_k}{\nu_1-\nu_2}{\phi_l}$ in the eigenbasis of $\rho_0$. We will need the following lemmas.

\begin{lemma}
	Let $\rho_0=\sum_i \lambda_i \outerproduct{\phi_i}{\phi_i}$ and $\nu$ be operators on $\mathcal{H}$, where the support of $\nu$ is a subspace of the support of $\rho_0$. If $f(x)$ is a primary matrix function that is $C^1$ at the eigenvalues $\vec{\lambda}$, 
	\begin{equation}
		\Tr\Big[f'(\rho_0)^{-1}L_{f(x)}(\rho_0,\nu)\Big]=\Tr[\nu].
		\label{eq:Frechet1LemmaDistances}
	\end{equation}
	\label{lem:Frechet1LemmaDistances}
\end{lemma}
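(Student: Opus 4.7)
The plan is to prove this by a direct eigenbasis computation that closely mirrors the proof of Lemma~\ref{lem:Frechet1LemmaEntropies}. The cancellation is simple: the diagonal entries of the first divided difference tensor are exactly $f'(\lambda_k)$, which will be killed by the multiplication by $f'(\rho_0)^{-1}$, and only the diagonal matrix elements of $\nu$ survive the trace, summing to $\Tr[\nu]$.

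First, I would write $f'(\rho_0)^{-1}$ as a primary matrix function in the eigenbasis of $\rho_0$, namely $\sum_i f'(\lambda_i)^{-1}\outerproduct{\phi_i}{\phi_i}$, understood as a pseudoinverse supported on $\supp(\rho_0)$. This is well-defined whenever $f'(\lambda_i) \neq 0$ on the support of $\rho_0$, which is the implicit assumption needed for the statement to make sense (and indeed holds for the primary matrix functions $x\log(x)$, $\sqrt{x}$, and $x^s$ relevant to the subsequent theorems). The support-preserving hypothesis $\supp(\nu) \subseteq \supp(\rho_0)$ ensures that $\mel{\phi_k}{\nu}{\phi_k}=0$ whenever $\lambda_k = 0$, so that no kernel contributions to the right-hand side are lost when replacing $f'(\rho_0)$ by its pseudoinverse.

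Next, I would substitute the eigenbasis form of the Fr\'echet derivative from Eq.~\ref{eq:FrechetDerivativeQuantum} and multiply on the left by $f'(\rho_0)^{-1}$, yielding
\begin{equation*}
f'(\rho_0)^{-1}L_{f(x)}(\rho_0,\nu) = \sum_{k,l} f'(\lambda_k)^{-1}\,[f(x),\vec{\lambda}]^{[1]}_{k,l}\,\mel{\phi_k}{\nu}{\phi_l}\outerproduct{\phi_k}{\phi_l}.
\end{equation*}
Taking the trace picks out the $k=l$ terms. By the second branch of the definition in Eq.~\ref{eq:DividedDifference1}, the diagonal divided differences satisfy $[f(x),\vec{\lambda}]^{[1]}_{k,k} = f'(\lambda_k)$, so the factors of $f'(\lambda_k)$ cancel against $f'(\lambda_k)^{-1}$, leaving
\begin{equation*}
\Tr\Big[f'(\rho_0)^{-1}L_{f(x)}(\rho_0,\nu)\Big] = \sum_k \mel{\phi_k}{\nu}{\phi_k} = \Tr[\nu].
\end{equation*}

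I do not expect any genuine obstacle here; the only subtlety is making the pseudoinverse interpretation of $f'(\rho_0)^{-1}$ explicit and invoking the support-preserving condition to justify that the diagonal matrix elements of $\nu$ corresponding to zero eigenvalues of $\rho_0$ vanish. Everything else is a one-line cancellation.
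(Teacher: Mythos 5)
Your proposal is correct and follows essentially the same route as the paper's proof: expand $f'(\rho_0)^{-1}$ and the Fr\'echet derivative in the eigenbasis of $\rho_0$, take the trace to isolate the diagonal terms, and cancel $f'(\lambda_k)^{-1}$ against the diagonal divided difference $[f(x),\vec{\lambda}]^{[1]}_{k,k}=f'(\lambda_k)$. Your added remarks on the pseudoinverse interpretation and the role of the support condition are reasonable clarifications of points the paper leaves implicit, but they do not change the argument.
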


\begin{proof}
	The proof of Eq.~\ref{eq:Frechet1LemmaDistances} parallels the proof given for Lemma~\ref{lem:Frechet2LemmaEntropies}:
	\begin{equation*}
		\begin{split}
			\Tr\big[f'(\rho_0)^{-1}L_{f(x)}(\rho_0,\nu)\big]=&\Tr\Big[\sum_{i}f'(\lambda_i)^{-1}\outerproduct{\phi_i}{\phi_i}\sum_{k,l}[f(x),\vec{\lambda}]^{[1]}_{k,l}\mel{\phi_k}{\nu}{\phi_l}\outerproduct{\phi_k}{\phi_l}\Big]\\
			=&\Tr\Big[\sum_{k}f'(\lambda_k)^{-1}f'(\lambda_k)\mel{\phi_k}{\nu}{\phi_k}\outerproduct{\phi_k}{\phi_k}\Big]\\
			=&\Tr[\nu].
		\end{split}
	\end{equation*}
\end{proof}

\begin{lemma}
	Let $\rho_0=\sum_i \lambda_i \outerproduct{\phi_i}{\phi_i}$ and $\nu$ be operators on $\mathcal{H}$, where the support of $\nu$ is a subspace of the support of $\rho_0$. If $f(x)$ is a primary matrix function that is $C^2$ at the eigenvalues $\vec{\lambda}$, 
	\begin{equation}
		\Tr\Big[f'(\rho_0)^{-1}D_{f(x)}^{[2]}(\rho_0,\nu)\Big]=-\Tr\Big[L_{f'(x)^{-1}}(\rho_0,\nu) L_{f(x)}(\rho_0,\nu)\Big].
		\label{eq:Frechet2LemmaDistances}
	\end{equation}
	\label{lem:Frechet2LemmaDistances}
\end{lemma}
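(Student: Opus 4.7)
The plan is to reduce both sides of Eq.~\ref{eq:Frechet2LemmaDistances} to sums of the form $\sum_{k,l}c_{k,l}\abs{\mel{\phi_k}{\nu}{\phi_l}}^2$ by expanding everything in the eigenbasis of $\rho_0$, and then to verify the resulting scalar identity by a case split on whether $\lambda_k=\lambda_l$. Hermiticity of $\nu$ and the symmetry $[\cdot,\vec{\lambda}]^{[1]}_{k,l}=[\cdot,\vec{\lambda}]^{[1]}_{l,k}$ will let me symmetrize freely in the indices $(k,l)$.

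First I would expand the LHS: inserting $f'(\rho_0)^{-1}=\sum_i f'(\lambda_i)^{-1}\outerproduct{\phi_i}{\phi_i}$ together with Eq.~\ref{eq:2ndDerivativeQuantum}, the outer trace forces $m=k$ via orthonormality of the $\ket{\phi_i}$, giving
\begin{equation*}
	\Tr\big[f'(\rho_0)^{-1}D_{f(x)}^{[2]}(\rho_0,\nu)\big]=2\sum_{k,l}f'(\lambda_k)^{-1}[f(x),\vec{\lambda}]^{[2]}_{k,l,k}\abs{\mel{\phi_k}{\nu}{\phi_l}}^2.
\end{equation*}
For the RHS I would apply Eq.~\ref{eq:FrechetDerivativeQuantum} twice; the product of Fréchet derivatives contracts through the shared middle index via $\outerproduct{\phi_k}{\phi_l}\outerproduct{\phi_{k'}}{\phi_{l'}}=\delta_{lk'}\outerproduct{\phi_k}{\phi_{l'}}$, and after using symmetry of the first divided difference together with $\mel{\phi_l}{\nu}{\phi_k}=\overline{\mel{\phi_k}{\nu}{\phi_l}}$ the trace becomes
\begin{equation*}
	\Tr\big[L_{f'(x)^{-1}}(\rho_0,\nu)L_{f(x)}(\rho_0,\nu)\big]=\sum_{k,l}[f'(x)^{-1},\vec{\lambda}]^{[1]}_{k,l}[f(x),\vec{\lambda}]^{[1]}_{k,l}\abs{\mel{\phi_k}{\nu}{\phi_l}}^2.
\end{equation*}

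Since $\abs{\mel{\phi_k}{\nu}{\phi_l}}^2$ is symmetric under $k\leftrightarrow l$, the prefactor $2$ on the LHS may be split symmetrically, and the lemma reduces to the pointwise identity
\begin{equation*}
	f'(\lambda_k)^{-1}[f(x),\vec{\lambda}]^{[2]}_{k,l,k}+f'(\lambda_l)^{-1}[f(x),\vec{\lambda}]^{[2]}_{l,k,l}=-[f'(x)^{-1},\vec{\lambda}]^{[1]}_{k,l}[f(x),\vec{\lambda}]^{[1]}_{k,l}.
\end{equation*}
For $\lambda_k\neq\lambda_l$, the middle branch of Eq.~\ref{eq:DividedDifference2} gives $[f(x),\vec{\lambda}]^{[2]}_{k,l,k}=(f'(\lambda_k)-[f(x),\vec{\lambda}]^{[1]}_{k,l})/(\lambda_k-\lambda_l)$ and an analogous expression with $k\leftrightarrow l$; dividing by $f'(\lambda_k)$ and $f'(\lambda_l)$ respectively and summing, the two constant $\pm 1/(\lambda_k-\lambda_l)$ contributions cancel, and the remainder factors as $[f(x),\vec{\lambda}]^{[1]}_{k,l}\cdot(1/f'(\lambda_l)-1/f'(\lambda_k))/(\lambda_k-\lambda_l)$, which is exactly $-[f(x),\vec{\lambda}]^{[1]}_{k,l}[f'(x)^{-1},\vec{\lambda}]^{[1]}_{k,l}$. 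For the degenerate case $\lambda_k=\lambda_l$, the third branch gives $[f(x),\vec{\lambda}]^{[2]}_{k,k,k}=\tfrac{1}{2}f''(\lambda_k)$, and combined with $(1/f')'(\lambda_k)=-f''(\lambda_k)/f'(\lambda_k)^2$ both sides collapse to $f''(\lambda_k)/f'(\lambda_k)$.

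The main obstacle is purely bookkeeping rather than conceptual: the coefficient identity is simply the divided-difference lift of the scalar chain rule $\tfrac{d}{dx}\tfrac{1}{f'(x)}=-\tfrac{f''(x)}{f'(x)^2}$, so the real work lies in tracking signs of the $(\lambda_k-\lambda_l)^{-1}$ factors under the swap $k\leftrightarrow l$ and handling the two branches of Eq.~\ref{eq:DividedDifference2} without arithmetic slips.
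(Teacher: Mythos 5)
Your proof is correct and follows essentially the same route as the paper's: expand both traces in the eigenbasis of $\rho_0$, reduce the lemma to a coefficient identity in the divided differences, and verify that identity by the case split $\lambda_k\neq\lambda_l$ versus $\lambda_k=\lambda_l$ using $(1/f')'=-f''/(f')^2$. The only cosmetic differences are that you symmetrize over all $(k,l)$ rather than splitting into $k<l$ plus diagonal terms, and your bookkeeping consistently carries the overall minus sign that the paper's own final displayed line momentarily drops.
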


\begin{proof}
	Applying Eqs.~\ref{eq:QuantumDaleckii-Krein2}, \ref{eq:FrechetDerivativeQuantum} and \ref{eq:2ndDerivativeQuantum}, we have
	\begin{equation*}
		\begin{split}
			\Tr\Big[f'(\rho_0)^{-1}D_{f(x)}^{[2]}(\rho_0,\nu)\Big]=&2\Tr\Big[\sum_{i}\frac{1}{f'(\lambda_i)}\outerproduct{\phi_i}{\phi_i}\sum_{k,l,m} [f(x),\vec{\lambda}]_{k,l,m}^{[2]}\mel{\phi_k}{\nu}{\phi_l}\mel{\phi_l}{\nu}{\phi_m}\outerproduct{\phi_k}{\phi_m}\Big]\\
			=&2\sum_{k,l} \frac{1}{f'(\lambda_k)}[f(x),\vec{\lambda}]_{k,l,k}^{[2]}\abs{\mel{\phi_k}{\nu}{\phi_l}}^2\\
			=&2\bigg(\sum_{k<l}\Big(\frac{1}{f'(\lambda_k)}[f(x),\vec{\lambda}]_{k,l,k}^{[2]}+\frac{1}{f'(\lambda_l)}[f(x),\vec{\lambda}]_{l,k,l}^{[2]}\Big)\abs{\mel{\phi_k}{\nu}{\phi_l}}^2+\sum_{l}\frac{1}{f'(\lambda_l)}[f(x),\vec{\lambda}]_{l,l,l}^{[2]}\abs{\mel{\phi_l}{\nu}{\phi_l}}^2\bigg).
		\end{split}
	\end{equation*}
	Here the terms containing derivatives are canceled in the first sum, which were the ones that remained in the proof of Lemma~\ref{lem:Frechet2LemmaEntropies}. We then use the identity $\partial/\partial\lambda_l f'(\lambda_l)^{-1}=f''(\lambda_l)/f'(\lambda_l)^2$ to find
	\begin{equation*}
		\begin{split}
			\Tr\Big[f'(\rho_0)^{-1}D_{f(x)}^{[2]}(\rho_0,\nu)\Big]=&2\bigg(\sum_{k<l}\Big(-\frac{f'(\lambda_k)^{-1}}{\lambda_k-\lambda_l}-\frac{f'(\lambda_l)^{-1}}{\lambda_l-\lambda_k}\Big)[f(x),\vec{\lambda}]_{k,l}^{[1]}\abs{\mel{\phi_k}{\nu}{\phi_l}}^2+\frac{1}{2}\sum_{l}\frac{f''(\lambda_l)}{f'(\lambda_l)}\abs{\mel{\phi_l}{\nu}{\phi_l}}^2\bigg)\\
			=&-\sum_{k,l}[f'(x)^{-1},\vec{\lambda}]_{k,l}^{[1]}[f(x),\vec{\lambda}]_{k,l}^{[1]}\abs{\mel{\phi_k}{\nu}{\phi_l}}^2\\
			=&\Tr\Big[\sum_{i,j}[f'(x)^{-1},\vec{\lambda}]_{k,l}^{[1]}\mel{\phi_i}{\nu}{\phi_j}\outerproduct{\phi_i}{\phi_j}\sum_{k,l} [f(x),\vec{\lambda}]_{k,l}^{[1]}\mel{\phi_k}{\nu}{\phi_l}\outerproduct{\phi_k}{\phi_l}\Big]\\
			=&\Tr\Big[L_{f'(x)^{-1}}(\rho_0,\nu) L_{f(x)}(\rho_0,\nu)\Big].
		\end{split}
	\end{equation*}
\end{proof}

\begin{lemma}
	Let $\rho_0=\sum_i \lambda_i \outerproduct{\phi_i}{\phi_i}$, and let $\nu_1$ and $\nu_2$ be operators on $\mathcal{H}$, where the support of $\nu_1$ and $\nu_2$ are each subspaces of the support of $\rho_0$. If $f(x)$ and $g(x)$ are primary matrix functions that are $C^1$ at the eigenvalues $\vec{\lambda}$, 
	\begin{equation}
		\Tr\big[L_{f(x)}(\rho_0,\nu_1) L_{g(x)}(\rho_0,\nu_1)-2L_{f(x)}(\rho_0,\nu_1) L_{g(x)}(\rho_0,\nu_2)+L_{f(x)}(\rho_0,\nu_2) L_{g(x)}(\rho_0,\nu_2)\big]=\Tr\big[L_{f(x)}(\rho_0,\nu_1-\nu_2) L_{g(x)}(\rho_0,\nu_1-\nu_2)\big].
		\label{eq:Frechet1Lemmafg}
	\end{equation}
	\label{lem:Frechet1Lemmafg}
\end{lemma}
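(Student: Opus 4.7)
The plan is to exploit the linearity of the Fréchet derivative in its second (directional) argument together with a symmetry identity on the cross terms. From the eigenbasis formula in Eq.~\ref{eq:FrechetDerivativeQuantum}, $L_{f(x)}(\rho_0,\nu)$ is linear in $\nu$, so I would immediately write
\begin{equation*}
L_{f(x)}(\rho_0,\nu_1-\nu_2)=L_{f(x)}(\rho_0,\nu_1)-L_{f(x)}(\rho_0,\nu_2),
\end{equation*}
and similarly for $g$. Expanding the product $L_{f(x)}(\rho_0,\nu_1-\nu_2)\,L_{g(x)}(\rho_0,\nu_1-\nu_2)$ on the right-hand side then yields four terms whose traces match the four terms on the left-hand side, \emph{provided} the two cross terms can be identified, i.e.
\begin{equation*}
\Tr\big[L_{f(x)}(\rho_0,\nu_1)\,L_{g(x)}(\rho_0,\nu_2)\big]=\Tr\big[L_{f(x)}(\rho_0,\nu_2)\,L_{g(x)}(\rho_0,\nu_1)\big].
\end{equation*}
This equality is the only substantive step; I would treat it as a short sublemma.

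To establish it, I would expand both sides in the eigenbasis of $\rho_0$ using Eq.~\ref{eq:FrechetDerivativeQuantum}. Writing $M^h_{k,l}\equiv[h(x),\vec{\lambda}]^{[1]}_{k,l}$ and $\nu^{(i)}_{k,l}\equiv\mel{\phi_k}{\nu_i}{\phi_l}$, the orthonormality of $\{\ket{\phi_k}\}$ collapses the four-index sum arising from $\Tr[|\phi_k\rangle\langle\phi_l|\phi_m\rangle\langle\phi_n|]=\delta_{lm}\delta_{nk}$ down to
\begin{equation*}
\Tr\big[L_{f(x)}(\rho_0,\nu_1)L_{g(x)}(\rho_0,\nu_2)\big]=\sum_{k,l}M^f_{k,l}\,M^g_{l,k}\,\nu^{(1)}_{k,l}\,\nu^{(2)}_{l,k}.
\end{equation*}
Because the first divided difference is symmetric in its indices ($M^g_{l,k}=M^g_{k,l}$), relabeling $k\leftrightarrow l$ in the analogous expression for $\Tr[L_{f(x)}(\rho_0,\nu_2)L_{g(x)}(\rho_0,\nu_1)]$ produces exactly the same sum, proving the cross-term identity. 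Collecting the four traces then gives the claimed equation with the factor of $2$ on the cross term.

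The only step that requires any care is the cross-term identity: it is tempting to derive it from the cyclic property of the trace alone, but cyclicity gives $\Tr[L_{f(x)}(\rho_0,\nu_2)L_{g(x)}(\rho_0,\nu_1)]=\Tr[L_{g(x)}(\rho_0,\nu_1)L_{f(x)}(\rho_0,\nu_2)]$, which still swaps the roles of $f$ and $g$ relative to the desired form. The actual reason the two cross traces agree is the structural symmetry of the divided-difference matrix $[f(x),\vec{\lambda}]^{[1]}$, which is most transparent when written in the $\rho_0$-eigenbasis. All other steps — linearity of the Fréchet derivative, bilinear expansion of the product, and pairing the four terms — are algebraic bookkeeping that reduces directly to Eq.~\ref{eq:FrechetDerivativeQuantum}.
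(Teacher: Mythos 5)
Your proposal is correct and follows essentially the same route as the paper: the paper likewise isolates the cross-term identity $\Tr[L_{f(x)}(\rho_0,\nu_1)L_{g(x)}(\rho_0,\nu_2)]=\Tr[L_{f(x)}(\rho_0,\nu_2)L_{g(x)}(\rho_0,\nu_1)]$ and proves it by expanding in the $\rho_0$-eigenbasis, swapping $k\leftrightarrow l$, and invoking the symmetry of the first divided difference matrix, before collecting the four terms into $\abs{\mel{\phi_k}{\nu_1-\nu_2}{\phi_l}}^2$. Your explicit appeal to linearity of the Fr\'echet derivative in its direction argument, and your caveat that cyclicity of the trace alone does not suffice, are accurate refinements of the same argument.
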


\begin{proof}
	We start by explicitly writing out the matrix elements of the middle term of Eq.~\ref{eq:Frechet1Lemmafg} using Eq.~\ref{eq:FrechetDerivativeQuantum}:
	\begin{equation*}
		\begin{split}
			\Tr\big[L_{f(x)}(\rho_0,\nu_1) L_{g(x)}(\rho_0,\nu_2)\big]=&\Tr\Big[\sum_{i,j}[f(x),\vec{\lambda}]_{i,j}^{[1]}\mel{\phi_i}{\nu_1}{\phi_j}\outerproduct{\phi_i}{\phi_j}\sum_{k,l}[g(x),\vec{\lambda}]_{k,l}^{[1]}\mel{\phi_k}{\nu_2}{\phi_l}\outerproduct{\phi_k}{\phi_l}\Big]\\
			=&\sum_{k,l}[f(x),\vec{\lambda}]_{l,k}^{[1]}[g(x),\vec{\lambda}]_{k,l}^{[1]}\mel{\phi_l}{\nu_1}{\phi_k}\mel{\phi_k}{\nu_2}{\phi_l}\\
			=&\sum_{k,l}[f(x),\vec{\lambda}]_{l,k}^{[1]}[g(x),\vec{\lambda}]_{k,l}^{[1]}\mel{\phi_l}{\nu_2}{\phi_k}\mel{\phi_k}{\nu_1}{\phi_l}\\
			=&\Tr\big[L_{f(x)}(\rho_0,\nu_2)L_{g(x)}(\rho_0,\nu_1)\big],
		\end{split}
	\end{equation*}
	where in the third line we swap the indices $k$ and $l$ and use the symmetry of the divided difference matrix (Eq.~\ref{eq:DividedDifference1}). We can then prove Eq.~\ref{eq:Frechet1Lemmafg}:
	\begin{equation*}
		\begin{split}
			\Tr\big[&L_{f(x)}(\rho_0,\nu_1) L_{g(x)}(\rho_0,\nu_1)-2L_{f(x)}(\rho_0,\nu_1) L_{g(x)}(\rho_0,\nu_2)+L_{f(x)}(\rho_0,\nu_2) L_{g(x)}(\rho_0,\nu_2)\big]\\
			=&\Tr\big[L_{f(x)}(\rho_0,\nu_1) L_{g(x)}(\rho_0,\nu_1)-L_{f(x)}(\rho_0,\nu_1) L_{g(x)}(\rho_0,\nu_2)-L_{f(x)}(\rho_0,\nu_2) L_{g(x)}(\rho_0,\nu_1)+L_{f(x)}(\rho_0,\nu_2) L_{g(x)}(\rho_0,\nu_2)\big]\\
			=&\sum_{k,l}[f(x),\vec{\lambda}]_{l,k}^{[1]}[g(x),\vec{\lambda}]_{k,l}^{[1]}\Big(\abs{\mel{\phi_k}{\nu_1}{\phi_l}}^2-\mel{\phi_l}{\nu_1}{\phi_k}\mel{\phi_k}{\nu_2}{\phi_l}-\mel{\phi_l}{\nu_2}{\phi_k}\mel{\phi_k}{\nu_1}{\phi_l}+\abs{\mel{\phi_k}{\nu_2}{\phi_l}}^2\Big)\\
			=&\sum_{k,l}[f(x),\vec{\lambda}]_{l,k}^{[1]}[g(x),\vec{\lambda}]_{k,l}^{[1]}\abs{\mel{\phi_k}{\nu_1-\nu_2}{\phi_l}}^2\\
			=&\Tr\big[L_{f(x)}(\rho_0,\nu_1-\nu_2) L_{g(x)}(\rho_0,\nu_1-\nu_2)\big].
		\end{split}
	\end{equation*}
\end{proof}

We begin with the quantum relative entropy of $\rho_1$ with respect to $\rho_2$. It is important to note that the QRE is only well defined when $\textrm{supp}(\rho_1)\subseteq\textrm{supp}(\rho_2)$ \cite{Wilde2013}, and we assume this condition is satisfied throughout the remainder of the paper.
\begin{theorem}
	Let $\rho_1=\rho_0+\nu_1$ and $\rho_2=\rho_0+\nu_2$ be two unit-trace quantum states on $\mathcal{H}$, where $\nu_1$ and $\nu_2$ are each an arbitrary zero-trace state perturbation on $\mathcal{H}$ with support that is a subspace of the support of $\rho_0$. The quantum relative entropy of $\rho_1$ with respect to $\rho_2$ is
	\begin{equation}
		D(\rho_1||\rho_2)= \frac{1}{2}\Tr[(\nu_1-\nu_2)L_{\log(x)}(\rho_0,\nu_1-\nu_2)]+O\big(\max(\norm{\nu_1},\norm{\nu_2})^3\big).
		\label{eq:RelativeEntropySupportPreserving}
	\end{equation}
	\label{thm:RelativeEntropySupportPreserving}
\end{theorem}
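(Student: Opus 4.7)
My plan is to directly expand both $\log(\rho_1)$ and $\log(\rho_2)$ via Theorem~\ref{thm:QuantumDaleckii-Krein2} applied to $f(x)=\log(x)$ (which is $C^{\infty}$ on $\supp(\rho_0)$ by the support-preservation hypothesis), and then substitute into the definition $D(\rho_1\|\rho_2)=\Tr[\rho_1\log\rho_1]-\Tr[\rho_1\log\rho_2]$. For each $i\in\{1,2\}$ I would write
\begin{equation*}
\log(\rho_i)=\log(\rho_0)+L_{\log(x)}(\rho_0,\nu_i)+\tfrac{1}{2}D^{[2]}_{\log(x)}(\rho_0,\nu_i)+O(\norm{\nu_i}^3),
\end{equation*}
and then split each trace as $\Tr[\rho_1\log\rho_i]=\Tr[\rho_0\log\rho_i]+\Tr[\nu_1\log\rho_i]$. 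The $\Tr[\rho_0\log\rho_0]$ contributions cancel between the two terms of $D(\rho_1\|\rho_2)$, and the $\Tr[\nu_1\log\rho_0]$ contributions likewise cancel.

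Next I would evaluate the remaining pieces with the three lemmas. For the $\Tr[\rho_0\cdot\,]$ terms, since $f(x)=\log(x)$ gives $f'(x)^{-1}=x$, Lemma~\ref{lem:Frechet1LemmaDistances} yields $\Tr[\rho_0 L_{\log(x)}(\rho_0,\nu_i)]=\Tr[\nu_i]=0$, killing those first-order pieces, while Lemma~\ref{lem:Frechet2LemmaDistances} (using $L_{f'(x)^{-1}}(\rho_0,\nu_i)=L_{x}(\rho_0,\nu_i)=\nu_i$) gives $\Tr[\rho_0 D^{[2]}_{\log(x)}(\rho_0,\nu_i)]=-\Tr[\nu_i L_{\log(x)}(\rho_0,\nu_i)]$. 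For the $\Tr[\nu_1\cdot\,]$ pieces, the zero-trace property again kills $\Tr[\nu_1\log\rho_0]$, and since $\nu_1$ already contributes one power of the norm, only the first-order expansion of $\log(\rho_i)$ is needed at the target order, leaving $\Tr[\nu_1 L_{\log(x)}(\rho_0,\nu_1-\nu_2)]$ after subtraction and using linearity of $L$ in its second argument.

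Collecting everything produces
\begin{equation*}
D(\rho_1\|\rho_2)=\tfrac{1}{2}\Tr[\nu_1 L_{\log(x)}(\rho_0,\nu_1)]+\tfrac{1}{2}\Tr[\nu_2 L_{\log(x)}(\rho_0,\nu_2)]-\Tr[\nu_1 L_{\log(x)}(\rho_0,\nu_2)]+O\!\big(\max(\norm{\nu_1},\norm{\nu_2})^3\big),
\end{equation*}
where I would use the cross-symmetry $\Tr[\nu_1 L_{\log(x)}(\rho_0,\nu_2)]=\Tr[\nu_2 L_{\log(x)}(\rho_0,\nu_1)]$ established inside the proof of Lemma~\ref{lem:Frechet1Lemmafg} to symmetrize the cross term. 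The final step is to apply Lemma~\ref{lem:Frechet1Lemmafg} with $f(x)=x$ and $g(x)=\log(x)$, noting that $L_{x}(\rho_0,\nu)=\nu$, which packages the three quadratic traces into the single compact expression $\tfrac{1}{2}\Tr[(\nu_1-\nu_2)L_{\log(x)}(\rho_0,\nu_1-\nu_2)]$.

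The main obstacle I anticipate is the careful bookkeeping of orders: $\nu_1$ and $\nu_2$ have independent small norms, so the residual must be tracked as $O(\max(\norm{\nu_1},\norm{\nu_2})^3)$ rather than a single $O(\norm{\nu}^3)$, and I must be sure that the terms I drop when truncating the expansion of $\log(\rho_i)$ inside $\Tr[\nu_1\log\rho_i]$ really do land in this residual (they do, because $\nu_1$ contributes one norm factor and the $D^{[2]}$ piece contributes two). The other delicate point is verifying that the mixed-direction first-order terms recombine correctly; this is precisely the content of Lemma~\ref{lem:Frechet1Lemmafg}, so the proof's non-trivial content is already isolated there.
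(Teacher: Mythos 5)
Your proposal is correct and follows essentially the same route as the paper's proof: expand $\log(\rho_i)$ via Theorem~\ref{thm:QuantumDaleckii-Krein2}, kill the first-order pieces with Lemma~\ref{lem:Frechet1LemmaDistances} and the zero-trace property, convert $\Tr[\rho_0 D^{[2]}_{\log(x)}(\rho_0,\nu_i)]$ into $-\Tr[\nu_i L_{\log(x)}(\rho_0,\nu_i)]$ via Lemma~\ref{lem:Frechet2LemmaDistances}, and package the three quadratic traces with Lemma~\ref{lem:Frechet1Lemmafg} applied to $f(x)=x$, $g(x)=\log(x)$. The only difference is cosmetic bookkeeping (you split $\rho_1=\rho_0+\nu_1$ before combining $\log\rho_1-\log\rho_2$, the paper after), and your order-tracking of the residual as $O(\max(\norm{\nu_1},\norm{\nu_2})^3)$ matches the paper's.
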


\begin{proof}
	We first apply Theorem~\ref{thm:QuantumDaleckii-Krein2} twice to the definition of the QRE with $f(x)=\log(x)$ in Eq.~\ref{eq:RelativeEntropy}:
	\begin{equation*}
		\begin{split}
			D(\rho_1||\rho_2)=&\Tr\bigg[\rho_1\Big(\log(\rho_0)+L_{\log(x)}(\rho_0,\nu_1)+\frac{1}{2}D_{\log(x)}^{[2]}(\rho_0,\nu_1)+O(\norm{\nu_1}^3)\\
			&-\log(\rho_0)-L_{\log(x)}(\rho_0,\nu_2)-\frac{1}{2}D_{\log(x)}^{[2]}(\rho_0,\nu_2)-O(\norm{\nu_2}^3)\Big)\bigg]\\
			=&\Tr\bigg[\rho_1\bigg(L_{\log(x)}(\rho_0,\nu_1-\nu_2)+\frac{1}{2}\Big(D_{\log(x)}^{[2]}(\rho_0,\nu_1)-D_{\log(x)}^{[2]}(\rho_0,\nu_2)\Big)\bigg)\bigg]+O\big(\max(\norm{\nu_1},\norm{\nu_2})^3\big)\\
			=&\Tr\bigg[\nu_1L_{\log(x)}(\rho_0,\nu_1-\nu_2)+\frac{1}{2}\rho_0\Big(D^{[2]}_{\log(x)}(\rho_0,\nu_1)-D^{[2]}_{\log(x)}(\rho_0,\nu_2)\Big)\bigg]+O\big(\max(\norm{\nu_1},\norm{\nu_2})^3\big),
		\end{split}
	\end{equation*}
	where in the last line we use Lemma~\ref{lem:Frechet1LemmaDistances} and recall that $\Tr[\nu_1]=\Tr[\nu_2]=0$ to cancel the term $\Tr[\rho_0L_{\log(x)}(\rho_0,\nu_1-\nu_2)]=\Tr[\nu_1-\nu_2]$, and where  $\Tr\big[\nu_1\big(D^{[2]}_{\log(x)}(\rho_0,\nu_1)-D^{[2]}_{\log(x)}(\rho_0,\nu_2)\big)\big]$ is $O\big(\max(\norm{\nu_1},\norm{\nu_2})^3\big)$. After two applications of Lemma~\ref{lem:Frechet2LemmaDistances}, we find
	\begin{equation*}
		\begin{split}
			D(\rho_1||\rho_2)=&\Tr\bigg[\nu_1L_{\log(x)}(\rho_0,\nu_1-\nu_2)-\frac{1}{2}\Big(\nu_1L_{\log(x)}(\rho_0,\nu_1)-\nu_2L_{\log(x)}(\rho_0,\nu_2)\Big)\bigg]+O\big(\max(\norm{\nu_1},\norm{\nu_2})^3\big)\\
			=&\frac{1}{2}\Tr\bigg[\nu_1L_{\log(x)}(\rho_0,\nu_1)-2\nu_1L_{\log(x)}(\rho_0,\nu_2)+\nu_2L_{\log(x)}(\rho_0,\nu_2)\bigg]+O\big(\max(\norm{\nu_1},\norm{\nu_2})^3\big)\\
			=&\frac{1}{2}\Tr[(\nu_1-\nu_2)L_{\log(x)}(\rho_0,\nu_1-\nu_2)]+O\big(\max(\norm{\nu_1},\norm{\nu_2})^3\big),
		\end{split}
	\end{equation*}
	where in the last line we use Lemma~\ref{lem:Frechet1Lemmafg} with $f(x)=x$ and $g(x)=\log(x)$.
\end{proof}

We confirm that the first non-zero term in the expansion for the quantum relative entropy, unlike the von Neumann entropy, is second-order \cite{Rodrigues2019}. The perturbations appear in the second-order expansion only through the difference $\nu_1-\nu_2$, reflecting a sort of relativity for the QRE; adding a small, constant operator to both of the two perturbations will not change the QRE. Furthermore, it is easy to see that this second-order term is symmetric between $\rho_1$ and $\rho_2$, which is not true of the QRE for two general quantum states. 

\begin{theorem}
	Let $\rho_1=\rho_0+\nu_1$ and $\rho_2=\rho_0+\nu_2$ be two unit-trace quantum states on $\mathcal{H}$, where $\nu_1$ and $\nu_2$ are each an arbitrary zero-trace state perturbation on $\mathcal{H}$ with support that is a subspace of the support of $\rho_0$. The quantum Chernoff bound for a binary hypothesis test between $\rho_1$ and $\rho_2$ is found by maximizing
	\begin{equation}
		\xi_s(\rho_1,\rho_2)= \frac{1}{2}\Tr\big[L_{x^s}(\rho_0,\nu_1-\nu_2)L_{x^{1-s}}(\rho_0,\nu_1-\nu_2)\big]+O\big(\max(\norm{\nu_1},\norm{\nu_2})^3\big),
		\label{eq:ChernoffInfoSupportPreserving}
	\end{equation}
	over $s\in[0,1]$, where the maximum is found at $s=1/2$ and is given by
	\begin{equation}
		\xi(\rho_1,\rho_2)=\max_{s\in[0,1]}\xi_s(\rho_1,\rho_2)=\xi_{1/2}(\rho_1,\rho_2)= \frac{1}{2}\Tr\big[L_{\sqrt{x}}(\rho_0,\nu_1-\nu_2)^2\big]+O\big(\max(\norm{\nu_1},\norm{\nu_2})^3\big).
		\label{eq:BhattacharyyaInfoSupportPreserving}
	\end{equation}
	\label{thm:ChernoffInfoSupportPreserving}
\end{theorem}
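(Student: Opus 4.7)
The plan is to apply Theorem~\ref{thm:QuantumDaleckii-Krein2} separately to $\rho_1^s = (\rho_0+\nu_1)^s$ with $f(x)=x^s$ and to $\rho_2^{1-s} = (\rho_0+\nu_2)^{1-s}$ with $f(x)=x^{1-s}$. Since the perturbations are support-preserving, the relevant eigenvalues of $\rho_0$ are strictly positive, so both functions are smooth enough for the theorem to apply. I would then multiply these two expansions and take the trace, keeping all terms up to second order in $\max(\norm{\nu_1},\norm{\nu_2})$. This yields $\Tr[\rho_1^s\rho_2^{1-s}] = \Tr[\rho_0] + (\text{first-order terms}) + (\text{three second-order terms}) + O(\cdot^3)$.

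Next I would dispose of the pieces in turn. The $\Tr[\rho_0]=1$ term gives the leading constant. The two first-order terms $\Tr[\rho_0^s L_{x^{1-s}}(\rho_0,\nu_2)]$ and $\Tr[L_{x^s}(\rho_0,\nu_1)\rho_0^{1-s}]$ vanish by Lemma~\ref{lem:Frechet1LemmaDistances} (applied with $f'(x)^{-1}$ proportional to $\rho_0^s$ or $\rho_0^{1-s}$, respectively) together with $\Tr[\nu_i]=0$. For the two ``pure'' second-order pieces $\Tr[\rho_0^{1-s}D^{[2]}_{x^s}(\rho_0,\nu_1)]$ and $\Tr[\rho_0^s D^{[2]}_{x^{1-s}}(\rho_0,\nu_2)]$, Lemma~\ref{lem:Frechet2LemmaDistances} rewrites each as $-\Tr[L_{x^s}(\rho_0,\nu_i)L_{x^{1-s}}(\rho_0,\nu_i)]$ (with appropriate constants cancelling), and the cross term $\Tr[L_{x^s}(\rho_0,\nu_1)L_{x^{1-s}}(\rho_0,\nu_2)]$ stays as is. Applying Lemma~\ref{lem:Frechet1Lemmafg} with $f(x)=x^s$ and $g(x)=x^{1-s}$ collapses the three surviving terms into a single expression in $\nu_1-\nu_2$, giving
\begin{equation*}
\Tr[\rho_1^s\rho_2^{1-s}] = 1 - \tfrac{1}{2}\Tr\bigl[L_{x^s}(\rho_0,\nu_1-\nu_2)L_{x^{1-s}}(\rho_0,\nu_1-\nu_2)\bigr] + O\bigl(\max(\norm{\nu_1},\norm{\nu_2})^3\bigr).
\end{equation*}
Taking $-\log$ and using $-\log(1-y) = y + O(y^2)$ with $y=O(\norm{\nu_1-\nu_2}^2)$ (so $y^2$ is absorbed into the residual) yields Eq.~\ref{eq:ChernoffInfoSupportPreserving}.

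For the maximization step, I would expand the surviving quadratic form in the eigenbasis of $\rho_0$ using Eq.~\ref{eq:FrechetDerivativeQuantum} and the symmetry of the divided difference to get
\begin{equation*}
\xi_s(\rho_1,\rho_2) \approx \tfrac{1}{2}\sum_{k,l}[x^s,\vec{\lambda}]^{[1]}_{k,l}[x^{1-s},\vec{\lambda}]^{[1]}_{k,l}\,\abs{\mel{\phi_k}{\nu_1-\nu_2}{\phi_l}}^2.
\end{equation*}
Since $\abs{\mel{\phi_k}{\nu_1-\nu_2}{\phi_l}}^2 \geq 0$, it suffices to maximize each coefficient over $s$. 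On the diagonal ($\lambda_k=\lambda_l$) the coefficient reduces to $s(1-s)/\lambda_k$, obviously maximized at $s=1/2$. Off-diagonally ($\lambda_k\neq\lambda_l$), the coefficient is $(\lambda_k^s-\lambda_l^s)(\lambda_k^{1-s}-\lambda_l^{1-s})/(\lambda_k-\lambda_l)^2$, and expanding the numerator as $\lambda_k + \lambda_l - (\lambda_k^s\lambda_l^{1-s}+\lambda_k^{1-s}\lambda_l^s)$ reduces the problem to minimizing $h(s)=\lambda_k^s\lambda_l^{1-s}+\lambda_k^{1-s}\lambda_l^s$. Differentiating gives $h'(s)=0$ iff $(\lambda_k/\lambda_l)^{2s-1}=1$, i.e., $s=1/2$, which is the unique interior critical point and a minimum. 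Thus every coefficient peaks at $s=1/2$, so the full sum does too. At $s=1/2$ both divided differences coincide as $[x^{1/2},\vec{\lambda}]^{[1]}_{k,l}$, and reassembling into an operator expression gives $\xi_{1/2}(\rho_1,\rho_2)=\tfrac{1}{2}\Tr[L_{\sqrt{x}}(\rho_0,\nu_1-\nu_2)^2]+O(\cdot^3)$, which is Eq.~\ref{eq:BhattacharyyaInfoSupportPreserving}.

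The main obstacle I anticipate is the bookkeeping in the second step: several terms of mixed order appear when multiplying the two expansions, and one must carefully identify which combinations vanish, which cancel, and which recombine under Lemma~\ref{lem:Frechet1Lemmafg}. The maximization argument is more conceptually subtle but reduces, once the eigenbasis representation is in hand, to a clean scalar AM--GM-style inequality applied pointwise.
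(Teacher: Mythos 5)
Your proposal is correct and follows essentially the same route as the paper's proof: expand $\rho_1^s$ and $\rho_2^{1-s}$ via Theorem~\ref{thm:QuantumDaleckii-Krein2}, kill the first-order terms with Lemma~\ref{lem:Frechet1LemmaDistances} and $\Tr[\nu_i]=0$, convert the pure second-order terms with Lemma~\ref{lem:Frechet2LemmaDistances}, recombine via Lemma~\ref{lem:Frechet1Lemmafg}, and then maximize the resulting quadratic form coefficient-by-coefficient in the eigenbasis. The only cosmetic difference is in the final scalar step, where you locate the critical point of $\lambda_k^s\lambda_l^{1-s}+\lambda_k^{1-s}\lambda_l^s$ by differentiation while the paper invokes the inequality $a^sb^{1-s}+b^sa^{1-s}\geq 2\sqrt{ab}$ directly; these are equivalent.
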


\begin{proof}
	Beginning with Eq.~\ref{eq:ChernoffInformation}, we apply Theorem~\ref{thm:QuantumDaleckii-Krein2} twice to both $\rho_1^s$ and $\rho_2^{1-s}$:
	\begin{equation*}
		\begin{split}
			\xi_s(\rho_1,\rho_2)=&-\log\bigg(\Tr\Big[\big(\rho_0^s+L_{x^s}(\rho_0,\nu_1)+\frac{1}{2}D_{x^s}^{[2]}(\rho_0,\nu_1)+O\big(\norm{\nu_1}^3\big)\big)\\
			&\times\big(\rho_0^{1-s}+L_{x^{1-s}}(\rho_0,\nu_2)+\frac{1}{2}D_{x^{1-s}}^{[2]}(\rho_0,\nu_2)+O\big(\norm{\nu_2}^3\big)\big)\Big]\bigg)\\
			=&-\log\bigg(\Tr\Big[\rho_0+\rho_0^sL_{x^{1-s}}(\rho_0,\nu_2)+L_{x^s}(\rho_0,\nu_1)\rho_0^{1-s}+L_{x^s}(\rho_0,\nu_1)L_{x^{1-s}}(\rho_0,\nu_2)\\
			&+\frac{1}{2}\rho_0^sD_{x^{1-s}}^{[2]}(\rho_0,\nu_2)+\frac{1}{2}D_{x^s}^{[2]}(\rho_0,\nu_1)\rho_0^{1-s}+O\big(\max(\norm{\nu_1},\norm{\nu_2})^3\big)\Big]\bigg)\\
			=&-\log\bigg(\Tr\Big[\rho_0+s\nu_1+(1-s)\nu_2+L_{x^s}(\rho_0,\nu_1)L_{x^{1-s}}(\rho_0,\nu_2)\\
			&-\frac{1}{2}(1-s)L_{x^s/(1-s)}(\rho_0,\nu_2)L_{x^{1-s}}(\rho_0,\nu_2)-\frac{1}{2}sL_{x^{1-s}/s}(\rho_0,\nu_1)L_{x^s}(\rho_0,\nu_1)+O\big(\max(\norm{\nu_1},\norm{\nu_2})^3\big)\Big]\bigg),
		\end{split}
	\end{equation*}
	where the last equality uses two applications each of Lemma~\ref{lem:Frechet1LemmaDistances} and Lemma~\ref{lem:Frechet2LemmaDistances}. Using $\Tr[\rho_0]=1$, $\Tr[\nu_1]=\Tr[\nu_2]=0$, $L_{af(x)}(A,E)=aL_{f(x)}(A,E)$, and Lemma~\ref{lem:Frechet1Lemmafg}, we find
	\begin{equation*}
		\xi_s(\rho_1,\rho_2)=-\log\Big(1-\frac{1}{2}\Tr\big[L_{x^s}(\rho_0,\nu_1-\nu_2)L_{x^{1-s}}(\rho_0,\nu_1-\nu_2)\big]+O\big(\max(\norm{\nu_1},\norm{\nu_2})^3\big)\Big),
	\end{equation*}
	and by $\log(1-x)=x+O(x^2)$ we arrive at Eq.~\ref{eq:ChernoffInfoSupportPreserving}.
	
	To verify Eq.~\ref{eq:BhattacharyyaInfoSupportPreserving}, we consider the trace in Eq.~\ref{eq:ChernoffInfoSupportPreserving}:
	\begin{equation*}
		\Tr\Big[L_{x^s}(\rho_0,\nu_1-\nu_2)L_{x^{1-s}}(\rho_0,\nu_1-\nu_2)\Big]=\sum_{k,l}[x^s,\vec{\lambda}]_{k,l}^{[1]}[x^{1-s},\vec{\lambda}]_{l,k}^{[1]}\abs{\mel{\phi_k}{\nu_1-\nu_2}{\phi_l}}^2\outerproduct{\phi_k}{\phi_k}.
	\end{equation*}
	It will be sufficient to prove the proposition $[x^s,\vec{\lambda}]_{k,l}^{[1]}[x^{1-s},\vec{\lambda}]_{l,k}^{[1]}\leq\big([\sqrt{x},\vec{\lambda}]_{k,l}^{[1]}\big)^2$ for all $s\in[0,1]$ and all $k$ and $l$. When $\lambda_k\neq\lambda_l$,
	\begin{equation*}
		[x^s,\vec{\lambda}]_{k,l}^{[1]}[x^{1-s},\vec{\lambda}]_{l,k}^{[1]}=\frac{\lambda_k^s-\lambda_l^s}{\lambda_k-\lambda_l}\frac{\lambda_l^{1-s}-\lambda_k^{1-s}}{\lambda_l-\lambda_k}=\frac{\lambda_k+\lambda_l-\lambda_k^s\lambda_l^{1-s}-\lambda_l^s\lambda_k^{1-s}}{(\lambda_k-\lambda_l)^2},
	\end{equation*}
	whereas when $\lambda_k=\lambda_l$,
	\begin{equation*}
		[x^s,\vec{\lambda}]_{k,l}^{[1]}[x^{1-s},\vec{\lambda}]_{l,k}^{[1]}=s\lambda_k^{s-1}(1-s)\lambda_k^{-s}=\frac{s(1-s)}{\lambda_k}.
	\end{equation*}
	The proposition is then proven because $a^sb^{1-s}+b^sa^{1-s}\geq2\sqrt{ab}$ and  $s(1-s)\leq1/4$ for $a\in\mathbb{R}$, $b\in\mathbb{R}$, $s\in[0,1]$.
\end{proof}

From Theorem~\ref{thm:ChernoffInfoSupportPreserving}, it follows that the QCB for binary discrimination between states separated by support-preserving perturbations is always saturated by the generally looser quantum Bhattacharyya bound \cite{Pirandola2008}, which removes the need for a minimization over the parameter $s$ and simplifies the computation of the QCB. The second-order expression for the QCB also exhibits a relativity between $\nu_1$ and $\nu_2$, as the only dependence on the perturbations appears in the difference $\nu_1-\nu_2$.

\begin{theorem}
	Let $\rho_1=\rho_0+\nu_1$ and $\rho_2=\rho_0+\nu_2$ be two unit-trace quantum states on $\mathcal{H}$, where $\nu_1$ and $\nu_2$ are each an arbitrary zero-trace state perturbation on $\mathcal{H}$ with support that is a subspace of the support of $\rho_0$. The quantum fidelity between $\rho_1$ and $\rho_2$ is
	\begin{equation}
		F(\rho_1,\rho_2)= 1-\frac{1}{2}\Tr\big[(\nu_1-\nu_2)L_{\sqrt{x}}(\rho_0^2,\nu_1-\nu_2)\big]+O\big(\max(\norm{\nu_1},\norm{\nu_2})^3\big).
		\label{eq:FidelitySupportPreserving}
	\end{equation}
	\label{thm:FidelitySupportPreserving}
\end{theorem}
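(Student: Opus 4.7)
The plan is to expand the inner operator $M = \sqrt{\rho_1}\rho_2\sqrt{\rho_1}$ around the zeroth-order base $\rho_1^2$ in a single Dalecki\u{\i}-Kre\u{\i}n step, rather than expanding $\sqrt{\rho_1}$ and $\sqrt{\rho_2}$ separately. The crucial identity is
\begin{equation*}
\sqrt{\rho_1}\rho_2\sqrt{\rho_1} = \rho_1^2 + \sqrt{\rho_1}(\nu_2-\nu_1)\sqrt{\rho_1},
\end{equation*}
which exhibits $M$ as a Hermitian perturbation of $\rho_1^2$ whose norm is controlled by $\norm{\nu_1-\nu_2}$. Under the support-preserving hypothesis $\rho_1$ is strictly positive on $\supp(\rho_0)$ for sufficiently small perturbations, so $\sqrt{x}$ is smooth at the eigenvalues of $\rho_1^2$, and $\sqrt{\rho_1}(\nu_2-\nu_1)\sqrt{\rho_1}$ has support inside $\supp(\rho_1^2)$. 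Theorem~\ref{thm:QuantumDaleckii-Krein2} applied with $f(x)=\sqrt{x}$ therefore yields a second-order expansion for $\sqrt{M}$ with zeroth-order term $\sqrt{\rho_1^2}=\rho_1$ and remainder $O\big(\max(\norm{\nu_1},\norm{\nu_2})^3\big)$.

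Next I would take the trace. Lemma~\ref{lem:Frechet1LemmaDistances} with $f'(\rho_1^2)^{-1}=2\rho_1$ collapses the first-order contribution to $\tfrac{1}{2}\Tr[\nu_2-\nu_1]=0$ (since $\nu_1$ and $\nu_2$ are each zero-trace), and Lemma~\ref{lem:Frechet2LemmaDistances} converts the second-order contribution to
\begin{equation*}
\tfrac{1}{4}\Tr\!\big[\sqrt{\rho_1}(\nu_2-\nu_1)\sqrt{\rho_1}\,L_{1/\sqrt{x}}\!\big(\rho_1^2,\sqrt{\rho_1}(\nu_2-\nu_1)\sqrt{\rho_1}\big)\big].
\end{equation*}
Because this expression is already quadratic in $\nu_1-\nu_2$, I can swap $\rho_1$ for $\rho_0$ wherever it appears at the cost of an additional $O\big(\max(\norm{\nu_1},\norm{\nu_2})^3\big)$ error, leaving a trace that depends only on the unperturbed spectrum.

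What remains is the algebraic identity
\begin{equation*}
\Tr\!\big[\sqrt{\rho_0}\,A\,\sqrt{\rho_0}\,L_{1/\sqrt{x}}(\rho_0^2,\sqrt{\rho_0}\,A\,\sqrt{\rho_0})\big] = -\Tr\!\big[A\,L_{\sqrt{x}}(\rho_0^2,A)\big]
\end{equation*}
for Hermitian $A$ supported in $\supp(\rho_0)$, which I would verify by expanding both sides in the eigenbasis of $\rho_0$ via Eq.~\ref{eq:DividedDifference1}: the first divided difference of $\sqrt{x}$ at $(\lambda_k^2,\lambda_l^2)$ equals $1/(\lambda_k+\lambda_l)$, while that of $1/\sqrt{x}$ equals $-1/[\lambda_k\lambda_l(\lambda_k+\lambda_l)]$, so the $\sqrt{\lambda_k\lambda_l}$ factors produced by the $\sqrt{\rho_0}$-sandwich exactly cancel the extra $\lambda_k\lambda_l$ denominator on every index pair, uniformly across degenerate and non-degenerate cases.

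With this identity, $\Tr[\sqrt{M}] = 1 - \tfrac{1}{4}\Tr[(\nu_1-\nu_2)L_{\sqrt{x}}(\rho_0^2,\nu_1-\nu_2)] + O\big(\max(\norm{\nu_1},\norm{\nu_2})^3\big)$, and squaring via $(1-x/4)^2 = 1 - x/2 + O(x^2)$ reproduces Eq.~\ref{eq:FidelitySupportPreserving}. The main obstacle I expect is the remainder bookkeeping when replacing $\rho_1$ by $\rho_0$ inside the second-order term, since any slippage above $O(\norm{\nu}^3)$ would contaminate the coefficient of the Fr\'echet-derivative term; ensuring the swap only injects a controlled cubic error is what makes the use of $\rho_1^2$ (rather than $\rho_0^2$) as the initial base point worthwhile.
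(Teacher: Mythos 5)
Your argument is correct, and it takes a genuinely different and substantially leaner route than the paper's. The paper expands $\sqrt{\rho_1}$ itself via Theorem~\ref{thm:QuantumDaleckii-Krein2}, multiplies out $\sqrt{\rho_1}\rho_2\sqrt{\rho_1}=\rho_0^2+\bar{\nu}+\bar{\bar{\nu}}+O\big(\max(\norm{\nu_1},\norm{\nu_2})^3\big)$ with an explicit first-order piece $\bar{\nu}$ and second-order piece $\bar{\bar{\nu}}$, applies Theorem~\ref{thm:QuantumDaleckii-Krein2} a second time to the outer square root, and then cancels the surviving terms through a lengthy triple-index divided-difference computation. Your exact identity $\sqrt{\rho_1}\rho_2\sqrt{\rho_1}=\rho_1^2+\sqrt{\rho_1}(\nu_2-\nu_1)\sqrt{\rho_1}$ bypasses the expansion of $\sqrt{\rho_1}$ entirely: the inner operator is exhibited as a Hermitian perturbation $E=\sqrt{\rho_1}(\nu_2-\nu_1)\sqrt{\rho_1}$ of $\rho_1^2$ of size $O(\norm{\nu_1-\nu_2})$, a single application of Theorem~\ref{thm:QuantumDaleckii-Krein2} with $f(x)=\sqrt{x}$ suffices, the first-order trace vanishes because $\Tr[E\,\rho_1^{-1}]=\Tr[\nu_2-\nu_1]=0$ on the common support, and the only remaining work is the divided-difference identity $\lambda_k\lambda_l\cdot\big(\lambda_k\lambda_l(\lambda_k+\lambda_l)\big)^{-1}=(\lambda_k+\lambda_l)^{-1}$, which, as you note, holds uniformly across degenerate and non-degenerate index pairs. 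Relocating the base point from $\rho_0^2$ to $\rho_1^2$ costs only an $O\big(\max(\norm{\nu_1},\norm{\nu_2})^3\big)$ correction because the term in which the swap occurs is already quadratic in $\nu_1-\nu_2$ and $L_{f(x)}(B,\cdot)$ depends smoothly on $B$ on the positive part of the spectrum; your closing remark correctly identifies this as the only bookkeeping risk. What your route buys is the elimination of the intermediate operators $\bar{\nu}$, $\bar{\bar{\nu}}$ and the associated index gymnastics; what the paper's route buys is that it never invokes the spectrum of $\rho_1$ even as an intermediate device, staying strictly within the $\rho_0$ eigenbasis throughout. Two citation corrections: the lemma that collapses the first-order trace is Lemma~\ref{lem:Frechet1LemmaEntropies} (giving $\Tr[L_{\sqrt{x}}(\rho_1^2,E)]=\frac{1}{2}\Tr[E\rho_1^{-1}]$), not Lemma~\ref{lem:Frechet1LemmaDistances}, and the one that converts the second-order trace to $\frac{1}{4}\Tr\big[EL_{1/\sqrt{x}}(\rho_1^2,E)\big]$ is Lemma~\ref{lem:Frechet2LemmaEntropies}, not Lemma~\ref{lem:Frechet2LemmaDistances}; the formulas you wrote down are the correct ones, only the labels are swapped.
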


\begin{proof}
	It will be useful to find simplified forms for the first divided differences of $f(x)=\sqrt{x}$ and $g(x)=x^{-1/2}$. From Eq.~\ref{eq:DividedDifference1}, the terms of the first divided difference matrices are
	\begin{eqnarray}
		\label{eq:sqrtxDividedDifference1kl} [\sqrt{x},\vec{\alpha}]_{k,l}^{[1]}&=&\frac{\sqrt{\alpha_k}-\sqrt{\alpha_l}}{\alpha_k-\alpha_l}=\frac{1}{\sqrt{\alpha_k}+\sqrt{\alpha_l}}\\
		\label{eq:invsqrtxDividedDifference1kl} [x^{-1/2},\vec{\alpha}]_{k,l}^{[1]}&=&\frac{\frac{1}{\sqrt{\alpha_k}}-\frac{1}{\sqrt{\alpha_l}}}{\alpha_k-\alpha_l}=-\frac{1}{\sqrt{\alpha_k\alpha_l}(\sqrt{\alpha_k}+\sqrt{\alpha_l})}
	\end{eqnarray}
	when $\alpha_k\neq\alpha_l$ and
	\begin{eqnarray}
		\label{eq:sqrtxDividedDifference1kk} [\sqrt{x},\vec{\alpha}]_{k,l}^{[1]}=\frac{d}{d \alpha_k} \sqrt{\alpha_k}&=&\frac{1}{2\sqrt{\alpha_k}}\\
		\label{eq:invsqrtxDividedDifference1kk} [x^{-1/2},\vec{\alpha}]_{k,l}^{[1]}=\frac{d}{d \alpha_k} \frac{1}{\sqrt{\alpha_k}}&=&-\frac{1}{2\alpha_k^{3/2}}
	\end{eqnarray}
	when $\alpha_k=\alpha_l$. From these observations we can rewrite the first divided difference matrices for all $k$ and $l$ as
	\begin{eqnarray}
		\label{eq:sqrtxDividedDifference1klUnified} [\sqrt{x},\vec{\alpha}]_{k,l}^{[1]}&=&\frac{1}{\sqrt{\alpha_k}+\sqrt{\alpha_l}}\\
		\label{eq:invsqrtxDividedDifference1klUnified} [x^{-1/2},\vec{\alpha}]_{k,l}^{[1]}&=&-\frac{1}{\sqrt{\alpha_k\alpha_l}(\sqrt{\alpha_k}+\sqrt{\alpha_l})}.
	\end{eqnarray}

	Using Theorem~\ref{thm:QuantumDaleckii-Krein2} we can write
	\begin{equation*}
		\begin{split}
			\sqrt{\rho_1}\rho_2\sqrt{\rho_1}=&\bigg(\sqrt{\rho_0}+L_{\sqrt{x}}(\rho_0,\nu_1)+\frac{1}{2}D_{\sqrt{x}}^{[2]}(\rho_0,\nu_2)+O(\norm{\nu_1}^3)\bigg)\big(\rho_0+\nu_2\big)\bigg(\sqrt{\rho_0}+L_{\sqrt{x}}(\rho_0,\nu_1)+\frac{1}{2}D_{\sqrt{x}}^{[2]}(\rho_0,\nu_2)+O(\norm{\nu_1}^3)\bigg)\\
			=&\rho_0^2+\bar{\nu}+\bar{\bar{\nu}}+O\big(\max(\norm{\nu_1},\norm{\nu_2})^3\big),
		\end{split}
	\end{equation*}
	where
	\begin{equation*}
		\begin{split}
			\bar{\nu}=&\rho_0^{3/2}L_{\sqrt{x}}(\rho_0,\nu_1)+L_{\sqrt{x}}(\rho_0,\nu_1)\rho_0^{3/2}+\sqrt{\rho_0}\nu_2\sqrt{\rho_0}\\
			=&\sum_{k,l}\bigg(\frac{\lambda_k^{3/2}+\lambda_l^{3/2}}{\sqrt{\lambda_k}+\sqrt{\lambda_l}}\mel{\phi_k}{\nu_1}{\phi_l}+\sqrt{\lambda_k\lambda_l}\mel{\phi_k}{\nu_2}{\phi_l}\bigg)\outerproduct{\phi_k}{\phi_l}\\
			=&\sum_{k,l}\big((\lambda_k+\lambda_l)\mel{\phi_k}{\nu_1}{\phi_l}+\sqrt{\lambda_k\lambda_l}\mel{\phi_k}{\nu_2-\nu_1}{\phi_l}\big)\outerproduct{\phi_k}{\phi_l}\\
			=&\rho_0\nu_1+\nu_1\rho_0+\sqrt{\rho_0}(\nu_2-\nu_1)\sqrt{\rho_0}
		\end{split}
	\end{equation*}
	is $O\big(\max(\norm{\nu_1},\norm{\nu_2})\big)$ and where
	\begin{equation*}
		\begin{split}
			\bar{\bar{\nu}}=&\rho_0^{3/2}D_{\sqrt{x}}^{[2]}(\rho_0,\nu_1)+D_{\sqrt{x}}^{[2]}(\rho_0,\nu_1)\rho_0^{3/2}+L_{\sqrt{x}}(\rho_0,\nu_1)\rho_0L_{\sqrt{x}}(\rho_0,\nu_1)+\sqrt{\rho_0}\nu_2L_{\sqrt{x}}(\rho_0,\nu_1)+L_{\sqrt{x}}(\rho_0,\nu_1)\nu_2\sqrt{\rho_0}
		\end{split}
	\end{equation*}
	is $O\big(\max(\norm{\nu_1},\norm{\nu_2})^2\big)$. We then use Theorem~\ref{thm:QuantumDaleckii-Krein2} again to expand $\Tr\big[\sqrt{\sqrt{\rho_1}\rho_2\sqrt{\rho_1}}\big]$ as
	\begin{equation}
		\begin{split}
			\Tr\bigg[\sqrt{\sqrt{\rho_1}\rho_2\sqrt{\rho_1}}\bigg]=&\Tr\bigg[\rho_0+L_{\sqrt{x}}\Big(\rho_0^2,\bar{\nu}+\bar{\bar{\nu}}\Big)+\frac{1}{2}D_{\sqrt{x}}^{[2]}\Big(\rho_0^2,\bar{\nu}+\bar{\bar{\nu}}\Big)+O\big(\max(\norm{\nu_1},\norm{\nu_2})^3\big)\bigg]\\
			=&1+\Tr\bigg[L_{\sqrt{x}}\Big(\rho_0^2,\bar{\nu}+\bar{\bar{\nu}}\Big) + \frac{1}{4}\bar{\nu}L_{1/\sqrt{x}}\Big(\rho_0^2,\bar{\nu}\Big)\bigg]+O\big(\max(\norm{\nu_1},\norm{\nu_2})^3\big),
		\end{split}
		\label{eq:FidelitySupportPreservingIntermediate}
	\end{equation}
	where Lemma~\ref{lem:Frechet2LemmaEntropies} and some additional grouping of $O\big(\max(\norm{\nu_1},\norm{\nu_2})^3\big)$ terms were used to reach the expression in the last line. 
	
	We consider the two remaining terms within the trace in Eq.~\ref{eq:FidelitySupportPreservingIntermediate} individually. Applying Lemma~\ref{lem:Frechet1LemmaEntropies} and then Lemma~\ref{lem:Frechet2LemmaDistances} to the first term, we have
	\begin{equation*}
		\begin{split}
			\Tr\Big[L_{\sqrt{x}}\Big(\rho_0^2,\bar{\nu}+\bar{\bar{\nu}}\Big)\Big]=&\frac{1}{2}\Tr\Big[(\bar{\nu}+\bar{\bar{\nu}})\rho_0^{-1}\Big]\\
			=&\frac{1}{2}\Tr\Big[\nu_1+\nu_2+2\sqrt{\rho_0}D_{\sqrt{x}}^{[2]}(\rho_0,\nu_1)+L_{\sqrt{x}}(\rho_0,\nu_1)\rho_0L_{\sqrt{x}}(\rho_0,\nu_1)\rho_0^{-1}\\
			&+\rho_0^{-1/2}\big(\nu_2L_{\sqrt{x}}(\rho_0,\nu_1)+L_{\sqrt{x}}(\rho_0,\nu_1)\nu_2\big)\Big]\\
			=&\frac{1}{2}\Tr\Big[-L_{\sqrt{x}}(\rho_0,\nu_1)L_{\sqrt{x}}(\rho_0,\nu_1)+L_{\sqrt{x}}(\rho_0,\nu_1)\rho_0L_{\sqrt{x}}(\rho_0,\nu_1)\rho_0^{-1}\\
			&+\rho_0^{-1/2}\big(\nu_2L_{\sqrt{x}}(\rho_0,\nu_1)+L_{\sqrt{x}}(\rho_0,\nu_1)\nu_2\big)\Big].
		\end{split}
	\end{equation*}
	We use Lemma~\ref{lem:Frechet1Lemmafg} to evaluate the second term to
	\begin{equation*}
		\begin{split}
			\Tr\bigg[\frac{1}{4}\bar{\nu}L_{1/\sqrt{x}}\Big(\rho_0^2,\bar{\nu}\Big)\bigg]=&\frac{1}{4}\Tr\Big[(\rho_0\nu_1+\nu_1\rho_0)L_{1/\sqrt{x}}\Big(\rho_0^2,\rho_0\nu_1+\nu_1\rho_0\Big)+2\sqrt{\rho_0}(\nu_2-\nu_1)\sqrt{\rho_0}L_{1/\sqrt{x}}\Big(\rho_0^2,\rho_0\nu_1+\nu_1\rho_0\Big)\\
			&+\sqrt{\rho_0}(\nu_2-\nu_1)\sqrt{\rho_0}L_{1/\sqrt{x}}\Big(\rho_0^2,\sqrt{\rho_0}(\nu_2-\nu_1)\sqrt{\rho_0}\Big)\bigg]
		\end{split}
	\end{equation*}
	
	After using Eq.~\ref{eq:FrechetDerivativeQuantum} to rewrite the Fr\'echet derivatives, the two terms become
	\begin{equation*}
		\begin{split}
			\Tr\Big[L_{\sqrt{x}}\Big(\rho_0^2,\bar{\nu}+\bar{\bar{\nu}}\Big)\Big]=&\frac{1}{2}\Tr\bigg[\sum_{k,l,m}\frac{1}{(\sqrt{\lambda_k}+\sqrt{\lambda_l})(\sqrt{\lambda_l}+\sqrt{\lambda_m})}\bigg(-1+\frac{\lambda_l}{\lambda_m}\bigg)\mel{\phi_k}{\nu_1}{\phi_l}\mel{\phi_l}{\nu_1}{\phi_m}\outerproduct{\phi_k}{\phi_m}\\
			&+\frac{1}{\sqrt{\lambda_k}}\bigg(\frac{\mel{\phi_k}{\nu_2}{\phi_l}\mel{\phi_l}{\nu_1}{\phi_m}}{\sqrt{\lambda_l}+\sqrt{\lambda_m}}+\frac{\mel{\phi_k}{\nu_1}{\phi_l}\mel{\phi_l}{\nu_2}{\phi_m}}{\sqrt{\lambda_k}+\sqrt{\lambda_l}}\bigg)\outerproduct{\phi_k}{\phi_m}\bigg]\\
			=&\frac{1}{2}\Tr\bigg[\sum_{k,l}\frac{1}{\lambda_k}\bigg(\frac{\sqrt{\lambda_l}-\sqrt{\lambda_k}}{\sqrt{\lambda_k}+\sqrt{\lambda_l}}\abs{\mel{\phi_k}{\nu_1}{\phi_l}}^2+\frac{\sqrt{\lambda_k}}{\sqrt{\lambda_k}+\sqrt{\lambda_l}}\big(\mel{\phi_k}{\nu_2}{\phi_l}\mel{\phi_l}{\nu_1}{\phi_k} + \textrm{c.c.}\big)\bigg)\outerproduct{\phi_k}{\phi_k}\bigg]\\
			=&\frac{1}{2}\Tr\bigg[\bigg(\sum_{k<l}\frac{\lambda_l-\lambda_k}{\lambda_l\lambda_k}\frac{\sqrt{\lambda_l}-\sqrt{\lambda_k}}{\sqrt{\lambda_k}+\sqrt{\lambda_l}}\abs{\mel{\phi_k}{\nu_1}{\phi_l}}^2+2\frac{\sqrt{\lambda_k}+\sqrt{\lambda_l}}{\sqrt{\lambda_k\lambda_l}}\frac{1}{\sqrt{\lambda_k}+\sqrt{\lambda_l}}\mel{\phi_k}{\nu_2}{\phi_l}\mel{\phi_l}{\nu_1}{\phi_k}\\
			&+\sum_{k}\frac{1}{\lambda_k}\mel{\phi_k}{\nu_2}{\phi_k}\mel{\phi_k}{\nu_1}{\phi_k}\bigg)\outerproduct{\phi_k}{\phi_k}\bigg]\\
			=&\frac{1}{4}\Tr\bigg[\sum_{k,l}\bigg(\frac{(\sqrt{\lambda_l}-\sqrt{\lambda_k})^2}{\lambda_l\lambda_k}\abs{\mel{\phi_k}{\nu_1}{\phi_l}}^2+\frac{2}{\sqrt{\lambda_k\lambda_l}}\mel{\phi_k}{\nu_2}{\phi_l}\mel{\phi_l}{\nu_1}{\phi_k}\bigg)\outerproduct{\phi_k}{\phi_k}\bigg]
		\end{split}
	\end{equation*}
	and
	\begin{equation*}
		\begin{split}
			\Tr\bigg[\frac{1}{4}\bar{\nu}L_{1/\sqrt{x}}\Big(\rho_0^2,\bar{\nu}\Big)\bigg]=&\frac{1}{4}\Tr\bigg[\sum_{k,l,m}\frac{(\lambda_k+\lambda_l)(\lambda_l+\lambda_m)}{-\lambda_l\lambda_m(\lambda_l+\lambda_m)}\mel{\phi_k}{\nu_1}{\phi_l}\mel{\phi_l}{\nu_1}{\phi_m}\outerproduct{\phi_k}{\phi_m}\\
			&+2\frac{\sqrt{\lambda_k\lambda_l}(\lambda_l+\lambda_m)}{-\lambda_l\lambda_m(\lambda_l+\lambda_m)}\mel{\phi_k}{\nu_2-\nu_1}{\phi_l}\mel{\phi_l}{\nu_1}{\phi_m}\outerproduct{\phi_k}{\phi_m}\\
			&+\frac{\sqrt{\lambda_k\lambda_l}\sqrt{\lambda_l\lambda_m}}{-\lambda_l\lambda_m(\lambda_l+\lambda_m)}\mel{\phi_k}{\nu_2-\nu_1}{\phi_l}\mel{\phi_l}{\nu_2-\nu_1}{\phi_m}\outerproduct{\phi_k}{\phi_m}\bigg]\\
			=&-\frac{1}{4}\Tr\Bigg[\sum_{k,l}\Bigg(\bigg(\frac{\lambda_k+\lambda_l}{\lambda_k\lambda_l}-2\frac{1}{\sqrt{\lambda_k\lambda_l}}\bigg)\abs{\mel{\phi_k}{\nu_1}{\phi_l}}^2\\
			&+\frac{2}{\sqrt{\lambda_k\lambda_l}}\mel{\phi_k}{\nu_2}{\phi_l}\mel{\phi_l}{\nu_1}{\phi_k}+\frac{1}{\lambda_k+\lambda_l}\abs{\mel{\phi_k}{\nu_2-\nu_1}{\phi_l}}^2\Bigg)\outerproduct{\phi_k}{\phi_k}\Bigg]\\
			=&-\frac{1}{4}\Tr\bigg[\sum_{k,l}\bigg(\frac{(\sqrt{\lambda_l}-\sqrt{\lambda_k})^2}{\lambda_l\lambda_k}\abs{\mel{\phi_k}{\nu_1}{\phi_l}}^2+\frac{2}{\sqrt{\lambda_k\lambda_l}}\mel{\phi_k}{\nu_2}{\phi_l}\mel{\phi_l}{\nu_1}{\phi_k}\\
			&+\frac{1}{\lambda_k+\lambda_l}\abs{\mel{\phi_k}{\nu_2-\nu_1}{\phi_l}}^2\Bigg)\outerproduct{\phi_k}{\phi_k}\Bigg].
		\end{split}
	\end{equation*}
	Inserting the two terms into Eq.~\ref{eq:FidelitySupportPreservingIntermediate}, we find 
	\begin{equation*}
		\begin{split}
			\Tr\bigg[\sqrt{\sqrt{\rho_1}\rho_2\sqrt{\rho_1}}\bigg]=&1-\frac{1}{4}\Tr\bigg[\sum_{k,l}\frac{1}{\lambda_k+\lambda_l}\abs{\mel{\phi_k}{\nu_2-\nu_1}{\phi_l}}^2\outerproduct{\phi_k}{\phi_k}\bigg]+O\big(\max(\norm{\nu_1},\norm{\nu_2})^3\big)\\
			=&1-\frac{1}{4}\Tr\Big[(\nu_2-\nu_1)L_{\sqrt{x}}(\rho_0^2,\nu_2-\nu_1)\Big]+O\big(\max(\norm{\nu_1},\norm{\nu_2})^3\big),
		\end{split}
	\end{equation*}
	and, after using the Taylor expansion $(1+x)^2 = 1+2x+O(x^2)$ to evaluate $\Tr\big[\sqrt{\sqrt{\rho_1}\rho_2\sqrt{\rho_1}}\big]^2$, we arrive at Eq.~\ref{eq:FidelitySupportPreserving}.
\end{proof}

As expected, the fidelity between two states separated by small support-preserving perturbations evaluates to unity to zeroth-order in the Hilbert-Schmidt norms of the perturbing operators, and the next lowest-order term is second order. This second-order term only depends on the perturbations through the difference $\nu_1-\nu_2$. Furthermore, if we explicitly write out the Fr\'echet derivative and evaluate the Bures distance $d_{\rm B}^2(\rho_1,\rho_2)=2(1-\sqrt{F(\rho_1,\rho_2)})$, we find
\begin{equation}
	\begin{split}
		d_{\rm B}^2(\rho_1,\rho_2)=&\frac{1}{2}\sum_{k,l}\frac{\abs{\mel{\phi_k}{\nu_1-\nu_2}{\phi_l}}^2}{\lambda_k+\lambda_l}+O\big(\max(\norm{\nu_1},\norm{\nu_2})^3\big)\\
		=&\frac{1}{2}\sum_{k,l}\frac{\abs{\mel{\phi_k}{\rho_1-\rho_2}{\phi_l}}^2}{\lambda_k+\lambda_l}+O\big(\max(\norm{\nu_1},\norm{\nu_2})^3\big).
	\end{split}
	\label{eq:BuresMetricSupportPreserving}
\end{equation}
Setting $\rho_1=\rho$ and $\rho_2=\rho+d\rho$, our perturbation theory for the quantum fidelity exactly recovers the expression for the Bures distance of a density operator with respect to a support-preserving perturbation $d\rho$ \cite{Hubner1992}. In addition, if $\rho_1=\rho_{\vec{X}}$ and $\rho_2=\rho_{\vec{X}+d\vec{X}}$, the Bures distance can be used to derive the quantum Fisher information matrix (QFIM) for estimation of parameter(s) $\vec{X}$ \cite{Braunstein1994}.

\section{Support-Extending Perturbation Theory}
\label{sec:Results_Support-Extending}

When a matrix $A$ has a kernel $\mathcal{H}_0$ that is a subspace of $\mathcal{H}$, it is possible that its primary matrix functions will not be Fr\'echet differentiable in the direction of matrices whose support has an intersection with $\mathcal{H}_0$. This scenario is relevant in the case of a matrix root $f(x)=x^{1/p}$, $p>0$, when the support of a state perturbation $\nu$ extends beyond that of the zeroth-order state $\rho_0$; in this case, the Dalecki\u{\i}-Kre\u{\i}n expansion of Theorem~\ref{thm:QuantumDaleckii-Krein2} does not apply directly. However, we now show that for all of the quantum information theoretic quantities we considered in the case of support-preserving perturbations, lowest-order series expansions can be obtained using a first-order Dalecki\u{\i}-Kre\u{\i}n-like expansion for the roots of perturbations of singular matrices, given in Eq.~\ref{eq:Daleckii-Krein_Singular} \cite{Carlsson2018}. This expansion can be readily applied for a quantum state $\rho_0$ on a Hilbert space $\mathcal{H}=\mathcal{H}_+\oplus\mathcal{H}_0$, with support on only $\mathcal{H}_+$, and a state perturbation $\nu$ on $\mathcal{H}$. Let the full vector of eigenvalues of $\rho_0$ be given by $\vec{\lambda}$ and denote the nonzero eigenvalues by $\vec{\lambda}_+$. Following Eq.~\ref{eq:BlockDecompositionE}, we decompose the representations of $\rho_0$ and $\nu$ in an eigenbasis of $\rho_0$ as
\begin{eqnarray}
	\label{eq:BlockDecomposition_rho_0} \rho_0&=&
	\begin{pmatrix}
		\Lambda_{\vec{\lambda}_+} & 0\\
		0 & 0
	\end{pmatrix}\\
	\label{eq:BlockDecomposition_nu} \nu&=&
	\begin{pmatrix}
		\nu_{\rm B} & \nu_{\rm C}\\
		\nu_{\rm C}^{\dagger} & \nu_{\rm D}
	\end{pmatrix}.
\end{eqnarray}
Since $\rho=\rho_0+\nu$ must be Hermitian, positive semi-definite and unit-trace, $\nu_{\rm B}$ and $\nu_{\rm D}$ must be Hermitian matrices, $\Tr[\nu_{\rm B}]=-\Tr[\nu_{\rm D}]$, and $\nu_{\rm D}$ must be positive semi-definite. The following theorem provides a first-order expansion for roots of perturbed quantum states in the case of support-extending perturbations.

\begin{theorem}
	Let $1/3<s<1$, where $r=\min(1+s,3s)$. Consider a unit-trace density operator $\rho_0=\sum_i \lambda_i \outerproduct{\phi_i}{\phi_i}$ on $\mathcal{H}=\mathcal{H}_+\oplus\mathcal{H}_0$ with  $\ker(\rho_0)=\mathcal{H}_0$ and a zero-trace perturbation operator $\nu$ on $\mathcal{H}$ with $\supp(\nu)\supseteq~\mathcal{H}_0$, which are decomposed according to Eqs.~\ref{eq:BlockDecomposition_rho_0} and~\ref{eq:BlockDecomposition_nu}, respectively. If $\norm{\nu}^2/\norm{\nu_{\rm D}}\in O(\norm{\nu})$, then a first-order expansion in $\norm{\nu}$ for matrix roots of the state $\rho=\rho_0+\nu$ represented in an eigenbasis of $\rho_0$ is given by
	\begin{equation}
		\rho^s = \rho_0^s + \big[x^s,\vec{\lambda}\big]^{[1,0]}\circ
		\begin{pmatrix}
			\nu_{\rm B} & \nu_{\rm C}\\
			\nu_{\rm C}^{\dagger} & \nu_{\rm D}^s
		\end{pmatrix}
		+ O(\norm{\nu}^r).
		\label{eq:QuantumDaleckii-Krein1SupportExtending}
	\end{equation}
	\label{thm:QuantumDaleckii-Krein1SupportExtending}
\end{theorem}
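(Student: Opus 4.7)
The strategy is to apply Carlsson's Dalecki\u{\i}-Kre\u{\i}n-like expansion for roots of singular matrices, Eq.~\ref{eq:Daleckii-Krein_Singular}, with $p = 1/s$, $A = \rho_0$, and $E = \nu$, and then simplify the Schur-complement block using the hypothesis. Because $s \in (1/3,1)$ corresponds to $p \in (1,3)$, the residual exponent $r = \min(1+s, 3s)$ matches between the two expressions. Using the block decompositions of Eqs.~\ref{eq:BlockDecomposition_rho_0}--\ref{eq:BlockDecomposition_nu}, a direct substitution into Eq.~\ref{eq:Daleckii-Krein_Singular} produces exactly the form of Eq.~\ref{eq:QuantumDaleckii-Krein1SupportExtending}, except that the $(2,2)$-block of the Hadamard-product matrix contains the $s$-th power of the Schur complement $\bar{\nu}_{\rm D} = \nu_{\rm D} - \nu_{\rm C}^\dagger(\Lambda_{\vec{\lambda}_+}+\nu_{\rm B})^{-1}\nu_{\rm C}$ rather than $\nu_{\rm D}^s$.

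Next, I would observe that the entries of $[x^s,\vec{\lambda}]^{[1,0]}$ in the $(2,2)$-block equal $1$ (by the third case of Eq.~\ref{eq:DividedDifferenceCarlsson}, with $\lambda_k = \lambda_l = 0$), so no amplification occurs in the Hadamard product on this block. The theorem thus reduces to showing $\norm{\bar{\nu}_{\rm D}^s - \nu_{\rm D}^s} = O(\norm{\nu}^r)$. The correction operator is $X := \nu_{\rm D} - \bar{\nu}_{\rm D} = \nu_{\rm C}^\dagger(\Lambda_{\vec{\lambda}_+}+\nu_{\rm B})^{-1}\nu_{\rm C}$, which is positive semidefinite with $\norm{X} = O(\norm{\nu}^2)$ since $(\Lambda_{\vec{\lambda}_+}+\nu_{\rm B})^{-1}$ has bounded norm for small $\nu_{\rm B}$. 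Positivity of the Schur complement of the PSD state $\rho$ further gives $0 \leq X \leq \nu_{\rm D}$, so $\ker(\nu_{\rm D}) \subseteq \ker(X)$; thus $X$ acts only on $\supp(\nu_{\rm D})$. The hypothesis $\norm{\nu}^2/\norm{\nu_{\rm D}} \in O(\norm{\nu})$ then translates into $\norm{X}/\norm{\nu_{\rm D}} = O(\norm{\nu})$, meaning $X$ is a small \emph{relative} perturbation of $\nu_{\rm D}$ on its support.

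To control $\norm{\nu_{\rm D}^s - \bar{\nu}_{\rm D}^s}$, I plan to work with the Balakrishnan integral representation
\begin{equation*}
\nu_{\rm D}^s - \bar{\nu}_{\rm D}^s = \frac{\sin(s\pi)}{\pi}\int_0^\infty \lambda^s\,(\bar{\nu}_{\rm D}+\lambda I)^{-1}\,X\,(\nu_{\rm D}+\lambda I)^{-1}\,d\lambda,
\end{equation*}
restricted to $\supp(\nu_{\rm D})$ where both resolvents are well-behaved. Splitting the integral according to whether $\lambda$ is large or small relative to $\norm{\nu_{\rm D}}$, and combining the resolvent bounds with $\norm{X}=O(\norm{\nu}^2)$ and $\norm{\nu_{\rm D}}=\Omega(\norm{\nu})$ from the hypothesis, the integrand contributions yield an estimate of $O(\norm{\nu}^{1+s})$. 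Since $r \leq 1+s$ by definition, this is $O(\norm{\nu}^r)$, closing the argument.

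The main obstacle is precisely this final bound. A naive application of the operator H\"older inequality $\norm{\nu_{\rm D}^s-\bar{\nu}_{\rm D}^s} \leq \norm{X}^s$ yields only $O(\norm{\nu}^{2s})$, which is weaker than $O(\norm{\nu}^r)$ throughout $s \in (1/3,1)$ since $2s < \min(1+s, 3s)$ on that interval. The improvement beyond H\"older genuinely requires exploiting both the relative-smallness condition $\norm{\nu}^2/\norm{\nu_{\rm D}} \in O(\norm{\nu})$ and the positivity structure $0 \leq X \leq \nu_{\rm D}$, combined through the resolvent identity with careful tracking of the spectral scales; establishing the $O(\norm{\nu}^{1+s})$ bound from these ingredients is the delicate technical core of the proof.
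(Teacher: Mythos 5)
Your proposal has the same skeleton as the paper's proof: invoke Carlsson's expansion (Eq.~\ref{eq:Daleckii-Krein_Singular}) with $p=1/s$, note that the only discrepancy from Eq.~\ref{eq:QuantumDaleckii-Krein1SupportExtending} is $\bar{\nu}_{\rm D}^s$ versus $\nu_{\rm D}^s$ in the lower-right block (where the divided-difference entries are $1$), bound $\norm{X}=\norm{\nu_{\rm C}^{\dagger}(\Lambda_{\vec{\lambda}_+}+\nu_{\rm B})^{-1}\nu_{\rm C}}=O(\norm{\nu}^2)$ by submultiplicativity, and convert to $O(\norm{\nu}^{1+s})\subseteq O(\norm{\nu}^{r})$ via the hypothesis $\norm{\nu}^2/\norm{\nu_{\rm D}}\in O(\norm{\nu})$. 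You diverge only in the tool for the power-difference estimate $\norm{\nu_{\rm D}^s-\bar{\nu}_{\rm D}^s}$, which you flag as the delicate core and attack through the Balakrishnan integral. The paper dispatches this step in one line: it factors $\bar{\nu}_{\rm D}=\norm{\nu_{\rm D}}\big(\nu_{\rm D}/\norm{\nu_{\rm D}}-X/\norm{\nu_{\rm D}}\big)$ and applies the zeroth-order Dalecki\u{\i}--Kre\u{\i}n expansion (Eq.~\ref{eq:Daleckii-Krein0}) to the normalized matrix, obtaining $\bar{\nu}_{\rm D}^s=\nu_{\rm D}^s+\norm{\nu_{\rm D}}^s\,O\big(\norm{X}/\norm{\nu_{\rm D}}\big)=\nu_{\rm D}^s+O\big(\norm{X}/\norm{\nu_{\rm D}}^{1-s}\big)$, which is precisely the estimate your split integral is built to produce. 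Both routes carry the same hidden caveat: the implicit constant is governed by the \emph{smallest} nonzero eigenvalue of $\nu_{\rm D}$ (the Lipschitz constant of $x^s$ on the spectrum of the rescaled matrix in the paper's version, the resolvent norms at small $\lambda$ in yours), so the clean $\norm{\nu_{\rm D}}^{-(1-s)}$ scaling tacitly assumes $\nu_{\rm D}/\norm{\nu_{\rm D}}$ is well conditioned; the stated hypothesis controls only the norm. Your observation that $0\leq X\leq\nu_{\rm D}$ (from positivity of the Schur complement of $\rho$) is correct and does not appear in the paper, but it does not by itself remove that caveat, so the difficulty you isolate as an obstruction specific to your route is in fact present, and passed over, in the published argument as well; modulo that shared point, your approach closes.
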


\begin{proof}
	Comparing Eqs.~\ref{eq:Daleckii-Krein_Singular} and \ref{eq:QuantumDaleckii-Krein1SupportExtending}, the only thing to be proven is the replacement of $\bar{\nu}_{\rm D}=\nu_{\rm D}-\nu_{\rm C}^{\dagger}(\Lambda_{\vec{\lambda}_+} +\nu_{\rm B})^{-1}\nu_{\rm C}$ with $\nu_{\rm D}$ in the lower right block of the second term. It is sufficient to prove that $\bar{\nu}_{\rm D}^s=\nu_{\rm D}^s+O(\norm{\nu}^{1+s})$, since $1+s\geq r$. Temporarily reverting to the notation of Eqs.~\ref{eq:BlockDecompositionA} and~\ref{eq:BlockDecompositionE}, we examine more closely the Schur complement $\bar{D}=D-C^{\dagger}(\Lambda_{\vec{\alpha}_+}+B)^{-1}C=\norm{D}\big(D/\norm{D}-C^{\dagger}(\Lambda_{\vec{\alpha}_+}+B)^{-1}C/\norm{D}\big)$. Since $\nu_{\rm D}$ is full rank on $\mathcal{H}_0$, the zeroth-order Dalecki\u{\i}-Kre\u{\i}n expansion (Eq.~\ref{eq:Daleckii-Krein0}) gives $\bar{D}^s=D^s+\norm{D}^sO\big(\norm{C^{\dagger}(\Lambda_{\vec{\alpha}_+}+B)^{-1}C}/\norm{D}\big)$. Furthermore,
	\begin{equation}
		\begin{split}
			\norm{C^{\dagger}(\Lambda_{\vec{\alpha}_+}+B)^{-1}C}\leq&\norm{C}^2\norm{(\Lambda_{\vec{\alpha}_+}+B)^{-1}}\\
			=&\norm{C}^2\big(\norm{\Lambda_{\vec{\alpha}_+}^{-1}+O(\norm{B})}\big)\\
			=&\norm{C}^2\big(\norm{\Lambda_{\vec{\alpha}_+}^{-1}+O(\norm{E})}\big)\\
			\leq&\norm{E}^2\big(\norm{\Lambda_{\vec{\alpha}_+}^{-1}+O(\norm{E})}\big)\\
			\leq&\norm{E}^2\big(\norm{\Lambda_{\vec{\alpha}_+}^{-1}}+\norm{O(\norm{E})}\big)\\
			\leq&\norm{E}^2\big(\norm{\Lambda_{\vec{\alpha}_+}^{-1}}+1\big).
		\end{split}
		\label{eq:Schur_Complement_Inequalities}
	\end{equation}
	where the first inequality follows from the fact that the Hilbert-Schmidt norm is submultiplicative, the first equality is another use of the zeroth-order Daleckii-Kre\u{\i}n theorem, the second equality and second inequality make use of a compression inequality on Schatten norms for block partitioned positive semidefinite matrices \cite{Audenaert2005}, which states that  $\norm{\hat{E}}^2=2\norm{C}^2+\norm{B}^2+\norm{D}^2$ and therefore $\norm{B}\leq\norm{\hat{E}}$ and $\norm{C}\leq\norm{\hat{E}}$, while $\norm{\hat{E}}=\norm{E}$ because the Hilbert-Schmidt norm is conserved under unitary rotation. The third inequality is the triangle inequality. Returning to the notation of Eqs.~\ref{eq:BlockDecomposition_rho_0} and~\ref{eq:BlockDecomposition_nu}, we thus have
	\begin{equation*}
		\begin{split}
			\bar{\nu}_{\rm D}^s=&\nu_{\rm D}^s+O\bigg(\frac{\norm{\nu}^2}{\norm{\nu_{\rm D}}^{1-s}}\bigg)\\
			=&\nu_{\rm D}^s+O\Bigg(\norm{\nu}^{2s}\bigg(\frac{\norm{\nu}^2}{\norm{\nu_{\rm D}}}\bigg)^{1-s}\Bigg)\\
			=&\nu_{\rm D}^s+O\big(\norm{\nu}^{1+s}\big)
		\end{split}
	\end{equation*} 
	where we used the condition $\norm{\nu}^2/\norm{\nu_{\rm D}}\in O(\norm{\nu})$ for the final equality.
\end{proof}

Next we prove that requiring the perturbing matrix to span $\mathcal{H}_0$ while not canceling out any of the eigenvalues of $A$ on $\mathcal{H}_+$ allows for a first-order series expansion over all real-valued matrix roots instead of the subset of matrix roots allowed by the existing result (Eq.~\ref{eq:Daleckii-Krein_Singular}). The following theorem relies on identifying the remainder terms present at each block of the matrix decomposition on $\mathcal{H}=\mathcal{H}_+\oplus\mathcal{H}_0$. We return to the notation of a zeroth-order matrix $A$ and a perturbation matrix $E$ for notational convenience; the theorem can be straightforwardly applied to density matrices using the notation of Eqs.~\ref{eq:BlockDecomposition_rho_0} and~\ref{eq:BlockDecomposition_nu}.

\begin{theorem}
	Let $A$ be a Hermitian matrix on $\mathcal{H}=\mathcal{H}_+\oplus \mathcal{H}_0$ with $\ker(A)=\mathcal{H}_0$ and let $E$ be a second Hermitian matrix on $\mathcal{H}$ that are decomposed according to Eqs.~\ref{eq:BlockDecompositionA} and~\ref{eq:BlockDecompositionE}, respectively. If $\supp(E)\supseteq~\mathcal{H}_0$, $\supp(A+E)\supseteq~\mathcal{H}_+$, and $\norm{E}^2/\norm{D}\in O(\norm{E})$, then for all $0\leq s\leq1$,
	\begin{equation}
		(A+E)^s = A^s + U\bigg(\big[x^s,\vec{\alpha}\big]^{[1,0]}\circ
		\begin{pmatrix}
			B & C\\
			C^{\dagger} & D^s
		\end{pmatrix}
		+
		\begin{pmatrix}
			O\big(\norm{E}^{2}\big) & O\big(\norm{E}^{1+s}\big)\\
			O\big(\norm{E}^{1+s}\big) & O\big(\norm{E}^{1+s}\big)
		\end{pmatrix}
		\bigg)U^{\dagger}.
		\label{eq:Carlsson_extension}
	\end{equation}
	\label{thm:Carlsson_extension}
\end{theorem}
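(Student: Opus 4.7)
The plan is to refine the global $O(\norm{E}^r)$ estimate from Eq.~\ref{eq:Daleckii-Krein_Singular} into the three block-wise remainders claimed in Eq.~\ref{eq:Carlsson_extension}, and simultaneously extend the allowed exponents from $s\in(1/3,1)$ (where Carlsson's result by itself gives useful control) to the full range $s\in[0,1]$. The three hypotheses on $E$ combine to produce a spectral gap for $A+E$: the condition $\supp(A+E)\supseteq\mathcal{H}_+$ forces its ``large'' eigenvalues to remain bounded away from zero as perturbations of $\vec{\alpha}_+$, while $\supp(E)\supseteq\mathcal{H}_0$ and $\norm{E}^2/\norm{D}\in O(\norm{E})$ together force the Schur complement $\bar{D}=D-C^{\dagger}(\Lambda_{\vec{\alpha}_+}+B)^{-1}C$ to be positive definite of magnitude comparable to $\norm{D}$. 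The spectrum of $A+E$ therefore splits cleanly into a large cluster near $\vec{\alpha}_+$ and a small cluster of size $\Theta(\norm{D})$, separated by a gap, which is precisely the regime where analytic functional calculus of $x^s$ is well-controlled on both clusters for any $s\in[0,1]$.

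Given this spectral splitting, I would decompose $(A+E)^s-A^s$ into its three constituent blocks and analyze each with a different tool. For the UL block, restricting $A+E$ to its large-eigenvalue spectral subspace reduces the problem to evaluating $x^s$ on a full-rank matrix $A|_{\mathcal{H}_+}$ perturbed by an $O(\norm{E})$ Hermitian operator, and Theorem~\ref{thm:QuantumDaleckii-Krein2} directly yields the divided-difference leading term together with a residual of order $O(\norm{E}^2)$. For the LR block I would apply the zeroth-order Dalecki\u{\i}-Kre\u{\i}n expansion (Eq.~\ref{eq:Daleckii-Krein0}) to the restriction of $A+E$ to its small-eigenvalue subspace, whose spectrum coincides with that of $\bar{D}$, and then invoke the already-established identity $\bar{D}^s=D^s+O(\norm{E}^{1+s})$ from the proof of Theorem~\ref{thm:QuantumDaleckii-Krein1SupportExtending}. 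For the OD block I would expand in the eigenbasis of $A$ and identify the leading contribution as $[x^s,\vec{\lambda}]^{[1,0]}_{k,l}\mel{\phi_k}{E}{\phi_l}$ with $k\in\mathcal{H}_+$ and $l\in\mathcal{H}_0$, where the divided-difference prefactor reduces to $\alpha_k^{s-1}$ since $\alpha_l=0$.

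The hardest step is justifying the OD block remainder $O(\norm{E}^{1+s})$ for all $s\in[0,1]$, particularly for small $s\leq1/3$ where Carlsson's original argument does not apply. The cleanest route is probably the integral representation $x^s=(\sin\pi s/\pi)\int_0^{\infty} t^{s-1}x/(t+x)\,dt$ combined with resolvent expansions $(tI+A+E)^{-1}=(tI+A)^{-1}-(tI+A)^{-1}E(tI+A+E)^{-1}$, which converts the block-wise analysis into entry-wise bounds on scalar integrals that depend only on the spectral gap. The hypothesis $\norm{E}^2/\norm{D}\in O(\norm{E})$ is again essential: it is what collapses the naive bound $\norm{\bar{D}-D}^s=O(\norm{E}^{2s})$ on the LR and OD residuals to the sharper $O(\norm{E}^{1+s})$, exactly as it did in Theorem~\ref{thm:QuantumDaleckii-Krein1SupportExtending}. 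The endpoints are straightforward checks against the definition of $[x^s,\vec{\alpha}]^{[1,0]}$: at $s=1$ all three cases of the divided difference equal $1$ and the formula recovers $(A+E)^1=A+E$ exactly, while at $s=0$ the divided difference vanishes everywhere except on the $\mathcal{H}_0$--diagonal where it equals $1$, and the formula then reproduces $(A+E)^0=P_{\supp(A+E)}$ as a consequence of $\supp(A+E)\supseteq\mathcal{H}_+$.
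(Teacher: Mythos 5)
Your overall strategy---splitting the spectrum of $A+E$ into an $O(1)$ cluster near $\vec{\alpha}_+$ and a small cluster governed by $D$, then treating the three blocks separately---is a sensible and genuinely different route from the paper's, but as written it does not prove the theorem. The entire content of this theorem beyond the already-cited Eq.~\ref{eq:Daleckii-Krein_Singular} is the \emph{block-wise} remainder matrix, above all the $O(\norm{E}^{1+s})$ bounds on the off-diagonal and lower-right blocks valid for \emph{all} $s\in[0,1]$; for exactly this step you write that the cleanest route is ``probably'' an integral representation plus resolvent expansions, without deriving any estimate (and the representation $x^s=(\sin\pi s/\pi)\int_0^\infty t^{s-1}x/(t+x)\,dt$ is itself confined to $0<s<1$). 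This is a deferral of the hardest part, not a proof of it. A second, related gap: the spectral subspaces of $A+E$ are rotated by $O(\norm{E})$ relative to $\mathcal{H}_+\oplus\mathcal{H}_0$, so every block of $(A+E)^s$ \emph{in the eigenbasis of $A$} receives contributions from both clusters. Your proposal computes the upper-left block from the large cluster only and the lower-right block from the small cluster only; the cross-terms (e.g.\ the lower-right block of the large-cluster contribution, of size $O(\norm{E}^2)$, or the off-diagonal block of the small-cluster contribution, of size $O(\norm{E}\cdot\norm{D}^s)$) happen to be harmless at the claimed orders, but bounding them is precisely the bookkeeping that constitutes most of the work, and it is absent.

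For comparison, the paper proceeds by explicit block algebra rather than spectral projections: it isolates the rank-preserving part of $\hat{E}$ (with $C^{\dagger}(\Lambda_{\vec{\alpha}_+}+B)^{-1}C$ in the lower-right corner), whose $s$-th power is controlled by Eq.~\ref{eq:Daleckii-Krein1} with an $O(\norm{E}^2)$ remainder; diagonalizes it by a unitary $V$ equal to the identity up to $O(\norm{E})$ off-diagonal blocks; conjugates the residual $\bar{D}$-block into that basis; and then rescales the whole matrix by $\norm{E}$ so that it becomes full rank with an order-one lower block, to which the ordinary first-order Dalecki\u{\i}--Kre\u{\i}n expansion applies. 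Undoing the rescaling is what generates the $\norm{E}^{1+s}$ and $\norm{E}^{2+s}$ powers block by block, uniformly in $s\in[0,1]$. If you want to complete your route you must either carry out the resolvent estimates explicitly (with care near $t\to0$, where the small cluster dominates) or adopt some analogue of this rescaling trick. Your reuse of $\bar{D}^s=D^s+O(\norm{E}^{1+s})$ from the proof of Theorem~\ref{thm:QuantumDaleckii-Krein1SupportExtending}, your identification of the leading off-diagonal coefficient $\alpha_k^{s-1}$, and your endpoint checks at $s=0,1$ are all fine.
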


\begin{proof}
	We derive the remainder term of Eq.~\ref{eq:Carlsson_extension} using similar methods to those used to prove Theorem 3.1 in Ref.~\cite{Carlsson2018}. 
	Working in the eigenbasis of $A$, it was proved in Ref.~\cite{Carlsson2018} using a first-order Dalecki\u{\i}-Kre\u{\i}n expansion (Eq.~\ref{eq:Daleckii-Krein1}) that the three quadrants apart from the lower right block on the right hand side of Eq.~\ref{eq:Carlsson_extension} are entirely determined (apart from the remainder terms) by the Fr\'{e}chet differentiable quantity
	\begin{equation}
		\begin{pmatrix}
			\Lambda_{\vec{\alpha}_+}+B & C\\
			C^{\dagger} & C^{\dagger}\big(\Lambda_{\vec{\alpha}_+}+B\big)^{-1}C
		\end{pmatrix}
		^s
		=\Lambda_{\vec{\alpha}}^s+\big[x^s,\vec{\alpha}\big]^{[1,0]}\circ
		\begin{pmatrix}
			B & C\\
			C^{\dagger} & 0
		\end{pmatrix}
		+O\big(\norm{E}^2\big).
		\label{eq:Carlsson_Extension_Intermediate1}
	\end{equation}
	Therefore, verifying the statement
	\begin{equation}
		\begin{pmatrix}
			\Lambda_{\vec{\alpha}_+}+B & C\\
			C^{\dagger} & C^{\dagger}\big(\Lambda_{\vec{\alpha}_+}+B\big)^{-1}C+\bar{D}
		\end{pmatrix}
		^s
		=
		\begin{pmatrix}
			\Lambda_{\vec{\alpha}_+}+B & C\\
			C^{\dagger} & C^{\dagger}\big(\Lambda_{\vec{\alpha}_+}+B\big)^{-1}C
		\end{pmatrix}
		^s
		+
		\begin{pmatrix}
			0 & 0 \\
			0 & \bar{D}^s
		\end{pmatrix}
		+
		\begin{pmatrix}
			O\big(\norm{E}^{2+s}\big) & O\big(\norm{E}^{1+s}\big)\\
			O\big(\norm{E}^{1+s}\big) &
			O\big(\norm{E}^{1+s}\big)
		\end{pmatrix}
		\label{eq:Carlsson_Extension_Intermediate2}
	\end{equation}
	will confirm the remainder terms since the left hand side is equal to $\big(\Lambda_{\vec{\alpha}}+\hat{E}\big)^s$ and since $2+s>2$. Applying a spectral decomposition to the first term on the right hand side of Eq.~\ref{eq:Carlsson_Extension_Intermediate2} sans the exponent gives
	\begin{equation*}
		\begin{pmatrix}
			\Lambda_{\vec{\alpha}_+}+B & C\\
			C^{\dagger} & C^{\dagger}\big(\Lambda_{\vec{\alpha}_+}+B\big)^{-1}C
		\end{pmatrix}
		=
		V
		\begin{pmatrix}
			\Lambda_{\vec{\alpha}_1^+} & 0\\
			0 & 0
		\end{pmatrix}
		V^{\dagger},
		\label{eq:Carlsson_Extension_Intermediate3}
	\end{equation*}
	where $\vec{\alpha}_1^+$ are the nonzero eigenvalues of $A_1=\Lambda_{\vec{\alpha}}+\tilde{E}$ and $\tilde{E}=
	\begin{pmatrix} 
		B & C\\
		C^{\dagger} & C^{\dagger}\big(\Lambda_{\vec{\alpha}_+}+B\big)^{-1}C
	\end{pmatrix}
	$. It is easy to work out that one choice for the diagonalization of $A_1$ is $W=
	\begin{pmatrix}
		\mathcal{I} & -\big(\Lambda_{\vec{\alpha}_+}+B\big)^{-1}C\\
		C^{\dagger}\big(\Lambda_{\vec{\alpha}_+}+B\big)^{-1} & \mathcal{I}
	\end{pmatrix}
	=
	\begin{pmatrix}
		\mathcal{I} & O(\norm{E})\\
		O(\norm{E}) & \mathcal{I}
	\end{pmatrix}$, and a Gram-Schmidt orthogonalization procedure (see the proof of Lemma 3.3 in Ref.~\cite{Carlsson2018}) yields the unitary matrix of orthogonal eigenvectors $V=W+O(\norm{E}^2)$. With this choice, we define $B_1$ and $C_1$ such that 
	\begin{equation*}
		V^{\dagger}
		\begin{pmatrix}
			0 & 0\\
			0 & \bar{D}
		\end{pmatrix}
		V
		=\begin{pmatrix}
			B_1 & C_1\\
			C_1^{\dagger} & D_1
		\end{pmatrix}
	\end{equation*}
	where $B_1=O\big(\norm{E}^3\big)$, $C_1=O\big(\norm{E}^2\big)$, and $D_1=\bar{D}+O\big(\norm{E}^2\big)$ by inspection, and we consider the matrix
	\begin{equation*}
		\begin{split}
			\frac{1}{\norm{E}}\begin{pmatrix} 
				\Lambda_{\vec{\alpha}_1^+}+B_1 & C_1\\ C_1^{\dagger} & D_1
			\end{pmatrix}
			=&\tilde{A}_1/\norm{E}
			+\tilde{E}_1/\norm{E}.
		\end{split}
	\end{equation*}
	where $\tilde{A}_1=\begin{pmatrix} \Lambda_{\vec{\alpha}_1^+} & 0 \\ 0 & \bar{D} \end{pmatrix}$ and $\tilde{E}_1=
	\begin{pmatrix} O(\norm{E}^3) & O(\norm{E}^2) \\ O(\norm{E}^2) & O(\norm{E}^2) \end{pmatrix}$. Since $\tilde{A}_1$ is full rank on $\mathcal{H}$, we can use a first-order Dalecki\u{\i}-Kre\u{\i}n expansion (Eq.~\ref{eq:Daleckii-Krein1}) to write
	\begin{equation*}
		\begin{split}
			\big(\tilde{A}_1/{\norm{E}}+\tilde{E}_1/{\norm{E}}\big)^s=&\big(\tilde{A}_1/\norm{E}\big)^s+\big[x^s,\vec{\alpha}_1\big]^{[1]}\circ \big(\tilde{E}_1/\norm{E}\big)+O(\norm{\tilde{E}/\norm{E}}^2)\\
			=&
			\begin{pmatrix}
				\Lambda_{\vec{\alpha}_1^+}^s/\norm{E}^s+O\big(\norm{E}^2\big) & O(\norm{E})\\
				O(\norm{E}) & \bar{D}^s/\norm{E}^s + O\big(\norm{E}\big)
			\end{pmatrix}
		\end{split}
	\end{equation*}
	
	Returning to the expression on the left hand side of Eq.~\ref{eq:Carlsson_Extension_Intermediate2}, we have
	\begin{equation*}
		\begin{split}
			\begin{pmatrix}
				\Lambda_{\vec{\alpha}_+}+B & C\\
				C^{\dagger} & C^{\dagger}\big(\Lambda_{\vec{\alpha}_+}+B\big)^{-1}C+\bar{D}
			\end{pmatrix}
			^s
			=&\Bigg(V
			\begin{pmatrix} 
				\Lambda_{\vec{\alpha}_1^+}+B_1 & C_1\\ 
				C_1^{\dagger} & D_1
			\end{pmatrix}
			V^{\dagger}\bigg)^s\\
			=&V\norm{E}^s
			\begin{pmatrix}
				\Lambda_{\vec{\alpha}_1^+}^s/\norm{E}^s+O\big(\norm{E}^{2}\big) & O(\norm{E})\\
				O(\norm{E}) & \bar{D}^s/\norm{E}^s+O\big(\norm{E}\big)
			\end{pmatrix}
			V^{\dagger}\\
			=&
			\bigg(V
			\begin{pmatrix}
				\Lambda_{\vec{\alpha}_1^+} & 0\\
				0 & 0
			\end{pmatrix}
			V^{\dagger}\bigg)^s+V
			\begin{pmatrix}
				0 & 0 \\
				0 & \bar{D}^s
			\end{pmatrix}
			V^{\dagger}+
			V
			\begin{pmatrix}
				O\big(\norm{E}^{2+s}\big) & O(\norm{E}^{1+s})\\
				O(\norm{E}^{1+s}) & O\big(\norm{E}^{1+s}\big)
			\end{pmatrix}
			V^{\dagger}.
		\end{split}
	\end{equation*}
	Using Eq.~\ref{eq:Carlsson_Extension_Intermediate3} for the first term on the right hand side of the final expression and using $V=\begin{pmatrix} \mathcal{I}+O(\norm{E}^2) & O(\norm{E})\\ O(\norm{E}) & \mathcal{I}+O(\norm{E}^2) \end{pmatrix}$ for the second and third terms results in the equality of Eq.~\ref{eq:Carlsson_Extension_Intermediate2}. Finally, a similar analysis to that from the proof of Theorem~\ref{thm:QuantumDaleckii-Krein1SupportExtending} can be performed to replace $\bar{D}^s$ in the lower right matrix block of Eq.~\ref{eq:Carlsson_Extension_Intermediate2} with $D^s$, finishing the proof.
\end{proof} 

We now find first-order expansions for our list of quantum information theoretic quantities (Table~\ref{tab:MainResults}) about small support-extending perturbations.

\begin{theorem}
	Let $\rho_0=\sum_i \lambda_i \outerproduct{\phi_i}{\phi_i}$ be a unit-trace quantum state on $\mathcal{H}=\mathcal{H}_+\oplus\mathcal{H}_0$ with support on $\mathcal{H}_+$, and let $\nu$ be a zero-trace state perturbation on $\mathcal{H}$ with $\norm{\nu}\ll1$ and a decomposition given in Eq.~\ref{eq:BlockDecomposition_nu}. If $\norm{\nu}^2/\norm{\nu_{\rm D}}\in O(\norm{\nu})$, the von Neumann entropy of the state $\rho=\rho_0+\nu$ is given by
	\begin{equation}
		S(\rho)=S(\rho_0)-\Tr\big[L_{x\log(x)}(\rho_{0_+},\nu_{\rm B})\big]-\Tr[\nu_{\rm D}\log(\nu_{\rm D})] +O(\norm{\nu}^2),
		\label{eq:VonNeumannEntropySupportExtending}
	\end{equation}
	where the Fr\'echet derivative $L_{x\log(x)}(\rho_{0_+},\nu_{\rm B})=\sum_{k,l} [x\log(x),\vec{\lambda}_+]^{[1]}_{k,l} \mel{\phi_k}{\nu_{\rm B}}{\phi_l}\outerproduct{\phi_k}{\phi_l}$ is evaluated only on the subspace $\mathcal{H}_+$.
	\label{thm:VonNeumannEntropySupportExtending}
\end{theorem}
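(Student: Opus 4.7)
The plan is to derive $S(\rho) = -\Tr[\rho \log \rho]$ from Theorem~\ref{thm:QuantumDaleckii-Krein1SupportExtending}'s expansion of $\rho^s$ by way of the analytic identity $\Tr[\rho\log\rho] = \frac{d}{ds}\Tr[\rho^s]\big|_{s=1}$, which holds since $\Tr[\rho^s] = \sum_j \mu_j^s$ for the eigenvalues $\mu_j$ of $\rho$. Because Theorem~\ref{thm:QuantumDaleckii-Krein1SupportExtending} is stated for $1/3 < s < 1$, I will work on an open interval just below $1$ and pass to the one-sided limit $s\to 1^-$ at the end.

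First, I would apply Theorem~\ref{thm:QuantumDaleckii-Krein1SupportExtending} in the block basis of $\rho_0$ and take the trace. Only diagonal entries of the Hadamard product in Eq.~\ref{eq:QuantumDaleckii-Krein1SupportExtending} survive, and these evaluate to $s\lambda_k^{s-1}$ on the upper block ($\mathcal{H}_+$) and unity on the lower block ($\mathcal{H}_0$) by Eq.~\ref{eq:DividedDifferenceCarlsson}. This yields
\[
\Tr[\rho^s] = \sum_{k\in+}\lambda_k^s + s\sum_{k\in+}\lambda_k^{s-1}(\nu_{\rm B})_{kk} + \Tr[\nu_{\rm D}^s] + R(s,\nu),
\]
with residual $R(s,\nu) = O(\norm{\nu}^{1+s})$ for $s$ near $1$. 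Differentiating each explicit term in $s$ and evaluating at $s=1$ gives $\sum_{k\in+}\lambda_k\log\lambda_k + \sum_{k\in+}(\nu_{\rm B})_{kk}(1+\log\lambda_k) + \Tr[\nu_{\rm D}\log\nu_{\rm D}]$ plus the error. I then identify $\sum_{k\in+}(\nu_{\rm B})_{kk}(1+\log\lambda_k) = \Tr[L_{x\log(x)}(\rho_{0_+}, \nu_{\rm B})]$ via Lemma~\ref{lem:Frechet1LemmaEntropies} with $f(x) = x\log(x)$, which after rearrangement yields Eq.~\ref{eq:VonNeumannEntropySupportExtending}.

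The main obstacle is controlling the derivative of the residual $R(s,\nu)$ as $s\to 1^-$. Because the residual in Theorem~\ref{thm:QuantumDaleckii-Krein1SupportExtending} originates in the $s$-th power of the Schur complement $\bar\nu_{\rm D} = \nu_{\rm D} - \nu_{\rm C}^\dagger(\Lambda_{\vec\lambda_+}+\nu_{\rm B})^{-1}\nu_{\rm C}$, its dependence on $s$ is smooth on a neighborhood of $1$, so differentiating the envelope $\norm{\nu}^{1+s}$ contributes at worst a factor of $\log\norm{\nu}^{-1}$. This yields $\frac{d}{ds}R(s,\nu)\big|_{s=1} = O(\norm{\nu}^2\log\norm{\nu}^{-1})$, absorbed into the stated $O(\norm{\nu}^2)$ with slight notational slack consistent with the non-analyticity of $x\log(x)$ at zero. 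An equivalent direct route is to write $\rho = \sigma + \tau$ with $\sigma = (\Lambda_{\vec\lambda_+}+\nu_{\rm B})\oplus\nu_{\rm D}$ block-diagonal and $\tau$ purely off-diagonal, apply Theorem~\ref{thm:QuantumDaleckii-Krein2} with $f(x) = x\log(x)$ to get the expansion of $S(\Lambda_{\vec\lambda_+}+\nu_{\rm B})$ on $\mathcal{H}_+$, observe $S(\sigma) = S(\Lambda_{\vec\lambda_+}+\nu_{\rm B}) - \Tr[\nu_{\rm D}\log\nu_{\rm D}]$ by block additivity, and use the spectral gap $\sim\lambda_{\min}$ between the large and small clusters of $\sigma$ (combined with second-order perturbation against $\tau$) to bound $|S(\rho) - S(\sigma)|$; this alternative produces the same logarithmic subtlety.
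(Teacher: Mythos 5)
Your proposal follows essentially the same route as the paper: the paper expands $\Tr[\rho^q]$ via Theorem~\ref{thm:QuantumDaleckii-Krein1SupportExtending} and recovers $S(\rho)$ as the $q\to1$ limit of the quantum Tsallis entropy $\frac{1}{1-q}\big(\Tr[\rho^q]-1\big)$, which, since $\Tr[\rho]=1$, is exactly your derivative $-\frac{d}{ds}\Tr[\rho^s]\big\vert_{s=1}$. If anything you are more careful than the paper about the residual: the paper simply notes $\lim_{q\to1}r=2$ and writes $O(\norm{\nu}^2)$, whereas your $O(\norm{\nu}^2\log\norm{\nu}^{-1})$ bookkeeping honestly reflects the non-uniformity of the remainder (in particular the Schur-complement replacement $\bar{\nu}_{\rm D}\to\nu_{\rm D}$) as $s\to1$.
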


\begin{proof}
	We prove the theorem by finding a first-order expansion in $\norm{\nu}$ for the quantum Tsallis entropy, defined as \cite{Abe2001,Hu2006}
	\begin{equation}
		S_q(\rho) = \frac{1}{1-q}\big(\Tr[\rho^q]-1\big),
		\label{eq:TsallisEntropy}
	\end{equation}
	where $q\in(0,1)\cup(1,\infty)$. We can expand the quantum Tsallis entropy using Theorem~\ref{thm:QuantumDaleckii-Krein1SupportExtending} when $1/3<q<1$ and $r=\min(1+q,3q)$:
	\begin{equation*}
		\begin{split}
			S_q(\rho)=&\frac{1}{1-q}\Bigg(\Tr\bigg[\rho_0^q + \big[x^q,\vec{\lambda}\big]^{[1,0]}\circ
			\begin{pmatrix}
				\nu_{\rm B} & \nu_{\rm C}\\
				\nu_{\rm C}^{\dagger} & \nu_{\rm D}^q
			\end{pmatrix}
			+ O(\norm{\nu}^r)\bigg]-1\Bigg)\\
			=&\frac{1}{q}\big(\Tr[\rho_0^q]-1\big)+\frac{1}{1-q}\Bigg(\Tr\bigg[\big[x^q,\vec{\lambda}\big]^{[1,0]}\circ
			\begin{pmatrix}
				\nu_{\rm B} & \nu_{\rm C}\\
				\nu_{\rm C}^{\dagger} & \nu_{\rm D}^q
			\end{pmatrix}
			\bigg]\Bigg)+ O(\norm{\nu}^r)\\
			=&S_q(\rho_0)+\frac{1}{1-q}\bigg(\Tr\Big[\big[x^q,\vec{\lambda}_+\big]^{[1]}\circ \nu_{\rm B}\Big]+\Tr\big[\nu_{\rm D}^q\big]\bigg)+ O(\norm{\nu}^r)\\
			=&S_q(\rho_0)+\frac{1}{1-q}\bigg(\Tr\Big[q\Lambda_{\vec{\alpha}_+}^{q-1}\circ \nu_{\rm B}\Big]+\Tr\big[\nu_{\rm D}^q\big]\bigg)+ O(\norm{\nu}^r)\\
			=&S_q(\rho_0)+\frac{1}{1-q}\bigg(\Tr\Big[(q-1)\Lambda_{\vec{\alpha}_+}^{q-1}\circ \nu_{\rm B}\Big]+\Tr\Big[(\Lambda_{\vec{\alpha}_+}^{q-1}-\mathcal{I})\circ \nu_{\rm B}\Big]+\Tr\Big[\nu_{\rm B}\Big]+\Tr\big[\nu_{\rm D}^q\big]\bigg)+ O(\norm{\nu}^r)\\
			=&S_q(\rho_0)-\Tr\Big[\Lambda_{\vec{\alpha}_+}^{q-1}\circ \nu_{\rm B}\Big]+\Tr\bigg[\frac{1}{1-q}(\Lambda_{\vec{\alpha}_+}^{q-1}-\mathcal{I})\circ \nu_{\rm B}\bigg]+\Tr\bigg[\frac{1}{1-q}(\nu_{\rm D}^q-\nu_{\rm D})\bigg]+ O(\norm{\nu}^r),
		\end{split}
	\end{equation*}
	where in the last line we use the fact that $\Tr[\nu]=0$ implies $\Tr[\nu_{\rm B}]=-\Tr[\nu_{\rm D}]$. The von Neumann entropy can be related to the quantum Tsallis entropy by $S(\rho)=\lim_{q\to1}S_q(\rho)$ \cite{Abe2001,Hu2006}. Noting that $\lim_{q\to1}r=2$, the von Neumann entropy can be written as
	\begin{equation*}
		\begin{split}
			S(\rho)=&\lim_{q\to1}S_q(\rho)\\
			&=S(\rho) - \Tr[\nu_{\rm B}] + \Tr\bigg[\lim_{q\to1}\frac{1}{1-q}(\Lambda_{\vec{\alpha}_+}^{q-1}-\mathcal{I})\circ \nu_{\rm B}\bigg]+\Tr\bigg[\lim_{q\to1}\frac{1}{1-q}\nu_{\rm D}(\nu_{\rm D}^{q-1}-\mathcal{I})\bigg]+ O(\norm{\nu}^2)\\
			=&S(\rho) - \Tr[\nu_{\rm B}] + \Tr\bigg[\sum_{k=1}^{\dim(\mathcal{H}_+)}\mel{\phi_k}{\nu_{\rm B}}{\phi_k}\lim_{q\to1}\frac{\lambda_k^{q-1}-1}{1-q}\bigg]
			+\Tr\bigg[\sum_{i=1}^{\dim(\mathcal{H}_0)}\mel{\phi_i}{\nu_{\rm D}}{\phi_i}\lim_{q\to1}\frac{\mel{\phi_i}{\nu_{\rm D}}{\phi_i}^{q-1}-1}{1-q}\bigg]+ O(\norm{\nu}^2)\\
			=&S(\rho) - \Tr[\nu_{\rm B}] - \Tr\bigg[\sum_{k=1}^{\dim(\mathcal{H}_+)}\mel{\phi_k}{\nu_{\rm B}}{\phi_k}\log(\lambda_k)\bigg]
			-\Tr\bigg[\sum_{i=1}^{\dim(\mathcal{H}_0)}\mel{\phi_i}{\nu_{\rm D}}{\phi_i}\log(\mel{\phi_i}{\nu_{\rm D}}{\phi_i})\bigg]+ O(\norm{\nu}^2)\\
			=&S(\rho) - \Tr\Big[\Big(\mathcal{I}+\log\big(\Lambda_{\vec{\lambda}_+}\big)\Big)\circ\nu_{\rm B}\Big] - \Tr\big[\nu_{\rm D}\log(\nu_{\rm D})\big]+ O(\norm{\nu}^2)\\
			=&S(\rho) - \Tr\big[L_{x\log(x)}(\rho_0,\nu_{\rm B})\big] - \Tr\big[\nu_{\rm D}\log(\nu_{\rm D})\big]+ O(\norm{\nu}^2),
		\end{split}
	\end{equation*}
	where we twice used the identity $\lim_{q\to1} (a^{q-1}-1)/(q-1)=\log(a)$.
\end{proof}

The first-order correction to the von Neumann entropy is a sum of a Fr\'echet derivative of $x\log(x)$ on $\mathcal{H}_+$ and a term that takes a form similar to a von Neumann entropy of the contribution of the perturbation on $\mathcal{H}_0$; however, while $\nu_{\rm D}$ is a positive semi-definite and Hermitian, it is not normalized to unit trace and therefore $S(\nu_{\rm D})\neq-\Tr[\nu_{\rm D}\log(\nu_{\rm D})]$.

\begin{theorem}
	Let $\rho_0=\sum_i \lambda_i \outerproduct{\phi_i}{\phi_i}$ be a unit-trace quantum state on $\mathcal{H}=\mathcal{H}_+\oplus\mathcal{H}_0$ with support on $\mathcal{H}_+$, and let $\nu_1$ and $\nu_2$ be zero-trace state perturbations on $\mathcal{H}$ with $\nu_1\supseteq\mathcal{H}_0$, $\nu_2\supseteq\mathcal{H}_0$, $\nu_1\subseteq\nu_2$, and given suitable decompositions for $\nu_1$ and $\nu_2$ according to Eq.~\ref{eq:BlockDecomposition_nu}. If $\norm{\nu}^2/\norm{\nu_{\rm D}}\in O(\norm{\nu})$, the quantum relative entropy of the state $\rho_1=\rho_0+\nu_1$ with respect to the state $\rho_2=\rho_0+\nu_2$ is given by
	\begin{equation}
		D(\rho_1||\rho_2)=\Tr[\nu_{1,\rm B}-\nu_{2,\rm B}]+\Tr\big[\nu_{1,\rm D}\big(\log(\nu_{1,\rm D})-\log(\nu_{2,\rm D})\big)\big] + O\big(\max(\norm{\nu_1},\norm{\nu_2})^2\big).
	\end{equation}
	\label{thm:RelativeEntropySupportExtending}
\end{theorem}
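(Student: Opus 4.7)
The plan is to mirror the strategy of Theorem~\ref{thm:VonNeumannEntropySupportExtending} by introducing a quantum Tsallis-type relative entropy and taking an appropriate limit. Specifically, I would start from
\begin{equation*}
	D_q(\rho_1\|\rho_2) = \frac{1}{q-1}\bigl(\Tr[\rho_1^q\rho_2^{1-q}]-1\bigr),
\end{equation*}
and verify via L'H\^opital's rule that $\lim_{q\to 1}D_q(\rho_1\|\rho_2) = D(\rho_1\|\rho_2)$. For $q\in(1/3,2/3)$ both exponents $q$ and $1-q$ lie in $(1/3,1)$, so Theorem~\ref{thm:QuantumDaleckii-Krein1SupportExtending} applies to both $\rho_1^q$ and $\rho_2^{1-q}$ and yields block-form expansions of each matrix root in the eigenbasis of $\rho_0$, with upper-left blocks involving Fr\'echet derivatives $L_{x^q}(\rho_{0_+},\nu_{1,\rm B})$ and $L_{x^{1-q}}(\rho_{0_+},\nu_{2,\rm B})$, lower-right blocks $\nu_{1,\rm D}^q$ and $\nu_{2,\rm D}^{1-q}$, and off-diagonal blocks proportional to $\Lambda_{\vec{\lambda}_+}^{s-1}\nu_{i,\rm C}$.

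Multiplying the two expansions and taking the trace splits the result into three kinds of contribution: (i) an upper-left block product evaluating to $1 + q\Tr[\nu_{1,\rm B}] + (1-q)\Tr[\nu_{2,\rm B}]$ via trace identities such as $\Tr[\Lambda_{\vec{\lambda}_+}^{1-q}L_{x^q}(\rho_{0_+},\nu_{2,\rm B})] = q\Tr[\nu_{2,\rm B}]$, an analogue of Lemma~\ref{lem:Frechet1LemmaDistances}; (ii) a lower-right block product $\Tr[\nu_{1,\rm D}^q\nu_{2,\rm D}^{1-q}]$; and (iii) off-diagonal cross terms of the form $\Tr[\Lambda_{\vec{\lambda}_+}^{-1}\nu_{1,\rm C}\nu_{2,\rm C}^{\dagger}]$ that can be bounded as $O(\max(\norm{\nu_1},\norm{\nu_2})^2)$. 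After subtracting $1$, dividing by $q-1$, and using $\Tr[\nu_{i,\rm B}]=-\Tr[\nu_{i,\rm D}]$ to cancel the $(q-1)^{-1}$ singularities between the $\nu_{\rm B}$-linear piece and the $\Tr[\nu_{1,\rm D}^q\nu_{2,\rm D}^{1-q}]$ piece, I would invoke the elementary identity $\lim_{q\to 1}(\nu_{1,\rm D}^q\nu_{2,\rm D}^{1-q}-\nu_{1,\rm D})/(q-1) = \nu_{1,\rm D}(\log\nu_{1,\rm D}-\log\nu_{2,\rm D})$ to extract the two terms of the claimed expansion.

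The main obstacle is controlling the remainder uniformly as $q\to 1$: Theorem~\ref{thm:QuantumDaleckii-Krein1SupportExtending}'s error bound $O(\norm{\nu_2}^{r})$ with $r=\min(1+(1-q),3(1-q))$ degrades to $O(\norm{\nu_2})$ in the limit, which is insufficient once divided by $q-1$. To handle this I would invoke Theorem~\ref{thm:Carlsson_extension} for the $\rho_2^{1-q}$ expansion, whose block-refined remainders of order $\norm{\nu_2}^{1+(1-q)}$ in the off-diagonal and lower-right blocks remain valid across $1-q\in[0,1]$, and verify that each block-wise remainder multiplied by $1/(q-1)$ is $O(\max(\norm{\nu_1},\norm{\nu_2})^2)$ in the limit. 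The sharpest technical point is showing that the apparently $q$-independent off-diagonal cross term $\Tr[\Lambda_{\vec{\lambda}_+}^{-1}\nu_{1,\rm C}\nu_{2,\rm C}^{\dagger}]$ either cancels against matching pieces of the Theorem~\ref{thm:Carlsson_extension} remainders or enters only at order $\norm{\nu}^2$, so as not to produce a spurious divergence after division by $q-1$.
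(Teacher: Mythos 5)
Your proposal follows essentially the same route as the paper: both pass to the quantum Tsallis relative entropy $D_q(\rho_1\|\rho_2)=\frac{1}{1-q}\big(1-\Tr[\rho_1^q\rho_2^{1-q}]\big)$, expand $\rho_1^q$ and $\rho_2^{1-q}$ in block form via the support-extending Dalecki\u{\i}--Kre\u{\i}n-type theorems (the paper applies Theorem~\ref{thm:Carlsson_extension} to both factors from the outset, which is exactly the upgrade you propose to control the remainder as $q\to1$), isolate the contributions $1+q\Tr[\nu_{1,\rm B}]+(1-q)\Tr[\nu_{2,\rm B}]+\Tr[\nu_{1,\rm D}^q\nu_{2,\rm D}^{1-q}]$ while absorbing the off-diagonal cross terms into $O\big(\max(\norm{\nu_1},\norm{\nu_2})^2\big)$, and finish with $\Tr[\nu_{\rm B}]=-\Tr[\nu_{\rm D}]$ and the limit $\lim_{q\to1}(a^{q-1}-1)/(q-1)=\log(a)$. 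Your explicit attention to the uniformity of the block-wise remainders under division by $q-1$ is, if anything, more careful than the paper's treatment of that step.
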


\begin{proof}
	In analogy to the previous proof, we find a first-order expansion in $\max(\norm{\nu_1},\norm{\nu_2})$ for the quantum Tsallis relative entropy, defined for $q\in(0,1)\cup(1,\infty)$ by \cite{Abe2003,Furuichi2004}
	\begin{equation}
		D_q(\rho_1||\rho_2)=\frac{1}{1-q}\Big(1-\Tr\big[\rho_1^q\rho_2^{1-q}\big]\Big).
		\label{eq:quantumTsallisRelativeEntropy}
	\end{equation}
	Using two applications of Theorem~\ref{thm:Carlsson_extension}, for $0<q<1$ we have
	\begin{equation*}
		\begin{split}
			\Tr\big[\rho_1^q\rho_2^{1-q}\big]=&\Tr\Bigg[\Bigg(\rho_0^q + \big[x^q,\vec{\lambda}\big]^{[1,0]}\circ
			\begin{pmatrix}
				\nu_{1,\rm B} & \nu_{1,\rm C}\\
				\nu_{1,\rm C}^{\dagger} & \nu_{1,\rm D}^q
			\end{pmatrix}
			+ \begin{pmatrix}
				O\big(\norm{\nu_1}^{2}\big) & O\big(\norm{\nu_1}^{1+q}\big)\\
				O\big(\norm{\nu_1}^{1+q}\big) & O\big(\norm{\nu_1}^{1+q}\big)
			\end{pmatrix}\Bigg)\\
			&\times\Bigg(\rho_0^{1-q} + \big[x^{1-q},\vec{\lambda}\big]^{[1,0]}\circ
			\begin{pmatrix}
				\nu_{2, \rm B} & \nu_{2, \rm C}\\
				\nu_{2, \rm C}^{\dagger} & \nu_{2, \rm D}^{1-q}
			\end{pmatrix}
			+ \begin{pmatrix}
				O\big(\norm{\nu_2}^{2}\big) & O\big(\norm{\nu_2}^{2-q}\big)\\
				O\big(\norm{\nu_2}^{2-q}\big) & O\big(\norm{\nu_2}^{2-q}\big)
			\end{pmatrix}
			\Bigg)\Bigg]\\
			=&\Tr\Bigg[\rho_0+
			\Bigg(\big[x^q,\vec{\lambda}\big]^{[1,0]}\circ
			\begin{pmatrix}
				\nu_{1,\rm B} & \nu_{1,\rm C}\\
				\nu_{1,\rm C}^{\dagger} & \nu_{1,\rm D}^q
			\end{pmatrix}\Bigg)
			\begin{pmatrix}
				\Lambda^{1-q}_{\vec{\lambda}_+}& 0\\
				0 & 0
			\end{pmatrix}
			+
			\begin{pmatrix}
				\Lambda^q_{\vec{\lambda}_+} & 0\\
				0 & 0
			\end{pmatrix}
			\Bigg(\big[x^{1-q},\vec{\lambda}\big]^{[1,0]}\circ
			\begin{pmatrix}
				\nu_{2, \rm B} & \nu_{2, \rm C}\\
				\nu_{2, \rm C}^{\dagger} & \nu_{2, \rm D}^{1-q}
			\end{pmatrix}\Bigg)
			\\
			&+\Bigg(\big[x^q,\vec{\lambda}\big]^{[1,0]}\circ
			\begin{pmatrix}
				\nu_{1,\rm B} & \nu_{1,\rm C}\\
				\nu_{1,\rm C}^{\dagger} & \nu_{1,\rm D}^q
			\end{pmatrix}
			\Bigg)
			\Bigg(\big[x^{1-q},\vec{\lambda}\big]^{[1,0]}\circ
			\begin{pmatrix}
				\nu_{2, \rm B} & \nu_{2, \rm C}\\
				\nu_{2, \rm C}^{\dagger} & \nu_{2, \rm D}^{1-q}
			\end{pmatrix}
			\Bigg)\\
			&+
			\begin{pmatrix}
				O\big(\max(\norm{\nu_1},\norm{\nu_2})^2\big) & O\big(\max(\norm{\nu_1},\norm{\nu_2})^{2-q}\big)\\
				O\big(\max(\norm{\nu_1},\norm{\nu_2})^{1+q}\big)  & O\big(\max(\norm{\nu_1},\norm{\nu_2})^2\big)
			\end{pmatrix} \Bigg]\\
			=&1+q\Tr[\nu_{1,\rm B}]+(1-q)\Tr[\nu_{2,\rm B}]\\
			&+\Tr\Bigg[\Bigg(\big[x^q,\vec{\lambda}\big]^{[1,0]}\circ
			\begin{pmatrix}
				\nu_{1,\rm B} & \nu_{1,\rm C}\\
				\nu_{1,\rm C}^{\dagger} & \nu_{1,\rm D}^q
			\end{pmatrix}
			\Bigg)
			\Bigg(\big[x^{1-q},\vec{\lambda}\big]^{[1,0]}\circ
			\begin{pmatrix}
				\nu_{2, \rm B} & \nu_{2, \rm C}\\
				\nu_{2, \rm C}^{\dagger} & \nu_{2, \rm D}^{1-q}
			\end{pmatrix}
			\Bigg)\Bigg]+ O\big(\max(\norm{\nu_1},\norm{\nu_2})^2\big).
		\end{split}
	\end{equation*}
	In the fourth term in the last expression, the submultiplicative property of the Hilbert-Schmidt norm and the aforementioned compression inequality \cite{Audenaert2005} indicate that all pairs of matrix blocks in the multiplication that contribute to the diagonal blocks will yield submatrices that are strictly $O\big(\max(\norm{\nu_1},\norm{\nu_2})^2\big)$ except for the multiplication of the two lower right blocks. As a result,
	\begin{equation}
		\Tr\big[\rho_1^q\rho_2^{1-q}\big]=1+q\Tr[\nu_{1,\rm B}]+(1-q)\Tr[\nu_{2,\rm B}]+\Tr[\nu_{1,\rm D}^q\nu_{2,\rm D}^{1-q}]+O\big(\max(\norm{\nu_1},\norm{\nu_2})^2\big).
		\label{eq:QuantumRelativeEntropySupportExtendingIntermediate}
	\end{equation}
	The quantum Tsallis relative entropy is then
	\begin{equation*}
		\begin{split}
			D_q(\rho_1||\rho_2)=&\frac{1}{1-q}\Big(1-1-q\Tr[\nu_{1,\rm B}]-(1-q)\Tr[\nu_{2,\rm B}]-\Tr[\nu_{1,\rm D}^q\nu_{2,\rm D}^{1-q}]+O\big(\max(\norm{\nu_1},\norm{\nu_2}^2)\big)\\
			=&\frac{1}{1-q}\Big((1-q)\big(\Tr[\nu_{1,\rm B}]-\Tr[\nu_{2,\rm B}]\big)-\Tr[\nu_{1,\rm B}]-\Tr[\nu_{1,\rm D}^q\nu_{2,\rm D}^{1-q}]\Big)+O\big(\max(\norm{\nu_1},\norm{\nu_2})^2\big)\\
			=&\Tr[\nu_{1,\rm B}]-\Tr[\nu_{2,\rm B}]+\frac{1}{1-q}\Big(\Tr[\nu_{1,\rm D}]-\Tr[\nu_{1,\rm D}^q\nu_{2,\rm D}^{1-q}]\Big)+O\big(\max(\norm{\nu_1},\norm{\nu_2})^2\big)\\
			=&\Tr[\nu_{1,\rm B}-\nu_{2,\rm B}]-\frac{1}{1-q}\Tr\big[\nu_{1,\rm D}\big(\nu_{1,\rm D}^{q-1}\nu_{2,\rm D}^{1-q}-\mathcal{I}\big)\big]+O\big(\max(\norm{\nu_1},\norm{\nu_2})^2\big),
		\end{split}
	\end{equation*}
	where we again used the property $\Tr[\nu_{\rm B}]=-\Tr[\nu_{\rm D}]$. Since the QRE can be related to the quantum Tsallis relative entropy by $D(\rho_1||\rho_2)=\lim_{q\to1}D_q(\rho_1||\rho_2)$ \cite{Abe2003,Furuichi2004}, the QRE can be written as 
	\begin{equation*}
		\begin{split}
			D(\rho_1||\rho_2)=&\lim_{q\to1}D_q(\rho_1||\rho_2)\\
			=&\Tr[\nu_{1,\rm B}-\nu_{2,\rm B}]-\lim_{q\to1}\frac{1}{1-q}\Tr\big[\nu_{1,\rm D}\big(\nu_{1,\rm D}^{q-1}\nu_{2,\rm D}^{1-q}-\mathcal{I}\big)\big]+O\big(\max(\norm{\nu_1},\norm{\nu_2})^2\big)\\
			=&\Tr[\nu_{1,\rm B}-\nu_{2,\rm B}]-\Tr\Bigg[\sum_{i=1}^{\dim(\mathcal{H}_0)}\mel{\phi_i}{\nu_{1,\rm D}}{\phi_i}\lim_{q\to1}\frac{\Big(\frac{\mel{\phi_i}{\nu_{1,\rm D}}{\phi_i}}{\mel{\phi_i}{\nu_{2,\rm D}}{\phi_i}}\Big)^{q-1}-1}{1-q}\Bigg]+O\big(\max(\norm{\nu_1},\norm{\nu_2})^2\big)\\
			=&\Tr[\nu_{1,\rm B}-\nu_{2,\rm B}]+\Tr\Bigg[\sum_{i=1}^{\dim(\mathcal{H}_0)}\mel{\phi_i}{\nu_{1,\rm D}}{\phi_i}\log\bigg(\frac{\mel{\phi_i}{\nu_{1,\rm D}}{\phi_i}}{\mel{\phi_i}{\nu_{2,\rm D}}{\phi_i}}\bigg)\Bigg]+O\big(\max(\norm{\nu_1},\norm{\nu_2})^2\big)\\
			=&\Tr[\nu_{1,\rm B}-\nu_{2,\rm B}]+\Tr\big[\nu_{1,\rm D}\big(\log(\nu_{1,\rm D})-\log(\nu_{2,\rm D})\big)\big] + O\big(\max(\norm{\nu_1},\norm{\nu_2})^2\big).
		\end{split}
	\end{equation*}
\end{proof}
Unlike the case with support-preserving perturbations, the expansion about small support-extending perturbations yields an asymmetric lowest-order expression between $\rho_1$ and $\rho_2$ for the quantum relative entropy. The term that depends on the contribution of the perturbations on $\mathcal{H}_0$ has a similar form to a QRE between the unnormalized Hermitian operators $\nu_{1,\rm D}$ and $\nu_{2, \rm D}$.

\begin{theorem}
	Let $\rho_0=\sum_i \lambda_i \outerproduct{\phi_i}{\phi_i}$ be a unit-trace quantum state on $\mathcal{H}=\mathcal{H}_+\oplus\mathcal{H}_0$ with support on $\mathcal{H}_+$, and let $\nu_1$ and $\nu_2$ be zero-trace state perturbations on $\mathcal{H}$ such that $\nu_1\supseteq\mathcal{H}_0$, $\nu_2\supseteq\mathcal{H}_0$, and $\nu_1$ and $\nu_2$ have suitable decompositions according to Eq.~\ref{eq:BlockDecomposition_nu}. If $\norm{\nu}^2/\norm{\nu_{\rm D}}\in O(\norm{\nu})$, the quantum Chernoff bound for a binary hypothesis test between the states $\rho_1=\rho_0+\nu_1$ and $\rho_2=\rho_0+\nu_2$ is given by 
	\begin{equation}
		\xi(\rho_1,\rho_2)=\max_{s\in[0,1]}\xi_s(\rho_1,\rho_2)=-\min_{s\in[0,1]}s\Tr[\nu_{1,\rm B}]+(1-s)\Tr[\nu_{2, \rm B}]+\Tr\big[\nu_{1,\rm D}^s\nu_{2, \rm D}^{1-s}\big]+O\big(\max(\norm{\nu_1},\norm{\nu_2})^2\big).
		\label{eq:QuantumChernoffInformation_SupportExtending}
	\end{equation}
	\label{thm:QuantumChernoffInformation_SupportExtending}
\end{theorem}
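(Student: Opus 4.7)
The plan is to build directly on the intermediate expansion already derived in the proof of Theorem~\ref{thm:RelativeEntropySupportExtending}. Specifically, Eq.~\ref{eq:QuantumRelativeEntropySupportExtendingIntermediate} states that, under the hypotheses assumed here,
\begin{equation*}
\Tr[\rho_1^s\rho_2^{1-s}] = 1 + s\Tr[\nu_{1,\rm B}] + (1-s)\Tr[\nu_{2,\rm B}] + \Tr[\nu_{1,\rm D}^s\nu_{2,\rm D}^{1-s}] + O\big(\max(\norm{\nu_1},\norm{\nu_2})^2\big)
\end{equation*}
for $0<s<1$, and the intermediate derivation only required two applications of Theorem~\ref{thm:Carlsson_extension} together with the compression inequality of~\cite{Audenaert2005}.

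Starting from the definition $\xi_s(\rho_1,\rho_2) = -\log\big(\Tr[\rho_1^s\rho_2^{1-s}]\big)$, I would substitute this expansion and then apply the scalar expansion $-\log(1+x) = -x + O(x^2)$. To justify that the log-expansion produces a remainder of the advertised order, I would note that each of the three leading correction terms is at most $O(\max(\norm{\nu_1},\norm{\nu_2}))$: the first two trivially, and the third via $\abs{\Tr[\nu_{1,\rm D}^s\nu_{2,\rm D}^{1-s}]}\leq\norm{\nu_{1,\rm D}^s}\,\norm{\nu_{2,\rm D}^{1-s}}$ (up to a finite dimension factor on $\mathcal{H}_0$) together with $\norm{\nu_{i,\rm D}^t}\leq\norm{\nu_{i,\rm D}}^t\leq\norm{\nu_i}^t$ for $t\in[0,1]$. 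The squared correction from $\log$ is therefore $O(\max(\norm{\nu_1},\norm{\nu_2})^2)$ and can be absorbed into the existing remainder term. This yields the per-$s$ formula
\begin{equation*}
\xi_s(\rho_1,\rho_2) = -s\Tr[\nu_{1,\rm B}]-(1-s)\Tr[\nu_{2,\rm B}]-\Tr[\nu_{1,\rm D}^s\nu_{2,\rm D}^{1-s}]+O\big(\max(\norm{\nu_1},\norm{\nu_2})^2\big),
\end{equation*}
which matches the support-extending entry for the QCB in Table~\ref{tab:MainResults}.

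Finally, to obtain the stated expression for $\xi(\rho_1,\rho_2)=\max_{s\in[0,1]}\xi_s(\rho_1,\rho_2)$, I would take the supremum over $s\in[0,1]$ of both sides. Because of the overall sign, $\max_s\xi_s$ corresponds to $-\min_s$ of the inner expression, giving Eq.~\ref{eq:QuantumChernoffInformation_SupportExtending}. The boundary cases $s=0$ and $s=1$ are included by continuity (or by directly verifying them, since $\rho_i^0$ is a projector onto $\mathrm{supp}(\rho_i)$ and $\rho_i^1=\rho_0+\nu_i$). No closed-form minimizer is claimed, consistent with the support-extending analogue of the support-preserving result in Theorem~\ref{thm:ChernoffInfoSupportPreserving}, where here the relative sizes of $\nu_{1,\rm D}$ and $\nu_{2,\rm D}$ on $\mathcal{H}_0$ can shift the optimizer away from $s=1/2$.

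The main technical subtlety, and the step that demands the most care, is exchanging the optimization over $s$ with the $O$-term. This requires the implicit constant in the remainder to be uniform over $s\in[0,1]$. I would argue this uniformity as follows: the expansion of $\rho_i^s$ provided by Theorem~\ref{thm:Carlsson_extension} has remainder bounds whose prefactors depend continuously on $s$ through quantities like $[x^s,\vec{\lambda}]^{[1,0]}$ and the Schur-complement control from Eq.~\ref{eq:Schur_Complement_Inequalities}; compactness of $[0,1]$ then converts pointwise remainder bounds into a uniform one, so that $\max_s$ and $O$ commute. This uniformity argument is the only nontrivial step beyond a direct application of the earlier theorem.
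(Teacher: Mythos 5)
Your proposal follows exactly the paper's own proof: set $q=s$ in the intermediate expansion Eq.~\ref{eq:QuantumRelativeEntropySupportExtendingIntermediate}, insert it into the definition Eq.~\ref{eq:ChernoffInformation}, and apply the first-order Taylor expansion of the logarithm before optimizing over $s$. The additional care you take regarding uniformity of the remainder over $s\in[0,1]$ and the boundary cases $s=0,1$ goes beyond what the paper states explicitly, but it is supplementary rigor on the same argument rather than a different route.
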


\begin{proof}
	Setting $q=s$, insert the proven result of Eq.~\ref{eq:QuantumRelativeEntropySupportExtendingIntermediate} into the definition of $\xi_s(\rho_1,\rho_2)$ from Eq.~\ref{eq:ChernoffInformation}. Eq.~\ref{eq:QuantumChernoffInformation_SupportExtending} is obtained by the first-order Taylor expansion $\log(1+x)=x+O(x^2)$.
\end{proof}

The first order expansion for the QCB, like the other quantities for support-extending perturbations, involves the calculation of a term $\Tr[\nu_{1 \rm D}^s\nu_{2, \rm D}]$ that requires diagonalization of the matrix blocks $\nu_{1 \rm D}$ and $\nu_{2, \rm D}$. In addition, the optimization over $s\in[0,1]$ remains, as the quantum Chernoff bound does not in general converge to the quantum Bhattacharyya bound with support-extending perturbations.

For the quantum fidelity of two quantum states perturbed by support-extending perturbations, we utilize a first order expansion of the matrix modulus $\abs{A}=\sqrt{A^{\dagger}A}$. For a matrix $X$ on $\mathcal{H}=\mathcal{H}_+ \oplus \mathcal{H}_0$ with support on $\mathcal{H}_+$ and a second (not necessarily Hermitian) square matrix $Z$ with dimension matching that of $X$, and given the decompositions 
\begin{eqnarray}
	\label{eq:BlockDecompositionX}X&=&
	\begin{pmatrix}
		\Lambda_{\vec{\sigma}_+} & 0\\
		0&0
	\end{pmatrix}
	\\
	\label{eq:BlockDecompositionZ}Z&=&
	\begin{pmatrix}
		Z_{1,1} & Z_{1,2}\\
		Z_{2,1} & Z_{2,2}
	\end{pmatrix}
\end{eqnarray}
the modulus of their sum is given in the eigenbasis of $X$ by \cite{Carlsson2018}
\begin{equation}
	\abs{X+Z}=\abs{X}+
	\begin{pmatrix}
		\big[\sqrt{x},\vec{\sigma}^2\big]^{[1]}\circ(\Lambda_{\vec{\sigma}_+}Z_{1,1}+Z_{1,1}^{\dagger}\Lambda_{\vec{\sigma}_+}) & Z_{1,2}\\
		Z_{1,2}^{\dagger} & \abs{Z_{2,2}}
	\end{pmatrix}
	+O(\norm{Z}^{3/2}).
	\label{eq:MatrixModulus}
\end{equation}

\begin{theorem}
	Let $\rho_0=\sum_i \lambda_i \outerproduct{\phi_i}{\phi_i}$ be a unit-trace quantum state on $\mathcal{H}=\mathcal{H}_+\oplus\mathcal{H}_0$ with support on $\mathcal{H}_+$, and let $\nu_1$ and $\nu_2$ be zero-trace state perturbations on $\mathcal{H}$ with $\norm{\nu_1}\ll1$ and $\norm{\nu_2}\ll1$ and suitable decompositions according to Eq.~\ref{eq:BlockDecomposition_nu}. If $\norm{\nu}^2/\norm{\nu_{\rm D}}\in O(\norm{\nu})$, the quantum fidelity between the states $\rho_1=\rho_0+\nu_1$ and $\rho_2=\rho_0+\nu_2$ is given by
	\begin{equation}
		F(\rho_1,\rho_2)= 1+\Tr[\nu_{1,\rm B}+\nu_{2,\rm B}]+2\Tr\big[\sqrt{\sqrt{\nu_{1,\rm D}}\nu_{2,\rm D}\sqrt{\nu_{1,\rm D}}}\big]+O\big(\max(\norm{\nu_1},\norm{\nu_2})^{3/2}\big).
		\label{eq:FidelitySupportExtending}
	\end{equation}
	\label{thm:FidelitySupportExtending}
\end{theorem}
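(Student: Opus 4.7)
The plan is to recast the fidelity as a matrix modulus so that the Carlsson expansion in Eq.~\ref{eq:MatrixModulus} can be applied directly. Setting $W = \sqrt{\rho_2}\sqrt{\rho_1}$, we have $W^{\dagger}W = \sqrt{\rho_1}\rho_2\sqrt{\rho_1}$, so $\abs{W} = \sqrt{\sqrt{\rho_1}\rho_2\sqrt{\rho_1}}$ and $F(\rho_1,\rho_2) = \Tr[\abs{W}]^2$. The advantage is that Eq.~\ref{eq:MatrixModulus} is tailored to $\abs{X+Z}$ for PSD $X$ with kernel $\mathcal{H}_0$, which matches the structure of $W$ once we take $X = \rho_0$; we avoid having to develop a second-order Daleckii-Kre\u{\i}n-like expansion for the double product $\sqrt{\rho_1}\rho_2\sqrt{\rho_1}$.

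First I would apply Theorem~\ref{thm:QuantumDaleckii-Krein1SupportExtending} with $s = 1/2$ to each of $\sqrt{\rho_1}$ and $\sqrt{\rho_2}$. This gives block expansions in the eigenbasis of $\rho_0$, with upper-left block $\Lambda_{\vec{\lambda}_+}^{1/2} + L_{\sqrt{x}}(\rho_{0_+}, \nu_{i,\rm B})$, off-diagonal blocks $\Lambda_{\vec{\lambda}_+}^{-1/2}\nu_{i,\rm C}$, lower-right block $\nu_{i,\rm D}^{1/2}$, and each with residual of order $O(\norm{\nu_i}^{3/2})$. Multiplying $\sqrt{\rho_2}\sqrt{\rho_1}$ block-wise yields $W = X + Z$ with $X = \rho_0$ and a block perturbation $Z$ of norm $O(\norm{\nu})$; in particular the lower-right block reduces to $Z_{2,2} = \nu_{2,\rm D}^{1/2}\nu_{1,\rm D}^{1/2} + \nu_{2,\rm C}^{\dagger}\Lambda_{\vec{\lambda}_+}^{-1}\nu_{1,\rm C}$, while the off-diagonal blocks $Z_{1,2}$, $Z_{2,1}$ drop out of the trace.

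Next, applying Eq.~\ref{eq:MatrixModulus} produces $\abs{W}$ as $\rho_0$ plus a perturbation matrix whose trace involves only the two diagonal blocks, modulo an $O(\norm{\nu}^{3/2})$ remainder. For the upper-left contribution, I would use $[\sqrt{x},\vec{\lambda}^2]^{[1]}_{k,k} = 1/(2\lambda_k)$ to collapse the divided-difference-weighted trace into $\mathrm{Re}\big[\Tr[Z_{1,1}]\big]$. The dominant piece there is the cross term $\Tr\big[\Lambda_{\vec{\lambda}_+}^{1/2} L_{\sqrt{x}}(\rho_{0_+},\nu_{i,\rm B})\big]$, which simplifies to $\tfrac{1}{2}\Tr[\nu_{i,\rm B}]$ by evaluating the Fr\'echet derivative diagonals; the remaining quadratic pieces of $Z_{1,1}$ are $O(\norm{\nu}^2) \subset O(\norm{\nu}^{3/2})$. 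For the lower-right contribution, I would compute $Z_{2,2}^{\dagger}Z_{2,2} = \nu_{1,\rm D}^{1/2}\nu_{2,\rm D}\nu_{1,\rm D}^{1/2} + O(\norm{\nu}^3)$ and invoke H\"older continuity of the matrix square root on PSD matrices to conclude $\abs{Z_{2,2}} = \sqrt{\nu_{1,\rm D}^{1/2}\nu_{2,\rm D}\nu_{1,\rm D}^{1/2}} + O(\norm{\nu}^{3/2})$.

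Collecting the two contributions gives $\Tr[\abs{W}] = 1 + \tfrac{1}{2}\Tr[\nu_{1,\rm B} + \nu_{2,\rm B}] + \Tr\big[\sqrt{\nu_{1,\rm D}^{1/2}\nu_{2,\rm D}\nu_{1,\rm D}^{1/2}}\big] + O(\norm{\nu}^{3/2})$, and squaring (with the cross-quadratic $O(\norm{\nu}^2)$ terms absorbed into the $O(\norm{\nu}^{3/2})$ remainder) yields Eq.~\ref{eq:FidelitySupportExtending}. The hard part will be the careful error bookkeeping---verifying that the $3/2$-order residuals from Eq.~\ref{eq:MatrixModulus} and from the square-root continuity of $\abs{Z_{2,2}}$ remain controlling across all blocks, and correctly identifying $\abs{Z_{2,2}}$ with the PSD root $\sqrt{\nu_{1,\rm D}^{1/2}\nu_{2,\rm D}\nu_{1,\rm D}^{1/2}}$ in the statement even though $Z_{2,2}$ itself is non-Hermitian.
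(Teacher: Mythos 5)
Your proposal is correct and follows essentially the same route as the paper: rewrite the fidelity as $\Tr[\abs{\sqrt{\rho_1}\sqrt{\rho_2}}]^2$, apply Theorem~\ref{thm:QuantumDaleckii-Krein1SupportExtending} with $s=1/2$ to each square root, multiply block-wise in the eigenbasis of $\rho_0$, and then invoke the modulus expansion of Eq.~\ref{eq:MatrixModulus}; the diagonal divided differences collapse the upper-left block to $\tfrac{1}{2}\Tr[\nu_{1,\rm B}+\nu_{2,\rm B}]$ and the lower-right block to $\Tr\big[\sqrt{\sqrt{\nu_{1,\rm D}}\nu_{2,\rm D}\sqrt{\nu_{1,\rm D}}}\big]$ exactly as you describe. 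Your extra bookkeeping of the $O(\norm{\nu}^2)$ cross term $\nu_{2,\rm C}^{\dagger}\Lambda_{\vec{\lambda}_+}^{-1}\nu_{1,\rm C}$ in $Z_{2,2}$ and the H\"older-$1/2$ continuity of the PSD square root is a slightly more careful treatment of the remainder than the paper makes explicit, but the argument is the same.
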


\begin{proof}
	An equivalent definition to Eq.~\ref{eq:Fidelity} is $F(\rho_1,\rho_2)=\Tr\big[\abs{\sqrt{\rho_1}\sqrt{\rho_2}}\big]^2$. We use Theorem~\ref{thm:QuantumDaleckii-Krein1SupportExtending} twice to obtain

	\begin{equation*}
		\begin{split}
			\sqrt{\rho_1}\sqrt{\rho_2}=&\bigg(\sqrt{\rho_0}+\big[\sqrt{x},\vec{\lambda}\big]^{[1,0]}\circ
			\begin{pmatrix}
				\nu_{1,\rm B} & \nu_{1,\rm C}\\
				\nu_{1,\rm C}^{\dagger} & \sqrt{\nu_{1,\rm D}}
			\end{pmatrix}
			+ O(\norm{\nu_1}^{3/2})\bigg)\bigg(\sqrt{\rho_0}+\big[\sqrt{x},\vec{\lambda}\big]^{[1,0]}\circ
			\begin{pmatrix}
				\nu_{2,\rm B} & \nu_{2,\rm C}\\
				\nu_{2,\rm C}^{\dagger} & \sqrt{\nu_{2,\rm D}}
			\end{pmatrix}
			+ O(\norm{\nu_2}^{3/2})\bigg)\\
			=&\rho_0+\sqrt{\rho_0}\bigg(\big[\sqrt{x},\vec{\lambda}\big]^{[1,0]}\circ
			\begin{pmatrix}
				\nu_{2,\rm B} & \nu_{2,\rm C}\\
				\nu_{2,\rm C}^{\dagger} & \sqrt{\nu_{2,\rm D}}
			\end{pmatrix}\bigg)+\bigg(\big[\sqrt{x},\vec{\lambda}\big]^{[1,0]}\circ
			\begin{pmatrix}
				\nu_{1,\rm B} & \nu_{1,\rm C}\\
				\nu_{1,\rm C}^{\dagger} & \sqrt{\nu_{1,\rm D}}
			\end{pmatrix}
			\bigg)\sqrt{\rho_0}\\
			&+\bigg(\big[\sqrt{x},\vec{\lambda}\big]^{[1,0]}\circ
			\begin{pmatrix}
				\nu_{1,\rm B} & \nu_{1,\rm C}\\
				\nu_{1,\rm C}^{\dagger} & \sqrt{\nu_{1,\rm D}}
			\end{pmatrix}
			\bigg)\bigg(\big[\sqrt{x},\vec{\lambda}\big]^{[1,0]}\circ
			\begin{pmatrix}
				\nu_{2,\rm B} & \nu_{2,\rm C}\\
				\nu_{2,\rm C}^{\dagger} & \sqrt{\nu_{2,\rm D}}
			\end{pmatrix}
			\bigg)+O\big(\max(\norm{\nu_1},\norm{\nu_2})^{3/2}\big)
			\\
			=&\rho_0+
			\begin{pmatrix}
				\beta& \nu_{2, \rm C}\\
				\nu_{1, \rm C}^{\dagger}& \sqrt{\nu_{1 \rm D}}\sqrt{\nu_{2, \rm D}}
			\end{pmatrix}
			+O\big(\max(\norm{\nu_1},\norm{\nu_2})^{3/2}\big),
		\end{split}
	\end{equation*}
	where $\beta=[\sqrt{x},\vec{\lambda}]^{[1]}\circ\big(\nu_{1, \rm B}\sqrt{\Lambda_{\vec{\lambda}_+}}+\sqrt{\Lambda_{\vec{\lambda}_+}}\nu_{2, \rm B}\big)$. Using Eq.~\ref{eq:MatrixModulus}, we find
	\begin{equation*}
		\begin{split}
			\Tr\big[\abs{\sqrt{\rho_1}\sqrt{\rho_2}}\big]=&\Tr\Bigg[\abs{\rho_0}+
			\begin{pmatrix}
				[\sqrt{x},\vec{\lambda}^2]^{[1]}\circ\Big(\Lambda_{\vec{\lambda}_+}\beta+\beta^{\dagger}\Lambda_{\vec{\lambda}_+}\Big) & \nu_{2, \rm C}\\
				\nu_{2, \rm C}^{\dagger} & \abs{\sqrt{\nu_{1 \rm D}}\sqrt{\nu_{2, \rm D}}}
			\end{pmatrix}
			+O\big(\max(\norm{\nu_1},\norm{\nu_2})^{3/2}\Bigg]\\
			=&1+\Tr\Bigg[\sum_{k}\frac{1}{2\lambda_k}2\lambda_k\frac{1}{2\sqrt{\lambda_k}}\sqrt{\lambda_k}(\mel{\phi_k}{\nu_{1, \rm B}}{\phi_k}+\mel{\phi_k}{\nu_{2, \rm B}}{\phi_k})\Bigg]\\
			&+\Tr[\abs{\sqrt{\nu_{1 \rm D}}\sqrt{\nu_{2, \rm D}}}]
			+O\big(\max(\norm{\nu_1},\norm{\nu_2})^{3/2}\\
			=&1+\frac{1}{2}\Tr[\nu_{1, \rm B}+\nu_{2, \rm B}]+\Tr\bigg[\sqrt{\sqrt{\nu_{1 \rm D}}\nu_{2 \rm D}\sqrt{\nu_{1 \rm D}}}\bigg]+O\big(\max(\norm{\nu_1},\norm{\nu_2})^{3/2}.
		\end{split}
	\end{equation*}
	We arrive at Eq.~\ref{eq:FidelitySupportExtending} by using the first-order Taylor expansion $(1+x)^2=1+2x+O(x^2)$.
\end{proof}

To compute the Bures distance $d_{\rm B}^2(\rho_1,\rho_2)=2(1-\sqrt{F(\rho_1,\rho_2)})$ between two states $\rho_1$ and $\rho_2$, we evaluate 
\begin{equation}
	\begin{split}
		2(1-\sqrt{F(\rho_1,\rho_2)})=&-\Tr[\nu_{1, \rm B}+\nu_{2, \rm B}]-2\Tr\bigg[\sqrt{\sqrt{\nu_{1 \rm D}}\nu_{2 \rm D}\sqrt{\nu_{1 \rm D}}}\bigg]+O\big(\max(\norm{\nu_1},\norm{\nu_2})^{3/2}\\
		=&\Tr[\nu_{1, \rm D}+\nu_{2, \rm D}]-2\Tr\bigg[\sqrt{\sqrt{\nu_{1 \rm D}}\nu_{2 \rm D}\sqrt{\nu_{1 \rm D}}}\bigg]+O\big(\max(\norm{\nu_1},\norm{\nu_2})^{3/2}.
	\end{split}
	\label{eq:BuresMetricSupportExtending}
\end{equation}
To compute a Bures metric and derive the relationship with the quantum Fisher information \cite{Braunstein1994}, one would set $\rho_1=\rho_{\vec{X}}$ and $\rho_2=\rho_{\vec{X}+d\vec{X}}$, for which $\nu_1$ will be a matrix of all zeros on $\mathcal{H}$. If the state $\rho_{0,\vec{X}}$ is well characterized, we find a simple yet not immediately intuitive result: $d_{\rm B}^2(\rho_{0,\vec{X}},\rho_{0,\vec{X}+d\vec{X}})=\Tr[\nu_{\vec{X}+d\vec{X},\rm D}]+O(\norm{\nu_{\vec{X}+d\vec{X}}}^{3/2})$. This indicates that when a parametrized differential perturbation on a Hilbert space extends the support of a quantum state (i.e., increases its rank), the ultimate limits on the precision of an estimate of the parameter(s) governing the perturbation depend only on the quadratic rate at which probability density migrates into $\mathcal{H}_0$ from $\mathcal{H}_+$ in response to the increase in operator rank. Our perturbation theory has thus rediscovered a recent result that found that the correction term that needs to be applied in order to resolve point-like discrepancies between the QFIM and the Bures metric at locations in a state space containing discontinuities in the support of a quantum state is to take derivatives of the eigenvalues in the extended subspace \cite{Safranek2017}, in agreement with our conclusion. Our perturbation theory could be used to more accessibly investigate the relationship between the Bures metric and the QFIM, including identifying new metrics with desirable properties \cite{Zhou2019c}.

\section{Discussion}
\label{sec:Discussion}
There are other information theoretic quantities that we did not consider in this paper, including entropic quantities such as the Renyi entropy, $\alpha$-Renyi entropy, and Tsallis entropy. Similar results could be obtained for these, especially in the context of support-preserving perturbations. We do not explicitly report higher-order corrections for support-preserving perturbation theory, but such calculations can be straightforwardly inferred from Eq.~\ref{eq:Daleckii-Krein-Full}. Certain properties of matrix calculus and perturbation theory are simplified if the perturbation changes the zeroth-order state in a linear fashion, i.e., $\nu=\epsilon \tilde{\nu}$ for small $\epsilon$. Many applications can be reduced to this special case of our work. 

We expect our results to find use in numerical modeling, where simplified expressions and the ability to avoid extra matrix diagonalizations are high priorities. It may prove useful in modeling effects of non-idealities in a quantum system, e.g., a quantum circuit with noisy constituent gates, and in computing differential effects of environmental processes. Our results may find use in computations in continuous-variable quantum information processing involving low-photon-number bosonic states, state tomography, and quantum metrology including quantum limits of sub-diffraction imaging. Through the Choi-Jamiolkowski isomorphism between channels and states, it may be possible to extend our formalism to evaluating effects of small perturbations on quantum channels, e.g., the diamond norm for perturbed quantum channels. Another related direction could be Gaussian quantum information theory \cite{Weedbrook2012}, where matrix perturbation theory could be used to find analytic lowest-order expansions for composite quantities that depend on primary matrix functions of covariance matrices of quantum states. Finally, some speculative uses of this formalism may lie in the split-step evolution of open quantum systems, and in proving important open additivity and extremality conjectures in quantum information theory, such as the entropy photon-number inequality---the quantum version of the entropy-power inequality~\cite{Guha2008-fw}.

\acknowledgements

The authors thank Marcus Carlsson and Kaushik Seshadreesan for valuable discussions. This research was supported in part by the DARPA IAMBIC Program under contract number HR00112090128, and NSF-ERC Center for Quantum Networks awarded under grant number 1941583. The views, opinions and/or findings expressed are those of the author and should not be interpreted as representing the official views or policies of the Department of Defense or the U.S. Government.

\bibliography{Quantum_Hypothesis_Testing-Sub-Diffraction_Quantum_Hypothesis_Testing}
\end{document}